\documentclass{amsart}

\usepackage{stmaryrd, mathrsfs, enumerate, hyperref}

\title[Rosen universes]{Sachs equations and plane waves, I:\\ Rosen universes}
\author{Jonathan Holland}
\address{Compunetix\\
  2420 Mosside Blvd \# 1\\
  Monroeville, PA 15146}
\author{George Sparling}
\address{University of Pittsburgh\\
  Department of Mathematics\\
  301 Thackeray Hall\\ Pittsburgh, PA 15260
}
\newtheorem{theorem}{Theorem}
\newtheorem{lemma}{Lemma}
\newtheorem{corollary}{Corollary}
\newtheorem{definition}{Definition}

\newcommand{\op}{\operatorname}
\setcounter{tocdepth}{1}
\newcommand{\tr}{\operatorname{tr}}

\begin{document}

%\nocite{*}

\begin{abstract}
This article, the first in a series,  analyzes the general theory of plane wave spacetimes.  Following Dmitri Aleekseevsky, these are defined as spacetimes admitting a group of dilations leaving invariant a smooth curve. If this curve is specified as part of the structure, the spacetime is termed a Penrose limit, whose theory was developed first by Roger Penrose.  The main result is that every plane wave is a Rosen universe, a generalization of the smooth metrics of Albert Einstein and Nathan Rosen, allowing for certain isolated co-ordinate singularities; the latter are characterized.  We conclude with an extended example, using the techniques developed in the article to associate a vacuum plane wave in four dimensions to any hyperbolic billiard trajectory.
\end{abstract}
\maketitle

\tableofcontents

\section{Introduction}
Since the famous address of Hermann Minkowski \cite{minkowski}, it has been apparent that understanding the light cones of spacetime is key.  In the language of Minkowski, Marcel Grossmann and Albert Einstein, spacetime is a four-dimensional manifold equipped with a Lorentzian metric, such that the light cones are formed by the null geodesics of that metric.  In principle each such null geodesic zips through the entirety of spacetime in an instant, as does any particle that moves along the geodesic, such as (it is believed) a photon, a graviton, or,  perhaps, a neutrino.  

One aim of the twistor theory  of Roy Kerr \cite{kerr1963gravitational}, Ezra Newman, Roger Penrose and Rainer Sachs (a general reference is \cite{penrose1984spinors}, \cite{penrose1986spinors}) is to analyze spacetime in terms of its space of null geodesics: they are regarded  as primary entities, thinking of the points of spacetime as being derived.  For example, in dimension four,  it is standard that the null geodesics of conformally compactified Minkowski spacetime naturally form a five-dimensional Cauchy-Riemann manifold with indefinite Levi-form, called the null projective twistor space,  the (compact) flat model in the sense of Shiing-Shen Chern and J\"{u}rgen Moser,  such that each point of spacetime is represented by an holomorphic Riemann sphere lying inside the Cauchy--Riemann manifold.   Then a fundamental problem is to understand the twistor spaces of general spacetimes (so the space of unparametrized null geodesics) and their relation to the points of spacetime; this problem has dominated the thoughts of the second author for some sixty years, since being introduced to the area by Hermann Bondi.

In a spacetime of dimension $n + 2 \ge 3$ the space of null geodesics is naturally a contact manifold of dimension $2n + 1$, so possesses a distinguished $2n$-dimensional  subspace of the tangent space at each point, such that this space possesses a symplectic structure defined up to scale.   In particular the space of Lagrangian subspaces (so of maximal dimension $n$, annihilating the symplectic form) is canonical, of dimension $n(n + 1)$ at each point.   Lifting up to the spacetime gives a $2n$-dimensional collection of null geodesics infinitesimally near a given null geodesic, called abreast and within that collection its family of $n$-dimensional Lagrangian subspaces.   The evolution of these subspaces as one moves along a given null geodesic is described by the  Sachs equations, a first-order conformally invariant non-linear system.  In turn the properties of the Sachs equations form the subject of the present series. 

Penrose \cite{penrose1976any} brilliantly showed that one could encode the information of the neighborhood of a given null geodesic in terms of a simple kind of spacetime, a plane wave.   These are now called Penrose limits.  So in for a four dimensional spacetime, there is a five parameter set of such Penrose limits, one for each null geodesic.  We think of these as {\em osculating} the given spacetime to second order around each given null geodesic.   In turn the Penrose limits encode the Sachs equations.

Here we take the Sachs equations and the Penrose limits as fundamental building blocks.   For us it turns out to be important to develop the theory in arbitrary dimension.  In particular, this permits us to contemplate the theory in infinite dimension, which as we show, will allow us to perform a kind of Fourier analysis for the spacetime.   Thus there are three levels of theory: finite dimensions, which allows only a finite number of ``Fourier'' components, countable dimension, which is appropriate for periodic or almost periodic spacetimes and uncountable dimension, appropriate for completely general spacetimes.  The main idea here is to express the information of the Penrose limits in terms of homogeneous plane waves in higher dimensions; these we term {\em microcosms}.   We think the essence of the theory has crystallized and is ready for presentation.

In the case of conformally flat four-dimensional spacetime, complex numbers are critical in the twistor theory, the spacetime points forming Riemann spheres, as mentioned briefly above.  A priori it is not clear that complex  numbers are needed for conformally curved spacetimes, or in higher dimension.   For microcosms,  however, {\em provided they have non-negative energy}, it is a striking fact that complex numbers are essential to their understanding. Thus we are claiming a ``meta-theorem''  that complex numbers play a vital role for general physical spacetimes,  as has long been conjectured to be the case in the twistor community.

In the present work, we cover the general theory of plane waves, placing the Sachs equation at the fore, proving all results rigorously and from scratch.  Excellent general prior references are   \cite{blau2011plane} and \cite{blau2003homogeneous}, which especially emphasize the importance of plane wave spacetimes in the context of string theory.  An innovation is the definition of a family of spacetimes that we call Rosen universes and a theorem  characterizing this family.   Three recent works emphasize the current value of this study, first,  work of the authors,  Aristotelis Panagiotopoulos and Marios Christodoulou \cite{panagiotopoulos2023incompleteness}, using plane waves to show that spacetime theory has a certain minimal level of complexity, related to the shifts of Bernouilli, second the wonderful work of Maciej Dunajski and Penrose \cite{DunajskiPenrose}, which employs plane wave spacetimes in five dimensions to analyze quantum Newton-Hooke theory, and third a remarkable discussion of Maxwell theory due to   Carlos García-Meca, Andrés Macho Ortiz, and Roberto Llorente Sáez \cite{garcia2020supersymmetry} wherein supersymmetry is encoded as time reversal.

When properly understood, the Sachs equation is the ordinary differential equation
\begin{equation}\label{SachsIntro}
  S'(u) + S(u)^2 + p(u) = 0
\end{equation}
where $S$ is an unknown $n\times n$ real matrix, depending differentiably on a real $u$, and $p(u)$ is a given real symmetric $n\times n$ matrix, depending smoothly on $u$.  In the applications of interest, $S$ is the expansion tensor of a congruence of null geodesics.  Our signs are such that $p>0$ corresponds to the case of positive energy in spacetime.  A solution of the Sachs equation governs the focusing of a congruence of null geodesics, and is the key ingredient in the (optical) Raychaudhuri \cite{raychaudhuri1955relativistic} equation in four dimensions.  The skew part $2\omega = S-S^T$ is interpreted as the vorticity of a congruence of null geodesics, the trace $\theta = \tr S$ is the expansion of the congruence, and $2\sigma = S + S^T - 2n^{-1}\theta$ is the shear.  The Raychaudhuri equation is the trace of \eqref{SachsIntro}:
$$\theta' + n^{-1}\theta^2 + \tr(\sigma^2) + \tr(\omega^2) + \tr p = 0.$$
Since the congruences in the present work are non-rotating, the vorticity vanishes $(\omega=0)$, and the expansion tensor $S$ is symmetric.

Even in the scalar case, when $n=1$, some of the general character of \eqref{SachsIntro} can be intuited.  Suppose that $n=1$ and $p$ is constant.  The three main cases of constant $p$ are:
\begin{itemize}
\item $p=-1$: $\dot S + S^2-1=0$, $S=\tanh (u+c)$, 
\item $p=0$: $\dot S + S^2=0$, $S=(u+c)^{-1}$, and
\item $p=1$: $\dot S + S^2+1=0$, $S=\cot (u+c)$.  
\end{itemize}
Thus in vacuum $(p=0)$, the solution blows up at one endpoint of the domain.  When the energy is positive, the solution blows up at both endpoints of the domain.  When the energy is negative, the solution exists for all $u$.  A point where $S$ blows up is associated to the formation of caustics in the geodesic congruence, and positivity of the energy leads to the formation of caustics.  The focusing effects of congruences in plane waves appears to have been first appreciated in \cite{baldwin1926relativity}.

This paper proves two results that we claim as new.  First, we give a characterization of Rosen universes (Theorem \ref{AlekseevskyTheoremNew} and Theorem \ref{RosenUniverse}).  Second, we give a construction of a vacuum spacetime associated to any hyperbolic trajectory composed of geodesic arcs and horocycles, which we term a {\em hyperbolic drawing} (Theorem \ref{HyperbolicDrawingTheorem}). Also, we give a new proof of the fundamental theorem of Alekseevsky \cite{Alekseevsky} that also provides insight into the universality of Penrose limits \cite{blau2004universality}.

\section{Plane waves}\label{PlaneWavesSection}

We work with smooth real manifolds, which are paracompact, usually connected and simply connected, and usually of dimension at least three.  
\begin{definition}
  A {\em spacetime} is a pair $(\mathbb{M}, G)$, with $\mathbb{M}$ a simply connected smooth manifold and $G$ a smooth Lorentzian metric on $\mathbb{M}$.
\end{definition}
We write the dimension of the manifold $\mathbb{M}$ as $n + 2$, where $n$ is a non-negative integer and we take the metric $G$ to have signature $(1, n + 1)$. 
\begin{definition}
  A {\em dilation group} for a spacetime $(\mathbb{M}, G)$ is a one-parameter group,  $\mathcal{D}_t$, of diffeomorphisms, acting properly on $\mathbb{M}$, such that $\mathcal D_t^*G=e^{2t}G$. 
\end{definition}
A dilation group is thus generated by a smooth vector field $D$ such that $\mathcal{L}_D G = 2G$. Given such a dilation group, $\mathcal{D}$, denote by $\gamma_{\mathcal{D}}$ the fixed point set of $\mathcal{D}$, so the set of all points of $\mathbb{M}$ where the vector field $D$ vanishes. A dilation group is called {\em complete} if the limit $\lim_{t\to\infty}\mathcal D_t^{-1}x$ exists for all $x\in M$.  Each such limit is a fixed point of the dilation group.
\begin{definition}
    A spacetime $(\mathbb{M}, G)$ is called a {\em plane wave} if it is acted upon by some complete dilation group $\mathcal{D}$, such that its fixed point set $\gamma_{\mathcal{D}}$ is a smooth curve.
  \end{definition}
  Note that neither the particular group $\mathcal D$,  nor the fixed point set, is regarded as part of the structure.  Evary plane wave admits at least a $(2n+1)$-dimensional family of compatible dilations (Corollary \ref{ManyDilations}).  When we wish to regard the dilation as part of the structure, we make the definition:
\begin{definition}
    A {\em Penrose limit} is a triple $(\mathbb{M}, G, \mathcal{D})$ having the structure of a plane wave with (specified) dilation group $\mathcal D$.
      \end{definition}
 \begin{definition}  An {\em isomorphism of Penrose limits} is a smooth isometry that preserves $\mathcal D$.
  \end{definition}
  The extra structure of a particular dilation is sometimes of use to analyze the relevant parts of the symmetry group, by taming various degenerate kinds of spaces (such as flat space), that can arise.
\begin{lemma}
The curve fixed by a dilation group of a plane wave spacetime is a null geodesic.
\end{lemma}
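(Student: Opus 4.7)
The plan is to pick a point $p \in \gamma_{\mathcal D}$, analyze the linearization of $\mathcal D_t$ at $p$ to show that the tangent line to $\gamma_{\mathcal D}$ is null, and then exploit the fact that $\mathcal D_t$ preserves the Levi--Civita connection to promote ``null curve'' to ``null geodesic''.

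At the fixed point $p$ the generator $D$ vanishes, so the differential of the flow is $(d\mathcal D_t)_p = e^{tB}$ for the endomorphism $B = (\nabla D)_p$ of $T_p\mathbb M$. The homothety condition $\mathcal L_D G = 2G$, evaluated at $p$ using $\nabla G = 0$, becomes $\nabla_i D_j + \nabla_j D_i = 2G_{ij}$; equivalently the bilinear form $b(v,w) := G(Bv,w)$ has symmetric part $G$, i.e.\
\[
b(v,w) + b(w,v) = 2G(v,w) \qquad \text{for all } v,w \in T_p\mathbb M.
\]

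Next I would analyze $\ker B$. Since $\mathcal D_t$ fixes $\gamma_{\mathcal D}$ pointwise, the tangent vector $T = \dot\gamma(0)$ is fixed by $e^{tB}$ for all $t$, so $BT = 0$. For any two vectors $v,w \in \ker B$ the displayed identity collapses to $0 = 2G(v,w)$, so $\ker B$ is totally $G$-null. In Lorentzian signature $(1,n+1)$ any totally null subspace has dimension at most one, hence $\ker B = \mathbb R T$ and $G(T,T) = 0$: the curve $\gamma_{\mathcal D}$ is null.

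To upgrade from null curve to null geodesic I would observe that a constant conformal rescaling leaves the Levi--Civita connection unchanged, so $\mathcal D_t^* G = e^{2t}G$ already implies $\mathcal D_t^*\nabla = \nabla$; that is, $\mathcal D_t$ is affine. Since $\mathcal D_t|_{\gamma_{\mathcal D}} = \operatorname{id}$, we have $(d\mathcal D_t)_p T = T$ at every $p \in \gamma_{\mathcal D}$, and affineness forces $(d\mathcal D_t)_p(\nabla_T T|_p) = \nabla_T T|_p$. Hence $\nabla_T T|_p \in \ker B = \mathbb R T$, so $\nabla_T T$ is pointwise proportional to $T$ all along $\gamma_{\mathcal D}$; the curve is therefore a pre-geodesic and, after reparametrization, an honest null geodesic. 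The main obstacle I anticipate is the identification $\ker B = \mathbb R T$, whose success rests on pushing the homothety condition through to the statement that the $G$-symmetric part of $B$ is $G$ itself; once this pairs kernel vectors to zero and Lorentzian signature caps the dimension at one, the geodesic conclusion is a formal consequence of affineness.
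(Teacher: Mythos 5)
Your proof is correct, but it takes a different route from the paper's. The paper argues directly with scaling of norms: since the curve is fixed pointwise, its tangent is invariant under $d\mathcal D_t$ while its norm must scale by $e^{2t}$, hence $G(T,T)=0$; then the acceleration is likewise null (same invariance-plus-scaling) and orthogonal to the null tangent (because $G(T,T)\equiv 0$ along the curve), so Lorentzian signature forces $\nabla_T T\propto T$. You instead pass through the linearization $B=(\nabla D)_p$ at a fixed point, use $\mathcal L_DG=2G$ to show the $G$-symmetrization of $B$ is $G$, conclude that $\ker B$ is totally isotropic and hence equals $\mathbb R T$, and then use affineness of the homothety to place $\nabla_T T$ in $\ker B$. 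Both arguments ultimately invoke the same Lorentzian linear algebra (a totally null subspace is at most a line, equivalently two orthogonal null vectors are proportional), and both use, explicitly or implicitly, that a homothety preserves the Levi--Civita connection. The paper's version is shorter; yours costs a little more setup but buys the sharper structural statement that $\ker(\nabla D)_p$ is exactly the tangent line to $\gamma_{\mathcal D}$, which is consonant with the fixed set being precisely a curve and with the spectral behaviour of $D$ transverse to $\gamma$ exploited later in the paper's unstable-manifold discussion. The only hypotheses you should flag are the tacit ones that the fixed curve is a regular (immersed) curve, so $T\neq 0$, and that ``geodesic'' is meant up to reparametrization --- both of which the paper also assumes silently.
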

\begin{proof}  First, the fixed curve is null because the norm of its tangent vector must be fixed by the dilation and hence is zero.  It is a geodesic because its acceleration vector must be both null, and orthogonal to the (null) tangent vector.  Because of the Lorentzian signature of the metric, this implies that the acceleration is proportional to the tangent vector.
\end{proof}
 \begin{definition}
  The   {\em wave fronts} for a Penrose limit spacetime are the unstable manifolds of the dilations.
  \end{definition}
By ``unstable manifold'' we here mean the following special case of the general concept in dynamics \cite{Ruelle}.  For the dilations considered here, the following definition suffices: an unstable manifold is a maximal invariant manifold of the group that contains a single fixed point.  For the dilation group, the fixed point lies on the central null geodesic.  The manifold is ``unstable'' in the sense that the dilation expands from the fixed point instead of contracting.  This definition shall be clarified by examples presently.

\subsection{The variety $\mathcal{A}$ of plane wave metrics}
\begin{definition} An {\em Euclidean vector space} is a  pair $(\mathbb{X}, T)$, consisting of a real vector space $\mathbb{X}$, of finite positive dimension, with dual space $\mathbb{X}^*$ and a linear isomorphism $T: \mathbb{X} \rightarrow \mathbb{X}^*$, such that 
\begin{itemize}\item $T(x)(y) = T(y)(x)$, for every $x$ and $y$ in $\mathbb{X}$, 
\item $T(x)(x) > 0$ for any non-zero $x \in \mathbb{X}$.
\end{itemize}
\end{definition}
We extend $T$ to an (anti)-automorphism of the tensor algebra of $\mathbb{X}$ and $\mathbb{X}^*$,  such that $T^2$ is the identity (here we have $T(\alpha \otimes \beta) = T(\beta) \otimes T(\alpha)$, for any tensors $\alpha$ and $\beta$).  For a tensor $\alpha$ of the tensor algebra, we abbreviate $T(\alpha)$ by $\alpha^T$ and we call $\alpha^T$ the transpose of $\alpha$.  
\begin{definition} An endomorphism $P$ of an Euclidean vector space $(\mathbb{X}, T)$ is  said to be {\em symmetric} if and only if $P = P^T$.  An endomorphism $P$ of an Euclidean vector space $(\mathbb{X}, T)$ is  said to be {\em skew} if and only if $P = - P^T$.
\end{definition} 
Each endomorphism $P$ of an Euclidean vector space $(\mathbb{X}, T)$  decomposes uniquely as $P = P_\odot + P_{\wedge}$, where the endomorphism  $P_\odot  = 2^{-1}(P + P^T)$ is symmetric and  the endomorphism $P_{\wedge} = 2^{-1}(P - P^T)$ is skew.

An umbrella for the plane wave metrics is denoted $\mathcal{A}$,  a collection of Lorentzian metrics,  which we now describe.  They constitute a slight generalization of the metrics of Dmitri Alekseevsky.

The spacetime manifold $\mathbb{M}$ is a product,  $\mathbb{M} = \mathbb{U} \times \mathbb{R} \times \mathbb{X}$, with $(\mathbb{X}, T)$ a fixed Euclidean real vector space and $\mathbb{U}$ an open real interval.

A point $X$ of $\mathbb{M}$ is denoted $X \leftrightarrow (u, v, x)$, with $u \in \mathbb{U}$, $v \in \mathbb{R}$ and $x \in \mathbb{X}$.  The co-ordinate differential is then $dX \leftrightarrow (du, dv, dx)$.    Dually, we have the co-ordinate derivative operators $\partial_X \leftrightarrow (\partial_u, \partial_v, \partial_x)$. So $dx$ takes values in the vector space $\mathbb{X}$, whereas $\partial_x$ takes values in the dual vector space $\mathbb{X}^*$.  

In the following the metric is described in terms of functions on $\mathbb{U}$,  taking values either in the reals or in the endomorphisms of $\mathbb{X}$.   Henceforth it will be understood that all such functions are globally defined on $\mathbb{U}$ and smooth, unless explicitly stated otherwise.
\begin{definition} The metrics of the class $\mathcal A$ are the tensors $G_\alpha(h)$ for $\mathbb{M} = \mathbb{U}\times \mathbb{R} \times \mathbb{X}$ defined by
  \begin{align}
 G_{\alpha}(h) &= \alpha du - dx^T h(u) dx, \\
    \notag \alpha &=  2dv - b(u)v du + x^T p(u) x du - x^T q(u) dx.
  \end{align}
Each of the quantities $h, b, p$ and $q$ is smoothly defined on $\mathbb{U}$.  Also,  for each $u\in \mathbb{U}$, $b(u)$ is real-valued, whereas each of $h(u)$, $p(u)$ and $q(u)$ is an endomorphism of $\mathbb{X}$, with $h(u)$  symmetric and positive definite.  
\end{definition}  
\begin{definition} The {\em standard dilation group} for a metric $G_{\alpha}(h)$ of the collection $\mathcal{A}$, which renders the spacetime a plane wave, is the group of diffeomorphisms of $\mathbb{M}$ given by the formula, valid for any real $t$ and any $X = (u,  v, x)\in \mathbb{M}$:
\[ \mathcal{D}_t(u, v, x) \rightarrow  \left(u, e^{2t}v, e^t x\right).\]
The generating vector field is $D = 2v\partial_v + x\partial_x$. Then the null geodesic $\gamma_{\mathcal{D}}$ is the curve $v = 0, x = 0$, with $u \in \mathbb{U}$ arbitrary.
\end{definition}
For the study of Penrose limits, we will always use the standard dilation group and we call the curve $\gamma_{\mathcal{D}}$ the central null geodesic.  The wave fronts, which are the unstable manifolds of $\mathcal D$, are the $u=$constant hypersurfaces: the group $\mathcal D$ acts tangentially to the wave fronts, and they are the maximal invariant manifolds for $\mathcal D$ that contain a single fixed point (the point $\gamma(u)$).

For later convenience, in the formula for $G_{\alpha}(h)$ just given, we do not impose any symmetry conditions on the endomorphisms $p$ and $q$, even though after an appropriate diffeomorphism, we may assume without loss of generality that $p$ is symmetric and $q$ skew, as we now show.  In the following discussion,  all diffeomorphisms will preserve the standard dilation group. 
\begin{lemma}Up to an isometry preserving the dilations and the central null geodesic, we may assume that the endomorphisms $p$ and $q$ of the metric $G_\alpha(h)$ are respectively symmetric and skew.
\end{lemma}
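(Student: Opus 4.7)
The plan is to dispose of the two unwanted pieces separately: the skew part of $p$ is invisible in the metric already, and the symmetric part of $q$ can be absorbed into a redefinition of the $v$-coordinate.

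First, write $p = p_\odot + p_\wedge$ and $q = q_\odot + q_\wedge$ using the canonical decomposition into symmetric and skew parts. Since $p_\wedge$ is skew, the scalar $x^T p_\wedge(u) x$ vanishes identically, so $x^T p(u) x = x^T p_\odot(u) x$. Thus without altering the metric at all, we may simply replace $p$ by $p_\odot$.

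For the $q$-term, I would introduce the new coordinate
\[
\tilde v = v - \tfrac{1}{4}\, x^T q_\odot(u)\, x,
\]
keeping $u$ and $x$ unchanged. This is a diffeomorphism of $\mathbb{M}$, it preserves the central null geodesic $\{v=0,\,x=0\}$, and since $\tilde v$ is quadratic-homogeneous in $(v,x)$ of weight $2$, it transforms under $\mathcal{D}_t$ exactly as $v$ does, so the standard dilation group is preserved. Computing $d\tilde v$ and using the symmetry of $q_\odot$ to rewrite $x^T q_\odot\, dx = \tfrac{1}{2}\,d(x^T q_\odot x) - \tfrac{1}{2}\, x^T q_\odot'(u) x\, du$, one finds that the term $-x^T q_\odot\, dx$ in $\alpha$ collapses into a combination of $2\,d\tilde v$, a scalar multiple of $x^T q_\odot\, x\, du$, and a modification of the already-symmetric coefficient of $du$. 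Consequently $\alpha$ takes the same shape as before, but with $v$ replaced by $\tilde v$, with $q$ replaced by $q_\wedge$ (skew), and with $p_\odot$ replaced by a new symmetric endomorphism
\[
\tilde p \;=\; p_\odot + \tfrac{1}{2}\, q_\odot' - \tfrac{b}{4}\, q_\odot,
\]
while $b$ and $h$ are unchanged.

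The bookkeeping is the only real obstacle: one must make sure the $bv\,du$ term is correctly re-expressed in the new coordinate via $v = \tilde v + \tfrac14 x^T q_\odot x$, and one must verify that every piece produced by the substitution either cancels, is absorbed into $d\tilde v$, or contributes to a symmetric endomorphism coefficient of $du$ (which is automatic, since $x^T(\cdot)x$ only sees the symmetric part). Once this is checked, the new metric is in the form $G_{\tilde\alpha}(h)$ with symmetric $\tilde p$ and skew $\tilde q = q_\wedge$, completing the reduction.
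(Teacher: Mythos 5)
Your proof is correct and takes essentially the same route as the paper: discard the skew part of $p$, which is invisible in $x^Tpx$, and absorb the symmetric part of $q$ by the shift $v\mapsto v-\tfrac14 x^Tq_\odot x$, which is homogeneous of weight two and hence compatible with the dilation and the central null geodesic. Your bookkeeping, yielding $\tilde p = p_\odot + \tfrac12 q_\odot' - \tfrac{b}{4}q_\odot$, is in fact the accurate version of the paper's displayed formulas (which carry a stray prime in the definition of $V$ and in $P$, and drop the $\tfrac12 q_\odot'$ contribution), so if anything your computation is the cleaner record of the same argument.
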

\begin{proof}
Suppressing the $u$-dependence,  we can re-express the one-form $\alpha$ as follows:
\begin{align*}
 \alpha &=  2dv - bv du + x^T p x du - x^T q dx \\
 &= 2dv - bv du + x^T p_\odot x du - x^T q_\wedge dx -  2^{-1} d(x^T q_\odot x) + 2^{-1} x^T q'_\odot x du.\\
\end{align*}
Here and in the following a prime denotes the derivative with respect to $u$.   Now put $V = v - 4^{-1}  x^T q'_\odot x$.  Notice that this formula is compatible with the dilation action, as required.  Then we have:
\begin{align*}
 \alpha &=  2dV - b\left(V + 4^{-1}  x^T q'_\odot x\right)  du + x^T p_\odot x du - x^T q_\wedge dx  + 2^{-1} x^T q'_\odot x du, \\
 \alpha &= 2dV - bV du + x^TP x du - x^T Q dx, \\
\end{align*}
  \[Q = q_{\wedge}, \quad P = p_\odot - 4^{-1}b  q'_\odot.\]
Note that $Q$ is skew, whereas $P$ is symmetric and the lemma is proved.
\end{proof}
Dropping capital letters, we now have a standard form for a metric of type $\mathcal{A}$:
\begin{align*}
  G_{\alpha}(h) &= \alpha du - dx^T h(u) dx, \\
   \alpha &=  2dv - b(u)v du + x^T p(u) x du - x^T q(u) dx,
\end{align*}
\[p = p^T, \quad q = - q^T.\]
We define three sub-collections of the collection $\mathcal{A}$.
\begin{definition} The {\em Alekseevsky metrics},  giving a sub-collection $\mathcal{A}_\alpha$, are the special case of the collection $\mathcal{A}$, with $h(u) = I$, for each $u\in \mathbb{U}$, where $I$ denotes the identity endomorphism of $\mathbb{X}$:
  \begin{align}\label{Alekseevsky1} 
    G_{\alpha} &= \alpha du - dx^T dx, \\
    \notag \alpha &=  2dv - b(u)v du + x^T p(u) x du - x^T q(u) dx.
  \end{align}
Here $p$ is symmetric and $q$ is skew.
\end{definition}
\begin{definition} The {\em Brinkmann metrics},  giving a sub-collection $\mathcal{A}_\beta$.  These have $b = 0$, $q = 0$ and $h$ the identity:
  \begin{equation}\label{Brinkmann1}
    G_\beta(p)  = 2\,du\,dv + x^Tp(u) x \,du^2 - dx^T dx.
  \end{equation}
Here  $p(u)$ is symmetric and $p$ is smooth on $\mathbb{U}$.
\end{definition}
\begin{definition}   The {\em regular Rosen metrics},  giving a sub-collection $\mathcal{A}_\rho$.  These have $b = 0$, $p = 0$ and $q = 0$, so the metric reads:
  \begin{equation}\label{Rosen1}
    G_\rho(h)  = 2\,du\,dv  - dx^T h(u) dx.
  \end{equation}
These metrics were first introduced in by Albert Einstein and Nathan Rosen \cite{einstein1937gravitational}. Here, for each $u \in \mathbb{U}$, the endomorphism $h(u)$ is symmetric and positive definite and $h$ is smooth on $\mathbb{U}$.
\end{definition}

\begin{lemma}
  Each metric $G_\alpha(h), G_\alpha, G_\beta(p), G_\rho(h)$, equipped with the standard dilation group is a plane wave.
\end{lemma}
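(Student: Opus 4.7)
The argument handles $G_\alpha(h)$ in full generality; since $G_\alpha$, $G_\beta(p)$ and $G_\rho(h)$ are obtained from $G_\alpha(h)$ by specializing $h$, $b$, $p$, $q$, these cases require no separate verification. The plan is to check in turn: smoothness and properness of the standard $\mathbb{R}$-action, the conformal equivariance $\mathcal{D}_t^* G = e^{2t} G$, that the fixed-point set is a smooth curve, and completeness $\lim_{t\to\infty}\mathcal{D}_t^{-1} X \in \gamma_\mathcal{D}$. Smoothness of $\mathcal{D}_t(u,v,x) = (u, e^{2t}v, e^t x)$ is immediate from the formula, and away from the $u$-axis the orbits are smooth injective curves with trivial isotropy, so the action is proper on $\mathbb{M}\setminus\gamma_\mathcal{D}$.

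For conformal equivariance, the key observation is that $\alpha$ is precisely the most general weight-$2$ polynomial in $v, x, dv, dx$ with coefficients depending only on the invariant variable $u$. Under $\mathcal{D}_t$, every function of $u$ alone pulls back to itself, while $v$ has weight $2$ and $x$ has weight $1$. Reading off weights: $dv$ and $v\,du$ have weight $2$; $x^T p(u) x\,du$ has weight $2$ from two factors of $x$; $x^T q(u)\,dx$ has weight $1+1=2$; and $dx^T h(u)\,dx$ has weight $2$. Hence $\mathcal{D}_t^*\alpha = e^{2t}\alpha$ and $\mathcal{D}_t^*(\alpha\,du - dx^T h(u)\,dx) = e^{2t}(\alpha\,du - dx^T h(u)\,dx)$. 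Equivalently, at the infinitesimal level, one verifies $\mathcal{L}_D G_\alpha(h) = 2 G_\alpha(h)$ directly from $D = 2v\partial_v + x\partial_x$.

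The fixed-point equations $e^{2t}v = v$ and $e^t x = x$ for all $t$ force $v = 0$ and $x = 0$, so $\gamma_\mathcal{D} = \mathbb{U}\times\{0\}\times\{0\}$ is a smooth curve diffeomorphic to the interval $\mathbb{U}$. Completeness follows because $\mathcal{D}_t^{-1}(u,v,x) = (u, e^{-2t}v, e^{-t}x) \to (u,0,0) \in \gamma_\mathcal{D}$ as $t \to \infty$. None of these steps presents a real obstacle; the mildest subtlety is the proper-action condition at the fixed points themselves, where the isotropy is $\mathbb{R}$ rather than compact, and is handled by noting that the orbit through such a point is a singleton. The proof is essentially book-keeping organized by the weight grading of the coordinates under $\mathcal{D}_t$, and the form of $\alpha$ is engineered so that this grading automatically yields conformal weight $2$.
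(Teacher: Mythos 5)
Your verification is correct and takes essentially the same route as the paper, which simply asserts that the lemma follows immediately from the definitions: your weight count under $\mathcal{D}_t(u,v,x)=(u,e^{2t}v,e^{t}x)$ (equivalently $\mathcal{L}_D G = 2G$ for $D=2v\partial_v+x\partial_x$), the identification $\gamma_{\mathcal{D}}=\{v=0,\,x=0\}$, and the limit $\mathcal{D}_t^{-1}(u,v,x)\to(u,0,0)$ are exactly the checks being taken for granted there. Your aside on properness is the one place the textbook notion of a proper action would technically fail at the fixed points (isotropy $\mathbb{R}$ is non-compact), but since the paper's definition of a dilation group is explicitly designed to admit a curve of fixed points, the intended weaker reading makes your treatment adequate.
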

This lemma follows immediately from the definitions.

Lemma \ref{EveryBrinkmannRosen} below shows that, given a Brinkmann metric $G_\beta(p)$, there is a covering of $\mathbb U$ by open intervals $\mathbb U_i$, Rosen metric $G_\rho(h_i)$ on $\mathbb M_i=\mathbb U_i\times\mathbb R\times\mathbb X$, and smooth functions $\phi_i:\mathbb M_i\to\mathbb M$ preserving the family of wave fronts that are isometries onto their image of $G_\rho(h_i)$ to $G_\beta(p)$.  The local nature of this construction can be made global if we allow the symmetric form $h_i$ to develop isolated singularities.

\begin{definition}
  A {\em singular Rosen metric} is a symmetric tensor of the form \eqref{Rosen1}, where the smooth symmetric tensor $h(u)$ is positive semidefinite, and the set of singular points $\mathbb S\subset\mathbb U$ has no limit point in $\mathbb U$.
\end{definition}
Note that the singular set $\mathbb{S}$ here, the closed subset of $\mathbb{U}$ where $\det(h(u))$ vanishes, may be empty, finite or countably infinite. The singular Rosen plane wave $\mathbb M$ is equipped with the standard group of dilations, $\mathcal{D}$, as for the regular case.  Note that if all points of $\mathbb{M}$ with $u \in \mathbb{S}$ are deleted, this splits up the manifold $\mathbb M$ into a countable collection of {\em regular} Rosen plane waves, which we denote by $\mathbb M-\mathbb S := (\mathbb U-\mathbb S)\times\mathbb R\times\mathbb X$.

One way a singular Rosen metric arises in practice is illustrated by the following example.  Suppose $n=1$, and the spacetime is $\mathbb M_\beta = \mathbb R^3$ equipped with a Brinkmann metric
$$G_\beta = G_\beta(1) = 2\,du\,dv + x^2\,du^2 - dx^2.$$
Define a new coordinate system, as follows.  Let $\mathbb M_\rho=\mathbb R^3$, equipped with the singular Rosen metric
$$G_\rho = G_\rho(\sin u) = 2\,du\,dv - \sin^2\!u\,dx^2.$$
The new coordinates are defined by means of the smooth function $\phi:\mathbb M_\rho \to \mathbb M_\beta$, where
$$\phi(u,v,x) = (u, v + 2^{-1}x^2\cos u\sin u, x\sin u).$$
The set of singular points of the metric $G_\rho$ is the integer multiples of $\pi$: $\mathbb S=\mathbb Z\pi$.  Also, the function $\phi$ satisfies the following properties
\begin{itemize}
\item $\phi^*G_\beta = G_\rho$;
\item $\phi$ commutes with the standard dilation: $\phi\circ\mathcal D_t = \mathcal D_t\circ\phi$; and
\item $\phi$ is a one-to-one isometric diffeomorphism of $\mathbb M_\rho - \mathbb S$ onto its image, which is a dense open set in $\mathbb M_\beta$.
\end{itemize}

\begin{definition}\label{RosenUniverseDefinition}
  A {\em Rosen universe} is a Penrose limit $(M,G,\mathcal D)$, together with a singular Rosen plane wave $(\mathbb M_\rho,G_\rho,\mathcal D_\rho)$, and a smooth function $\phi:\mathbb M_\rho \to M$, such that:
  \begin{itemize}
  \item $\phi$ maps $\mathbb M_\rho-\mathbb S$ isometrically, diffeomorphically, and one-to-one onto a dense open subset of $M$; and
  \item $\phi$ is compatible with the dilation, i.e., $\phi\circ \mathcal D_\rho = \mathcal D\circ\phi$.
  \end{itemize}
\end{definition}

Thus, a little less formally, a Rosen universe is a manifold $\mathbb{M} = \mathbb{U} \times \mathbb{R} \times \mathbb{X}$, as above, equipped with a symmetric tensor field $G_\rho(h) = 2\,du\,dv - dx^Th(u) dx$, where $h$ is  smooth on $\mathbb{U}$, such that the endomorphism $h(u)$ is positive definite, for each $u \in \mathbb{U}$, except when  $u \in \mathbb{S}$, a discrete subset of $ \mathbb{U}$ and such that at each point of $\mathbb{S}$, the metric has a {\em removable singularity}.  A useful characterization of the removability is Theorem \ref{RosenUniverse}.

The first basic result is the theorem, whose proof we give below, independent of the work of Alekseevsky.
\begin{theorem}\label{AlekseevskyTheoremNew}
  Every plane wave is a Rosen universe.
\end{theorem}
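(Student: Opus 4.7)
The plan is to reduce via the Alekseevsky normal form to a Brinkmann metric and then build a global Rosen chart using a matrix Jacobi field along the central null geodesic. The identification of the singular set then becomes an ODE question about conjugate points, which the Lagrangian structure of the Jacobi field will force to be discrete.

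First, I would invoke the fundamental theorem of Alekseevsky (reproved elsewhere in the paper), together with the further reductions available within the class $\mathcal{A}$, to place $(\mathbb{M}, G, \mathcal{D})$ in Brinkmann form
$$G = G_\beta(p) = 2\,du\,dv + x^T p(u)\,x\,du^2 - dx^T dx, \qquad \mathbb{M} = \mathbb{U}\times\mathbb{R}\times\mathbb{X},$$
with $p$ smooth and symmetric on $\mathbb{U}$ and $\mathcal{D}$ the standard dilation. Fix a basepoint $u_0 \in \mathbb{U}$ and solve the matrix Jacobi equation
$$E''(u) + p(u) E(u) = 0, \qquad E(u_0) = I,\ E'(u_0) = 0.$$
The symmetry of $p$ together with these initial data forces the Wronskian $E^T E' - (E')^T E$ to vanish identically, so that $E^T E'$ is symmetric (the \emph{Lagrangian} condition). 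Set $h(u) := E(u)^T E(u)$ and $\mathbb{S} := \{u \in \mathbb{U} : \det E(u) = 0\}$, and let $\mathbb{M}_\rho := \mathbb{U} \times \mathbb{R} \times \mathbb{X}$ carry the Rosen tensor $G_\rho(h)$ with its standard dilation $\mathcal{D}_\rho$. Define
$$\phi(u,v,x) := \bigl(u,\; v + \tfrac{1}{2}\, x^T E(u)^T E'(u)\,x,\; E(u)\,x\bigr).$$
A direct computation using $E'' = -pE$ and the Lagrangian symmetry yields $\phi^* G_\beta(p) = G_\rho(h)$, and $\phi$ manifestly intertwines the dilations. Where $E$ is invertible---that is, on $\mathbb{M}_\rho - \mathbb{S}$---the map $\phi$ is an isometric diffeomorphism onto its image, whose complement in $\mathbb{M}$ is the union of wave fronts over $\mathbb{S}$.

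The main obstacle is showing that $\mathbb{S}$ is discrete in $\mathbb{U}$, which is precisely what makes $(\mathbb{M}_\rho, G_\rho(h))$ a singular Rosen plane wave in the sense of the definition. At a candidate zero $u_* \in \mathbb{S}$, set $K := \ker E(u_*)$, $R := \im E(u_*)$, and $k := \dim K$. For any nonzero $v \in K$, the vector $E(u)v$ satisfies $y'' + py = 0$ with $y(u_*) = 0$, and ODE uniqueness (together with $E(u_0) = I$) rules out $y'(u_*) = 0$, so $E'(u_*) v \neq 0$; pairing the Lagrangian identity $E(u_*)^T E'(u_*) = E'(u_*)^T E(u_*)$ against arbitrary $w$ yields $E'(u_*) v \in \ker E(u_*)^T = R^\perp$, and dimension counting makes $E'(u_*)|_K : K \to R^\perp$ an isomorphism. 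In bases adapted to $\mathbb{X} = K \oplus K^\perp$ on the domain and $\mathbb{X} = R^\perp \oplus R$ on the codomain, the matrix $E(u)$ near $u_*$ takes the block form
$$E(u) = \begin{pmatrix} (u-u_*) A(u) & (u-u_*) B(u) \\ O((u-u_*)^2) & C_0 + O(u - u_*) \end{pmatrix},$$
with $A(u_*)$ and $C_0$ both invertible; a Schur-complement expansion then gives $\det E(u) = (u-u_*)^k \det A(u_*)\det C_0 + O((u-u_*)^{k+1})$. Thus $u_*$ is an isolated zero of $\det E$ of order exactly $k$, so $\mathbb{S}$ is discrete. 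Density of $\phi(\mathbb{M}_\rho - \mathbb{S})$ in $\mathbb{M}$ and the remaining clauses of Definition \ref{RosenUniverseDefinition} are then immediate, and $(\mathbb{M}_\rho, G_\rho(h), \mathcal{D}_\rho, \phi)$ exhibits $(\mathbb{M}, G, \mathcal{D})$ as a Rosen universe.
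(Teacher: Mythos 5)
Your proposal is correct and follows essentially the same route as the paper: reduce to Brinkmann form (via the Alekseevsky-type characterization), solve the matrix Jacobi equation $L''+pL=0$ with $L(u_0)=I$, $L'(u_0)=0$, set $h=L^TL$, and use the induced dilation-equivariant map to a singular Rosen metric whose degeneracy set is the zero set of $\det L$. The only substantive addition is welcome rather than divergent: your conjugate-point block analysis showing $\det E$ has an isolated zero of order $\dim\ker E(u_*)$ supplies a proof of the discreteness of $\mathbb S$, which the paper asserts without detailed argument in Lemma \ref{EveryBrinkmannRosen}.
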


The second basic result is the theorem characterizing the structure of the removable singularities of a Rosen universe.
\begin{theorem}\label{RosenUniverse}
  A metric $2\,du\,dv - dx^T h(u) dx$, defined on $\mathbb{M} = \mathbb{U} \times \mathbb{R} \times \mathbb{X}$, as above, with $h(u)$ smooth and symmetric and with $h(u)$ positive definite except at a point $u = u_0 \in \mathbb{U}$, has a removable singularity at $u_0$ if and only if there exists a (necessarily unique) symmetric endomorphism $K(u)$ of $\mathbb{X}$, smooth on $\mathbb{U}$, such that $K(u)h(u) = (u - u_0)^2 I$ and $K'(u_0) = 0$.
\end{theorem}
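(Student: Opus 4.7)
The plan is to use a vielbein substitution that relates the Rosen form to a smooth metric of class $\mathcal{A}$, and to identify $K$ with the square of the inverse vielbein. Given any smooth family of endomorphisms $E(u)$ on $\mathbb{U}$ with $E^T E = h$, the dilation-equivariant map
\[
\phi \colon (u,v,x) \mapsto \bigl(u,\ v + 2^{-1} x^T b(u) x,\ E(u) x\bigr), \qquad b := 2^{-1} h',
\]
pulls back a specific class-$\mathcal{A}$ metric on the target to $G_\rho(h)$: its parameter $q$ is twice the skew part of $E' E^{-1}$, and $p$ is a Jacobi-type expression in $E, E', E''$. Away from $u_0$ one may take $E = h^{1/2}$; the theorem therefore reduces to asking whether $E$ can be chosen smoothly across $u_0$ so that $p$ and $q$ extend smoothly.

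For sufficiency, suppose $K$ exists as stated, and set $s := u - u_0$. Then $M := K^{-1}$ is smooth and positive definite near $u_0$ (because $K(u) = s^2 h(u)^{-1}$ is positive definite for $u \ne u_0$), and $M'(u_0) = -M(u_0) K'(u_0) M(u_0) = 0$. Let $F := M^{1/2}$, the smooth symmetric positive-definite square root, and set $E := s F$; then $E^T E = s^2 F^2 = s^2 M = h$. A direct calculation gives $E' E^{-1} = s^{-1} I + F' F^{-1}$. The scalar pole $s^{-1} I$ is symmetric, so it drops out of $q = 2(E' E^{-1})_\wedge$, leaving $q = F' F^{-1} - F^{-1} F'$ manifestly smooth. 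Similarly, the apparent $s^{-1}$-pole in the Jacobi quantity $E'' E^{-1}$ is proportional to $F(u_0) F'(u_0) + F'(u_0) F(u_0) = (F^2)'(u_0) = M'(u_0) = 0$; then positive-definiteness of $F(u_0)$ together with the relation $F(u_0) F'(u_0) + F'(u_0) F(u_0) = 0$ forces $F'(u_0) = 0$, killing the pole and ensuring $p$ is smooth. This exhibits a class-$\mathcal{A}$ target $(\mathbb{M}, G_\alpha, \mathcal{D})$ and a map $\phi$ realizing removability. For the case of a partial rank drop (when $\ker h(u_0)$ is a proper subspace), apply the construction block-diagonally to $\ker h(u_0) \oplus \ker h(u_0)^\perp$; the regular block is left unchanged.

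For necessity, assume $(\mathbb{M}, G, \mathcal{D})$ and smooth dilation-equivariant $\phi \colon \mathbb{M}_\rho \to \mathbb{M}$ exhibit removability. Dilation equivariance and smoothness force $\phi$ to preserve the wave fronts $u = $ constant, to act on the transverse coordinate $x$ via a smooth family $E(u)$ of endomorphisms of $\mathbb{X}$, and to shift $v$ by a quadratic in $x$; in suitable (Brinkmann or class-$\mathcal{A}$) coordinates on $\mathbb{M}$, one has $\phi(u,v,x) = (u, v + 2^{-1} x^T b(u) x, E(u) x)$. The isometry condition on $\mathbb{M}_\rho - \mathbb{S}$ then gives $h = E^T E$ and a smooth Jacobi equation $E'' + p E = 0$ for some smooth symmetric $p$. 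On $\ker h(u_0)$ (where $E(u_0)$ vanishes), Jacobi gives $E''(u_0) = 0$, hence $E(u) = s\, E'(u_0) + O(s^3)$ on this block, with $E'(u_0)$ invertible on the block (so that $\phi$ is a diffeomorphism on $\mathbb{M}_\rho - \mathbb{S}$). Therefore $h|_{\ker h(u_0)} = s^2 E'(u_0)^T E'(u_0) + O(s^4)$ with no cubic term, and $K := s^2 h^{-1}$ extends smoothly across $u_0$ with $K'(u_0) = 0$. Uniqueness of $K$ is immediate from $K h = s^2 I$ and the invertibility of $h$ away from $u_0$.

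The main obstacle is the necessity direction: pinning down the vielbein $E$ from the hypothetical $\phi$ in a coordinate system on $\mathbb{M}$ that makes the Jacobi equation visible, and verifying that the absence of a quadratic term in $E$ (forced by Jacobi with smooth $p$) translates into the vanishing of the cubic term in $h$, i.e., into $K'(u_0) = 0$. The two small technical keys are the positivity-of-$F$ argument that forces $F'(u_0) = 0$, and the automatic cancellation of the scalar pole in $(E' E^{-1})_\wedge$ that makes $q$ smooth in spite of $E$ having a zero at $u_0$.
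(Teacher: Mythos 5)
Your argument for the direction ``removable $\Rightarrow$ $K$ smooth with $K'(u_0)=0$'' has a genuine gap at the final ``Therefore''. You control only the $\ker h(u_0)$--$\ker h(u_0)$ block of $h=E^TE$ (no cubic term there), but $K=s^2h^{-1}$ and its derivative depend on all blocks, and inversion mixes them; the dangerous term is the mixed block of $E$ at order $s$, and nothing in your argument kills it. Indeed, everything you actually use — $E''+pE=0$ with $p$ smooth symmetric, $E''(u_0)|_{\ker h(u_0)}=0$, $E=sE'(u_0)+O(s^3)$ on the kernel block, $E'(u_0)$ invertible there — is satisfied by
\[
E(u)=\begin{pmatrix}1 & cu\\ 0 & u\end{pmatrix},\qquad p=0,\qquad h=E^TE=\begin{pmatrix}1 & cu\\ cu & (1+c^2)u^2\end{pmatrix},
\]
yet $K=u^2h^{-1}=\begin{pmatrix}(1+c^2)u^2 & -cu\\ -cu & 1\end{pmatrix}$ has $K'(0)\neq 0$ for $c\neq 0$. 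This $E$ is of course not Lagrangian ($E'^TE\neq E^TE'$; the induced cross term $q$ blows up like $u^{-1}$), so it does not contradict the theorem — but it does show that your chain of inferences is insufficient: the step you are missing is precisely the use of the symmetry of $E'E^{-1}$ (equivalently, smoothness of $q$, or the symmetric Sachs ansatz). This is the heart of the paper's proof of this direction: writing $S=Au^{-1}+B$ with $A,B$ symmetric, the Sachs equation forces $A$ to be the orthogonal projection onto $\ker L_0^T$, the relation $PL_1=BL_0$ then kills the off-diagonal block $PL_1\widetilde Q$ of $L_1$, and only after that does the block inversion show $uL^{-1}L^{-T}u$ has no order-$u$ term.

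In the other direction your construction $E=sF$, $F=\sqrt{K^{-1}}$, with the pleasant observation that $M'(u_0)=0$ plus positive-definiteness of $F(u_0)$ forces $F'(u_0)=0$, is correct — but only in the full-drop case $h(u_0)=0$, where $K(u_0)$ is invertible. In the general (partial rank-drop) case $K(u_0)$ is singular, $M=K^{-1}$ does not exist near $u_0$, and your proposed fix of applying the construction ``block-diagonally'' on $\ker h(u_0)\oplus\ker h(u_0)^{\perp}$ does not go through as stated, because $h(u)$ need not preserve that splitting for $u\neq u_0$: e.g.\ $h(u)=\begin{pmatrix}1 & cu^3\\ cu^3 & u^2\end{pmatrix}$ satisfies the hypothesis ($K$ smooth, $K'(0)=0$) but has nonvanishing mixed blocks, so there is no blockwise reduction and the off-diagonal corrections must be estimated. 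The paper's route avoids square roots and diagonalization entirely: it takes the primitive $H$ of $h^{-1}$, passes to the Jacobi frame $x^i\partial_v+H^{ij}\partial_j$, writes the new transverse metric as $H\dot H^{-1}H=(uH)K^{-1}(uH)$ with $uH=-K_0+uM(u)$, and shows it is smooth with invertible value $K_0+g_0$ at $u_0$ using the exact-kernel identities $\ker K_0=\operatorname{im}g_0$ and $\ker g_0=\operatorname{im}K_0$ extracted from $Kh=s^2I$ and $K'(u_0)=0$. Your vielbein/class-$\mathcal A$ strategy is genuinely different and could likely be repaired, but as written both directions fail precisely at the off-diagonal-block phenomena that the paper's proof is organized around.
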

Here a prime denotes the derivative with respect to $u$ and $I$ denotes the identity endomorphism of $\mathbb{X}$. 

\subsection{Relations between the various forms of plane waves}\label{AlekseevskySection} 
Recall the family $\mathcal{A}$ metrics $G_\alpha(h)$, defined on $\mathbb{M} = \mathbb{U} \times \mathbb{R} \times \mathbb{X}$, as discussed above:
\[ G_\alpha(h) = \alpha \hspace{2pt} du - dx^Th(u) dx, \]
\[ \alpha = 2\,dv - b(u)v\,du + x^Tp(u)x\,du - x^Tq(u)dx.\]
Here $b$ is real valued, $h$, $p$ and $q$ take values in the endomorphisms of $\mathbb{X}$, with $h$ symmetric and positive definite. Also $b, p, q$ and $h$ are smooth on $\mathbb{U}$.

We note some transformations mapping the ensemble $\mathcal{A}$ to itself. In the following a prime is used to denote the derivative with respect to $u$.  
\begin{lemma}
  The collection $\mathcal{A}$ is invariant under $u$-dependent conformal transformations: if $(\mathbb{M}, G) \in \mathcal{A}$, then $(\mathbb{M}, e^{f(u)}G)\in \mathcal{A}$, where $f: \mathbb{U} \rightarrow \mathbb{R}$ is smooth. 
\end{lemma}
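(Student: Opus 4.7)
The plan is to exhibit an explicit change of $v$-coordinate (leaving $u$ and $x$ untouched) and to read off new coefficient functions $\tilde b, \tilde p, \tilde q, \tilde h$ that put $e^{f(u)} G_\alpha(h)$ back into the standard form of $\mathcal{A}$. The obvious obstacle is that direct conformal rescaling multiplies the $dv$-coefficient in $\alpha$ by $e^f$, whereas the normalization in $\mathcal{A}$ requires that coefficient to be exactly $2$. This forces the choice of the new $v$-variable, and essentially everything else falls out.

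Concretely, I would set $\tilde v = e^{f(u)} v$, so that $dv = e^{-f}\,d\tilde v - f'(u)\,v\,du$ and $v = e^{-f}\tilde v$. Substituting into $e^f\alpha\,du$ gives
\[
e^f\alpha = 2\,d\tilde v - \bigl(b(u) + 2f'(u)\bigr)\tilde v\,du + e^{f(u)} x^T p(u) x\,du - e^{f(u)} x^T q(u)\,dx,
\]
and $e^f(-dx^T h(u)\,dx) = -dx^T(e^{f(u)} h(u))\,dx$. Hence $e^f G_\alpha(h) = G_{\tilde\alpha}(\tilde h)$, where
\[
\tilde h = e^f h,\qquad \tilde b = b + 2f',\qquad \tilde p = e^f p, \qquad \tilde q = e^f q.
\]
Each of these is smooth on $\mathbb U$; $\tilde h$ remains symmetric and positive definite because $e^f>0$; and symmetry of $p$ and skewness of $q$ (if imposed) are preserved by the scalar factor.

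Finally, I would verify compatibility with the standard dilation group: under $\mathcal D_t(u,v,x)=(u,e^{2t}v,e^t x)$, one has $\mathcal D_t^*\tilde v = e^{f(u)}e^{2t}v = e^{2t}\tilde v$, so the coordinate change $(u,v,x)\mapsto(u,\tilde v,x)$ commutes with $\mathcal D$ and the dilation structure is unchanged. The calculation is bookkeeping; the only genuine choice is the rescaling $\tilde v=e^fv$, which is forced by the normalization convention defining $\mathcal A$.
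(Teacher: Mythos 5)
Your proof is correct and takes essentially the same route as the paper: substitute $\tilde v = e^{f(u)}v$, absorb the factor $e^f$ into new coefficients, and observe that the standard dilation is untouched. In fact your coefficient $\tilde b = b + 2f'$ is the accurate one; the paper's displayed $B = b + f'$ drops a factor of $2$ arising from $2e^f\,dv = 2\,d\tilde v - 2f'\tilde v\,du$.
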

\begin{proof}
We put $V = e^{f(u)} v$ and $H(u) = e^{f(u)} h(u)$.  Then, using a prime to denote the derivative with respect to $u$,  we have the required expression:
\begin{align*}
  e^{f(u)} G_{\alpha}(h) &= \beta \,du - dx^TH(u) dx, \\
  \beta &=  e^{f(u)} \alpha \\
        &= 2\,dV - (f'(u) + b(u))V\,du  + e^{f(u)}x^Tp(u)x\,du  - e^{f(u)}x^Tq(u)dx, \\
        &=  2\,dV - B(u) V\,du  + x^TP(u)x\,du  - x^TQ(u) dx,
\end{align*}
  \[ e^{f(u)} G_{\alpha}(h)  = G_{\beta}(H), \quad B = b + f', \quad  P= e^{f} p, \quad  Q = e^{f}q.\]
\end{proof}

\begin{lemma}
  The collection $\mathcal{A}$ is invariant under linear transformations on the vector-valued variable $x \in \mathbb{X}$, $x \rightarrow X = L^{-1}(u)(x)$, where $L$ is smooth on $\mathbb{U}$ and takes values in the automorphisms of $\mathbb{X}$, so $\det(L)$ is non-zero everywhere on $\mathbb{U}$.
\end{lemma}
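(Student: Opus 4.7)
The plan is a direct substitution. Writing $x = L(u) X$ with $v$ unchanged, one has $dx = L(u)\, dX + L'(u) X\, du$ (prime denoting $\partial_u$), and substituting into
\[ G_\alpha(h) = 2\, du\, dv - b(u) v\, du^2 + x^T p(u) x\, du^2 - x^T q(u)\, dx\, du - dx^T h(u)\, dx \]
produces expressions that regroup naturally into the four slots of an $\mathcal{A}$-form metric: a pure $dX^T(\cdot)\, dX$ term, an $X^T(\cdot)\, dX\, du$ cross term, an $X^T(\cdot) X\, du^2$ quadratic term, and the unchanged $2\, du\, dv - b(u) v\, du^2$ piece. Since $v$ is not transformed and $L$ depends only on $u$, the substitution also commutes with the standard dilation, so no equivariance check is required.

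The term $-dx^T h\, dx$ contributes to the first three slots and is the only nontrivial part of the computation; in its expansion the two $dX\, du$ cross pieces are equal (not opposite) once one applies the scalar-transpose identity $dX^T M X = X^T M^T dX$ together with $h = h^T$, and so they add rather than cancel. The remaining $x^T p x\, du^2$ and $-x^T q\, dx\, du$ contribute straightforwardly to the $du^2$ and $dX\, du$ slots. Reading off coefficients gives the new metric in $\mathcal{A}$-form with
\[ \tilde h = L^T h L, \quad \tilde b = b, \quad \tilde q = L^T q L + 2 (L')^T h L, \quad \tilde p = L^T p L - L^T q L' - (L')^T h L'. \]

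It remains to verify membership in $\mathcal{A}$: $\tilde h$ is manifestly symmetric and smooth, and is positive definite because $X^T L^T h L X = (LX)^T h (LX) > 0$ for $X \ne 0$, using invertibility of $L$; smoothness of $\tilde b, \tilde p, \tilde q$ is immediate from smoothness of $b, h, p, q, L$. No symmetry of $\tilde p$ or skewness of $\tilde q$ is required by the definition of $\mathcal{A}$, although the earlier lemma can be reapplied if one wishes to return to standard form. The only real point of care in the argument is the combination of the two mixed $dX\, du$ contributions from $-dx^T h\, dx$; beyond that the proof is routine bookkeeping.
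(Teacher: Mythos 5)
Your proposal is correct and follows essentially the same route as the paper: direct substitution of $x = L(u)X$, $dx = L\,dX + L'X\,du$, and regrouping of terms. The paper merely packages the answer via $S = L'L^{-1}$, giving $\tilde q = L^T(q + 2S^Th)L$ and $\tilde p = L^T(p - qS - S^ThS)L$, which coincide with your formulas upon substituting $SL = L'$.
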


\begin{proof}
Put $x = L(X)$,  $L' = SL$, where $S = L' L^{-1}$ is smooth, so $dx = S(x) du + L dX$ and put $H = L^Th L$.  Then we have:
\begin{align*}
  G_\alpha(h) &= \alpha du  - (dX^TL^T + x^T S^Tdu)h(L\, dX + Sx\, du), \\
  G &= \beta\,du - dX^T H\, dX,\\
  \beta  &= 2\,dv - bv\,du  + x^T (p - qS - S^T hS) x\, du - x^T (q  + 2S^Th )L\, dX,
  \end{align*}
  \[  G_\alpha(h) = G_\beta (H), \quad  \beta = 2\,dv - bv\,du + X^TP X\, du - X^T Q\, dX, \]
  \[ P = L^T(p - qS - S^T hS)L, \quad Q = L^T(q  + 2S^Th)L .\]
\end{proof}
\begin{lemma}
  The collection $\mathcal{A}$ is invariant under smooth diffeomorphisms of the $u$-variable.
\end{lemma}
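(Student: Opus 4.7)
The plan is to reduce the problem to a direct substitution by first identifying the correct compensating change of the $v$-variable. Write the $u$-diffeomorphism as $u=\psi(U)$, with $\psi$ smooth and $\psi'$ nowhere zero on its domain $\mathbb U'$. A naive substitution sends $du\mapsto \psi'(U)\,dU$, which destroys the normalization $2\,dv\,du$ of the first term of $\alpha$: we would get $2\psi'(U)\,dU\,dv$. The natural fix is to rescale $v$ by the same Jacobian factor, setting $V=\psi'(U)v$ (equivalently $v=V/\psi'(U)$), and keeping $x$ unchanged, $X=x$.

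I would first verify that the combined transformation $\Phi(U,V,X)=(\psi(U),V/\psi'(U),X)$ is compatible with the standard dilation, so that it makes sense as a morphism of Penrose limits. Under $\mathcal D_t(U,V,X)=(U,e^{2t}V,e^tX)$ the image transforms as $(u,V/\psi'(U),x)\mapsto (u,e^{2t}V/\psi'(U),e^tx)=(u,e^{2t}v,e^tx)$, which is exactly $\mathcal D_t$ on $(u,v,x)$, as required.

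Next I would carry out the pullback. Using $dv=dV/\psi'(U)-(\psi''(U)/\psi'(U)^2)V\,dU$, one computes
\[
\psi'(U)\,\alpha=2\,dV-B(U)V\,dU+X^{T}P(U)X\,dU-X^{T}Q(U)\,dX,
\]
where
\[
B(U)=\psi'(U)b(\psi(U))+\frac{2\psi''(U)}{\psi'(U)},\quad P(U)=\psi'(U)^{2}p(\psi(U)),\quad Q(U)=\psi'(U)q(\psi(U)),
\]
and the second piece pulls back to $-dX^{T}H(U)\,dX$ with $H(U)=h(\psi(U))$. Since $\psi'$ is nowhere zero, $H$ is smooth, symmetric, and positive definite; likewise $B,P,Q$ are smooth. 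Hence $\Phi^{*}G_{\alpha}(h)=G_{\beta}(H)$ lies in $\mathcal A$.

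There is no real obstacle, only a bookkeeping point: one must notice that the transformation law mixes into $B$ the Schwarzian-like term $2\psi''/\psi'$, which is the manifestation of the need to rescale $v$ in order to restore the normal form $2\,dV+\cdots$. With that term recognized, everything else is a smooth, term-by-term identification of the new $(B,H,P,Q)$ and a check that they satisfy the regularity and symmetry conditions defining $\mathcal A$.
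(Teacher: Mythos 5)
Your proof is correct and takes essentially the same route as the paper's: the same compensating substitution $v=V/\psi'(U)$ (the paper writes it as $v=U'(u)V$), the same dilation-compatibility, and the same identification of the new data $H(U)=h(\psi(U))$, $P=\psi'^2 p$, $Q=\psi' q$. In fact your coefficient $2\psi''/\psi'$ in $B$ is the correct one — the paper's displayed $\ddot u(\dot u)^{-1}$ is short a factor of $2$, as one can check against its later reduction of $b$ to zero via $f'=2^{-1}b$.
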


\begin{proof}
We put $U = U(u), \quad  v =  U'(u)V, \quad x = x$, where $U: \mathbb{U} \rightarrow \mathbb{\tilde{U}}$ is a smooth bijection.  We use a dot to denote differentiation with respect to the variable $U$, so, in particular,  we have $du = \dot{u}dU$ and $d(\dot{u}) = \ddot{u}dU$.   Note that for a diffeomorphism the smooth function $\dot{u} = (U')^{-1}$ is nowhere vanishing.   Also $\ddot{u} = - (U')^{-3}U''$.
Then we get:
\begin{align*}
  G &= \beta\, dU -  dx^T H(U)dx, \\
  \beta &=  \dot{u}\alpha  \\
    &= 2\,dV - B(U)V\, dU + x^TP(U)x\, dU - x^TQ(U)\,dx
\end{align*}
\[ B(U)= \dot{u} b(u) + \ddot{u}(\dot{u})^{-1}, \quad P(U) = (\dot{u})^2p(u), \quad Q(U) = \dot{u}\omega(u),  \quad  H(U) = h(u).\]
We can refine the collection $\mathcal{A}$ by writing $\mathcal{A}_{\mathbb{U}}$, for all Alekseevsky metrics on $\mathbb{U} \times \mathbb{R} \times \mathbb{X}$, with $\mathbb{U}$ a connected open subset of the reals.  Then Lemmas $4$ and $5$ map  $\mathcal{A}_{\mathbb{U}}$ to itself, whereas Lemma $6$ maps $\mathcal{A}_{\mathbb{U}}$ to $\mathcal{A}_{\tilde{\mathbb{U}}}$, where $u(\mathbb{\tilde{U}}) = \mathbb{U}$.
\end{proof}

\begin{theorem}\label{PlaneWavesSame}
Each plane wave spacetime of the collection $\mathcal{A}$ is isomorphic to a metric of each of the types $\mathcal{A}_\alpha$ and $\mathcal{A}_\beta$ and,   at least locally in the $u$-variable,  is isometric to a metric of the type $\mathcal{A}_\rho$.    The same is true if we demand also that the central null geodesic be preserved, i.e. these isometries are also isomorphisms of the corresponding Penrose limits.
\end{theorem}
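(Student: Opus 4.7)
The plan is to build the required isomorphisms by composing the dilation-equivariant moves of Lemmas~$4$, $5$, and~$6$, together with one further basic move not explicitly listed above: the symmetric $v$-shift $v \mapsto v + \tfrac{1}{2} x^T N(u) x$ with $N = N^T$, which manifestly sends $\mathcal{A}$ to itself and commutes with the standard dilation. All of these moves fix the central null geodesic, so every isometry produced is automatically an isomorphism of Penrose limits, and the ``also for Penrose limits'' clause of the theorem comes for free. The reduction $\mathcal{A} \to \mathcal{A}_\alpha$ is immediate: apply Lemma~$5$ with $L(u) = h(u)^{-1/2}$, the smooth positive-definite square root of $h^{-1}$, so that the new $H = L^T h L = I$.

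For $\mathcal{A}_\alpha \to \mathcal{A}_\beta$, start from the $\mathcal{A}_\alpha$ form and, by Lemma~$2$, assume that $p$ is symmetric and $q$ is skew. First apply Lemma~$4$ with $f' = -b$ to kill $b$; this rescales $h = I$ to $e^f I$, which is undone by a scalar application of Lemma~$5$ with $L = e^{-f/2} I$ (followed by another invocation of Lemma~$2$ if needed). Now $b = 0$, $h = I$, and $q$ is skew. Solve the linear matrix ODE $L' = \tfrac{1}{2} q L$ with $L(u_0) = I$. Skewness of $q$ forces $L L^T$ to obey $(L L^T)' = \tfrac{1}{2}[q, L L^T]$, whose unique solution starting at $I$ is $I$ itself, so $L$ stays orthogonal and $H = L^T L = I$ is preserved. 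With $S = \tfrac{1}{2} q$ and $S^T = -\tfrac{1}{2}q$, the formula in Lemma~$5$ yields $Q = L^T(q + 2 S^T) L = 0$, placing the metric in the Brinkmann class $\mathcal{A}_\beta$.

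For the local reduction $\mathcal{A}_\beta \to \mathcal{A}_\rho$, start from $G_\beta(p)$ and solve the Jacobi equation $E'' + p E = 0$ on $\mathbb{U}$ with $E(u_0) = I$ and $E'(u_0) = 0$. Since $p$ is symmetric, the Wronskian $E^T E' - (E')^T E$ is constant and vanishes at $u_0$, so $E^T E'$ is symmetric throughout $\mathbb{U}$. On an open neighborhood $\mathbb{U}_0 \ni u_0$ where $\det E \ne 0$, set $S = E' E^{-1}$ (symmetric, by the Wronskian identity), apply Lemma~$5$ with $L = E$, and follow with the $v$-shift $v = V + \tfrac{1}{2} X^T (E^T E') X$. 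A routine expansion shows that the shift exactly cancels the cross term $Q = 2 E^T E'$ produced by Lemma~$5$, while the new $du^2$ coefficient becomes $E^T(S' + S^2 + p) E$, which vanishes because $S$ satisfies the Riccati form of the Jacobi equation. What remains is the regular Rosen metric $G_\rho(E^T E)$ on $\mathbb{U}_0 \times \mathbb{R} \times \mathbb{X}$. The main obstacle lies precisely here: $\det E$ generically has isolated zeros, forcing the Rosen reduction to be only local in $u$. This obstruction is exactly what the notion of a singular Rosen universe is designed to accommodate, and is the point of departure for the global Theorem~\ref{AlekseevskyTheoremNew}.
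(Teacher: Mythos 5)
Most of your outline tracks the paper's own proof: the reduction $\mathcal A\to\mathcal A_\alpha$ via $x=h(u)^{-1/2}X$, the elimination of the skew term by an orthogonal rotation (your choice $L'=\tfrac12 qL$ with $h=I$ kills $Q$ outright, a mild streamlining of the paper, which only arranges $Q$ symmetric and then absorbs it by a $v$-shift), and the local Brinkmann-to-Rosen step via $E''+pE=0$, the constancy of the Wronskian, and the Riccati/Sachs identity are all correct and essentially the argument given in the text.

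However, there is a genuine gap in the step that removes $b(u)$. Lemma~4 is a \emph{conformal} move: it says that $e^{f(u)}G$ again belongs to $\mathcal A$, not that $e^{f(u)}G$ is isometric to $G$. Your composition (Lemma~4 with $f'=-b$, then Lemma~5 with $L=e^{-f/2}I$) therefore lands in the isometry class of $e^{f}G$, not of $G$: the coordinate change $x=e^{-f/2}X$ restores $H=I$ in the transverse block, but it cannot strip off the overall factor that the rescaling put on the $\alpha\,du$ part. A concrete counterexample: take $p=q=0$, $h=I$, and $b$ a nonzero constant, so $G=2\,du\,dv-bv\,du^2-dx^Tdx$. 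This metric is flat (the substitution $v=e^{bu/2}V$, $dU=e^{bu/2}du$ exhibits it as Minkowski space), whereas your two moves produce a metric isometric to $e^{-bu}G$, i.e.\ to the Rosen metric $2\,du\,dV-e^{-bu}\,dx^Tdx$, whose Ricci curvature (by the paper's formula $\operatorname{tr}\bigl(4^{-1}g^{-1}\dot g\,g^{-1}\dot g-2^{-1}g^{-1}\ddot g\bigr)du\otimes du$ with $g=e^{-bu}I$) equals $-\tfrac{nb^2}{4}\,du\otimes du\neq 0$. So the output is not isometric to the input, and the claimed reduction $\mathcal A_\alpha\to\mathcal A_\beta$ fails at this point. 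The repair is the move the paper actually uses: the genuine diffeomorphism $v=e^{f(u)}V$, $dU=e^{f(u)}du$ with $2f'=b$ (a $v$-rescaling combined with the $u$-reparameterization of Lemma~6). This is an isometry, commutes with the standard dilation, fixes the central null geodesic, turns $du\,(2\,dv-bv\,du)$ into $2\,dU\,dV$ while keeping $h=I$, and keeps the transformed $p$ symmetric and $q$ skew (with $P=p(U')^{-2}$, $Q=q(U')^{-1}$). With that substitution in place of your Lemma~4 step, the remainder of your argument goes through as written.
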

We prove a series of Lemmas which will give us the first part of the theorem.   The transformations used in the Lemmas explicitly keep the central null geodesic invariant, giving the second part.

We say that two types coincide, if given a spacetime of one type, there is a spacetime of the other type, such that there is an isometry mapping one to the other, preserving the standard dilation groups.
\begin{lemma}
 Types $\mathcal{A}$ and $\mathcal{A}_\alpha$  coincide.
\end{lemma}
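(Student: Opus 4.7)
The plan is to apply the lemma on $u$-dependent linear changes of the $\mathbb{X}$-variable with a carefully chosen $L(u)$, so as to normalize the endomorphism $h(u)$ to the identity and hence land in $\mathcal{A}_\alpha$. The converse inclusion $\mathcal{A}_\alpha\subset\mathcal{A}$ is tautological.

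First, since $h(u)$ is smooth in $u$ and, for each $u\in\mathbb{U}$, a symmetric positive-definite endomorphism of the Euclidean vector space $(\mathbb{X},T)$, the spectral theorem furnishes its principal square root $h(u)^{1/2}$, which is again smooth in $u$, symmetric, and positive definite. I would set $L(u):=h(u)^{-1/2}$, giving a smooth, nowhere-degenerate, $\op{End}(\mathbb{X})$-valued function on $\mathbb{U}$.

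Second, I would feed this $L$ into the lemma governing linear transformations of the $x$-variable. Because $L$ is symmetric, the output $H=L^ThL$ collapses to $h^{-1/2}hh^{-1/2}=I$, which is exactly the defining condition for $\mathcal{A}_\alpha$. The same lemma automatically produces new smooth coefficients $B$, $P$, $Q$ for the one-form $\beta$ via its explicit formulae; no further structural constraints on these are needed, since membership in $\mathcal{A}_\alpha$ only requires the transverse form to be $dX^T\,dX$. I would then verify that the substitution $(u,v,x)\mapsto(u,v,L(u)^{-1}x)$ preserves the standard dilation and the central null geodesic: since $L$ depends only on $u$, the fiberwise linear map manifestly commutes with $\mathcal{D}_t(u,v,x)=(u,e^{2t}v,e^{t}x)$ and fixes $\{v=0,\,x=0\}$ pointwise.

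I do not anticipate a serious obstacle. The only point requiring care is the global smoothness of $h(u)^{1/2}$ in $u$, which is standard: the positive-definite square root is a real-analytic (in particular, smooth) map on the open convex cone of symmetric positive-definite endomorphisms of a finite-dimensional Euclidean space, for instance via the contour-integral representation $h^{1/2}=(2\pi i)^{-1}\oint \lambda^{1/2}(\lambda I-h)^{-1}d\lambda$ over a contour enclosing the spectrum of $h(u)$ in the right half-plane.
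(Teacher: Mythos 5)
Your argument follows essentially the same route as the paper: choose a smooth $L(u)$ with $L^T(u)h(u)L(u)=I$, apply the lemma on $u$-dependent linear changes of the $x$-variable, and note that the converse inclusion is trivial; your explicit choice $L=h^{-1/2}$, with smoothness justified by the functional calculus, is simply a concrete instance of the smooth factorization that the paper cites as standard, and your check that the substitution commutes with $\mathcal D_t$ and fixes the central null geodesic matches the paper's remarks. The one place you depart from the paper is the claim that ``no further structural constraints'' on the new coefficients are needed. As the class $\mathcal A_\alpha$ is defined, an Alekseevsky metric has $h=I$ \emph{and} $p$ symmetric, $q$ skew; after your change of variable the transformed coefficients $P=L^T(p-qS-S^ThS)L$ and $Q=L^T(q+2S^Th)L$, with $S=L'L^{-1}$, will in general be neither. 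This is not a serious obstruction, but it does require one more step, which is exactly how the paper concludes: invoke the earlier lemma that replaces $p$ by its symmetric part and $q$ by its skew part via the shift $v\mapsto v-4^{-1}x^Tq'_\odot x$. Since that transformation only changes $v$, it preserves $h=I$, the standard dilation, and the central null geodesic, so the metric then lands in $\mathcal A_\alpha$ as defined. With that amendment your proof is complete and coincides with the paper's.
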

\begin{proof}
Let $G$ be an element of $\mathcal{A}$ written in the standard form, as  given above:
\begin{align*}
  G &= \alpha\, du - dx^T h(u)dx, \\
  \alpha &= 2\,dv - b(u)v\,du  + x^T p(u)x\, du  - x^Tq(u)\,dx.
\end{align*}
First we smoothly factorize  the positive definite automorphism $h(u)$:
\[ L^T(u)h(u)L(u)  = I.\]
Here  $L$ is smooth on $\mathbb{U}$ and takes values in the automorphisms of $\mathbb{X}$.  It is standard that such a smooth factorization exists.
Then we put: $x = L(u)(X)$.  Using Lemmas $5$ and $2$, we obtain a metric of type $\mathcal{A}_\alpha$, the point being that in the language of Lemma 2, the endomorphism $H$ is just the identity and that $H$ is preserved, when employing Lemma 2. Conversely any $\mathcal{A}_\alpha$ metric is trivially of type $\mathcal{A}$, so we are done.
\end{proof}
\begin{lemma}
 Types $\mathcal{A}_\alpha$ and $\mathcal{A}_\beta$  coincide.
\end{lemma}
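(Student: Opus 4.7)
The direction $\mathcal{A}_\beta \subset \mathcal{A}_\alpha$ is immediate, since any Brinkmann metric is already an Alekseevsky metric with $b=0$, $q=0$, and $h=I$.  The substance is the reverse inclusion: given an arbitrary metric of type $\mathcal{A}_\alpha$ with data $(b,p,q)$, construct a dilation-preserving isometry to a Brinkmann metric.  The plan is to use Lemmas 3, 4 and 5 as successive ``gauges'' to kill $q$ and then $b$ while keeping $h=I$.

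The sequence of transformations I would apply is as follows.  First, use Lemma 5 with $L(u)$ the solution of the linear matrix ODE $L'=(q/2)L$ with $L(u_0)=I$ at some $u_0\in\mathbb U$.  Since $q$ is skew, $L$ takes values in the orthogonal group of $(\mathbb X,T)$, so $H=L^TL=I$, and the formula of Lemma 5 yields $Q=L^T(q-q)L=0$ and a new $P=L^T(p-q^2/4)L$ that is still symmetric (because $q^2$ is symmetric); this step does not disturb $b$.  Second, apply Lemma 4 with $f(u)=-\int^u b(s)\,ds$, which kills $b$ at the cost of rescaling $h$ to $e^fI$.  Third, apply Lemma 5 once more, this time with the scalar $L=e^{-f/2}I$, to restore $h=I$; a brief computation shows that this step introduces a purely symmetric (hence non-skew) contribution $bI$ to $Q$, but leaves $b$ itself equal to zero.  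Fourth, invoke Lemma 3 on the resulting one-form to absorb the symmetric piece of $Q$ into a redefinition of $v$; because $b$ is now zero, Lemma 3's correction $-4^{-1}bq'_\odot$ to the new $P$ vanishes, so the final $p$ remains symmetric and the final $q$ equals $(bI)_\wedge=0$.  The end result is $b=0$, $q=0$, $h=I$, with symmetric $p$, i.e.\ a Brinkmann metric.

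The main technical point—not really an obstacle, but the thing one must pay attention to—is the combinatorics of the four steps: one must ensure that each new normalization does not undo an earlier one.  Specifically, the conformal rescaling must be performed \emph{before} the final restoration of $h=I$ in step three, so that the symmetric $Q$-contribution generated there can be cleared by Lemma 3 without reviving a nontrivial $b$.  Global existence of the orthogonal $L$ in step one is guaranteed by standard linear ODE theory on the connected interval $\mathbb U$, and each of Lemmas 3, 4, 5 has already been seen to preserve both the standard dilation group and the central null geodesic $v=0,\ x=0$, so the composition inherits these properties automatically.
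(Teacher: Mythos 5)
Your first and last steps are sound and essentially match the paper: the orthogonal gauge $L'=\tfrac12 qL$ (legitimate because $q$ is skew, so $L$ stays orthogonal and the change $x=L(u)X$ is a dilation-preserving isometric coordinate change), and the final quadratic shift of $v$ absorbing a symmetric $Q$. The gap is in the middle. Lemma 4 is not a coordinate change: it asserts that $(\mathbb M, e^{f(u)}G)$ again belongs to $\mathcal A$, i.e.\ it is a genuine conformal rescaling of the metric, not an isometry of $(\mathbb M,G)$ onto anything. Your steps two and three therefore replace the metric by $e^{f(u)}G$ (merely rewritten in new coordinates by the subsequent applications of Lemma 5 and the symmetrization lemma, which, being coordinate changes, cannot undo the factor). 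What your chain produces is a Brinkmann metric isometric to $e^{f(u)}G$, so you have proved only that every $\mathcal A_\alpha$ metric is \emph{conformal} to a Brinkmann metric. But ``coincide'' is defined in the paper to mean the existence of an isometry preserving the standard dilation groups, and that is what Theorem \ref{PlaneWavesSame} requires. The two notions genuinely differ here: rescaling even the flat plane wave $2\,du\,dv - dx^Tdx$ by $e^{f(u)}$ yields a metric whose Ricci tensor is $-n\bigl(\tfrac12 f''-\tfrac14 f'^2\bigr)\,du\otimes du$, nonzero for generic $f$, hence not isometric to the original; so the conformal factor cannot simply be ignored.

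The repair is to eliminate $b$ by an honest diffeomorphism instead of a rescaling: set $v=e^{f(u)}V$ with $2f'=b$ and simultaneously reparametrize $u$ by $dU=e^{f(u)}du$ (the combination of Lemma 6 with a $v$-substitution linear in $v$). Then $du\,(2\,dv-bv\,du)=2\,dU\,dV$, the term $-dx^Tdx$ is untouched so $h=I$ survives, and $p,q$ are only rescaled by powers of $(U')^{-1}$, preserving their symmetry types; the substitution is linear in $v$ and leaves $x$ alone, so the standard dilation and the central null geodesic are preserved. This is exactly the paper's route. After it (whether performed before or after your orthogonal gauge—order is immaterial, since the reparametrization keeps $q=0$ if it already vanishes), your remaining steps complete the proof.
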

\begin{proof}Begin with  a general $\mathcal{A}_\alpha$  metric:
\[  G = \alpha\, du  - dx^Tdx, \quad \alpha = 2\,dv - b(u)v\,du - x^Tq(u)dx + x^Tp(u)x\, du.\]
First we note first  that the terms $du(2dv - b(u)\,du)$ in $G$ can be reduced to just $2\,du\,dv$ by the replacements:
\[ v = e^{f(u)} V,  \quad   u = u(U), \quad  f'(u) = 2^{-1} b(u),   \quad dU = e^{f(u)}du. \]
Then the metric becomes:
\[ G = \beta\, dU - dx^Tdx,  \quad  \beta = 2\,dV - x^TQ(U)\,dx + x^TP(U)x\,dU, \]
\[  P(U) = p(u) (U')^{-2}, \quad Q(U) = q(u) (U')^{-1}.\]
In particular, $P$ is symmetric, because $p$ is symmetric, whereas $q$ is skew, because $q$ is skew.
Dropping capital letters, we have shown that,  without lose of generality,  we may write any  $\mathcal{A}_\alpha$ metric in the form:
\[  G_\alpha(h) = \alpha\, du  - dx^Tdx, \quad\alpha = 2\,dv - x^Tq(u)dx + x^Tp(u)x\, du.\]
Here $p$ is symmetric, whereas $q$ is skew.
Next we transform the variable $x$, writing $x = L(u)(X)$, with $L(u)$ an orthogonal transformation, so $L^T(u)L(u) = I$, for each $u \in \mathbb{U}$ and $L$ is smooth on $\mathbb{U}$.   Also we put $L' = SL$, where $S$ is skew.  Using Lemma 2, since $h = H = I$, the transformed $q$ is:
\[  Q = L^T(q  - 2Sh )L.\]
First we want to choose $S$, such that $Q$ is symmetric.   So we need to solve for a skew endomorphism $S(u)$ the equation:
\[ S(u) h(u) + h(u)S(u) =  q(u).\]
After conjugating with a suitable smooth orthogonal transformation, we may assume, locally at least in $u$,  that $h$ is a diagonal matrix with positive real-valued entries $h_{ii}(u) = \lambda_i(u)$.  Then the equation $Sh + hS = q$ written out in terms of components is:
\[ (\lambda_i + \lambda_j)S_{ij} = q_{ij}.\]
Since $q_{ij}$ is skew, and since $\lambda_i + \lambda_j > 0$, for any $i$ and $j$, there is a unique skew local smooth solution for $S(u)$.  By uniqueness, the local solutions patch to give a unique global skew solution,  $S(u)$.  Then after integrating to get the orthogonal transformation $L(u)$, we end with the metric of the following form:
\[ G = \beta du - dX^T dX, \]
\[ \beta = 2\,dv - X^TQ(u)dX + X^TP(u)X du.\]
Here $Q(u)$ is symmetric and smooth and, without loss of generality $P$ is also symmetric.  Finally we put $v = V + 4^{-1}X^T Q(u)X$, giving:
\[ \beta = 2\, dV + X^T \tilde{P}(u)X du, \quad \tilde{P} = P + 4^{-1}Q'.\]
Since $\tilde{P}$ is symmetric, we obtain a standard Brinkmann metric:
\[ G = 2 \,du\,dv + X^T \tilde{P}(u)X \,du^2 - dX^T dX.\] 
So we are done, the converse being trivial,  as before,  and the Lemma is proved.\end{proof}
Note that the Brinkmann metric is generally on a manifold $\tilde{\mathbb{U}} \times \mathbb{R} \times \mathbb{X}$, rather than the original $\mathbb{U} \times \mathbb{R} \times \mathbb{X}$, since, during the proof of Lemma 8, the original $u$ coordinate was replaced by an appropriate function of $u$.

We complete the proof of the Theorem with a final Lemma:
\begin{lemma}\label{EveryBrinkmannRosen}
  Each metric of type $\mathcal{A}_\rho$ is a metric of type $\mathcal{A}_\beta$.  Conversely each metric of type $\mathcal{A}_\beta$,  at least locally in the $u$-variable, is a metric of type $\mathcal{A}_\rho$.
\end{lemma}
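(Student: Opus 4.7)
The plan is to establish the two directions by quite different techniques: the forward direction (Rosen implies Brinkmann) drops out of the preceding two lemmas, while the converse requires a genuine local construction via a matrix ODE.

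For the forward direction, I would observe that a regular Rosen metric $G_\rho(h) = 2\,du\,dv - dx^T h(u)\,dx$ is simply an $\mathcal A$ metric with $b=0$, $p=0$, $q=0$ but with non-trivial positive-definite symmetric $h$. By Lemma 7 it is isometric, via a dilation- and central-geodesic-preserving diffeomorphism, to an $\mathcal A_\alpha$ metric; by Lemma 8 that metric is in turn isometric to an $\mathcal A_\beta$ metric. Composing yields an isometry of the Rosen metric with a Brinkmann metric. If one prefers an explicit transformation, one takes $X = L(u) x$ with $L(u) = h(u)^{1/2}$ the smooth positive symmetric square root, and then performs a quadratic $v$-shift to absorb the resulting cross terms.

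For the converse, given $G_\beta(p) = 2\,du\,dv + x^T p(u) x\,du^2 - dx^T dx$, I would look for a change of coordinates of the form $x = L(u) X$, $v = V + f(u, X)$, with $L$ smooth endomorphism-valued and $f$ smooth scalar-valued, both to be determined. Substituting and collecting terms by their $dX$/$du$ type, the metric takes Rosen form $2\,du\,dV - dX^T h(u)\,dX$ if and only if three conditions hold: (i) $h = L^T L$; (ii) the $dX\,du$ cross terms cancel, requiring $f_{X^i} = (L^T L' X)_i$, which is integrable precisely when $L^T L'$ is symmetric, in which case $f = \tfrac{1}{2}\,X^T L^T L' X$; and (iii) the remaining $du^2$ terms cancel, which after using (ii) reduces to $L^T(L'' + pL) = 0$, equivalently the Jacobi equation $L'' + p(u)\, L = 0$.

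I would therefore solve this linear second-order matrix ODE on $\mathbb U$ with initial data $L(u_0) = I$ and $L'(u_0) = 0$ at a chosen base point $u_0 \in \mathbb U$. At $u_0$ one has $L^T L' - (L')^T L = 0$, and differentiating this expression and using $p = p^T$ yields $(L^T L' - (L')^T L)' = 0$, so the symmetry of $L^T L'$ is preserved for all $u$. By continuity $L$ is invertible on some open neighborhood $\mathbb U_0$ of $u_0$, on which $h = L^T L$ is smooth, symmetric, and positive definite, giving the required local Rosen form. The main obstacle, and the reason this direction is genuinely only local, is that $L$ may fail to be invertible at isolated points of $\mathbb U \setminus \mathbb U_0$; at such points the candidate $h = L^T L$ is merely positive semidefinite. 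This is precisely the phenomenon motivating the singular Rosen metrics introduced earlier, and is analyzed globally in Theorems \ref{AlekseevskyTheoremNew} and \ref{RosenUniverse}.
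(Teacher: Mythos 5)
Your proposal is correct and takes essentially the same route as the paper: the forward direction by composing the earlier coincidence lemmas, and the converse via the substitution $x = L(u)X$ together with a quadratic shift in $v$, reducing everything to the Jacobi equation $L'' + pL = 0$ solved with $L(u_0)=I$, $L'(u_0)=0$ and invertibility of $L$ near $u_0$ (the paper packages the $v$-shift through $S=L'L^{-1}$ and the Sachs equation, but it is the same computation). Your explicit Wronskian argument that $L^TL' - (L')^TL$ is conserved, hence $S$ is symmetric, supplies a step the paper leaves implicit, and your factor $\tfrac12$ in the $v$-shift is the correct normalization.
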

\begin{proof} Each spacetime of type $\mathcal{A}_\rho$ is trivially a metric of type $\mathcal{A}$ and each  such is a  metric of type $\mathcal{A}_\beta$, by our earlier Lemmas, so the first statement of the Lemma follows immediately.  The proof of the second statement is more involved.  

So let a Brinkmann metric be given:
\[ G  =  2\,du\,dv + x^Tp(u)x \,du^2 - dx^T dx. \]
Here the smooth endomorphism $p$ of $\mathbb{X}$ is symmetric. We again put $x = L(u)(X)$, where $L$ is smooth in $u$ and invertible.  Also put $L'(u) = S(u)L(u)$, with the endomorphism $S(u)$ of $\mathbb{X}$ smooth on $\mathbb{U}$ and symmetric.  Also put $h(u) = L^T(u)L(u)$, so $h(u)$ is smooth symmetric and positive definite.  Then the metric becomes:
\begin{align*} 
 G  &=  2\,du\,dv + x^T p(u)x\,du^2 - dx^T dx \\
    &= 2\, du\, dv + x^T\left(p - S^2\right)x\,du^2  - 2x^TSL\,dX\,du -  dX^Th\,dX.
\end{align*}
Now put $v = V + x^TS x$.  Then we get:
\[  G = 2\,du\,dV + x^T(S'  + S^2 + p)x\,du^2 -  dX^Th\,dX\]
Finally we need $S$ to obey the Sachs equation:
\[ 0 =   S' + S^2 + p.\]
Then we get precisely the required form of a Rosen metric:
\[ G = 2\, du\,dV -  dX^Th\,dX.\]
So, given $p$,  we need to solve the pair of equations, for smooth endomorphisms $S$ and $L$, with $S$ symmetric and $L$ invertible:
\begin{align*}
  0 &=   S' + S^2 + p\\
  L' &= SL.
\end{align*}
Differentiating the second equation we get the linear second-order homogeneous differential equation:
\[ L'' + pL = 0.\]
To solve, we first solve the equation $L'' + pL = 0$, with initial conditions at some basepoint  $u = u_0$, such that $L(u_0)$ is invertible and $L'(u_0) = S_0L(u_0)$ with $S_0$ symmetric. For example, it suffices to take $L(u_0) = I$,  the identity automorphism of $\mathbb{X}$ and  $L'(u_0) = 0$, with $S_0 = 0$.   Given such initial conditions, it is standard that there is then a smooth solution for $L(u)$, defined for all $u \in \mathbb{U}$,  which obeys the desired initial conditions.

There is then a discrete collection of points, $\mathbb{S} \subset \mathbb{U}$, empty, finite, or countably infinite, which we call the degeneracy set of  the solution $L(u)$,  such that $\det(L(u)) \ne 0$, for all $u$ in the  open set $ \mathbb{U}- \mathbb{S}$, whereas $L(s)$ is not invertible at any $s \in \mathbb{S}$.    Finally the endomorphism $S$ is defined on the set $\mathbb{U} - \mathbb{S}$  by the formula $S(u) = L'(u)L^{-1}(u)$, for any $u \in \mathbb{U} - \mathbb{S}$.    Then $S$ is symmetric and all the required equations are satisfied on the set $\mathbb{U} - \mathbb{S}$.  
\end{proof}
The last part of the proof of Theorem 3 also gives:
\begin{theorem} Each plane wave gives rise naturally to a Rosen universe.
\end{theorem}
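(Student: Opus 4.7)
The plan is to combine Theorem \ref{PlaneWavesSame} with the construction already used inside the proof of Lemma \ref{EveryBrinkmannRosen}, and then to verify the conditions of Definition \ref{RosenUniverseDefinition} by hand. By Theorem \ref{PlaneWavesSame}, any plane wave $(M, G, \mathcal{D})$ is isomorphic, as a Penrose limit, to a Brinkmann spacetime $(\tilde{\mathbb{U}} \times \mathbb{R} \times \mathbb{X}, G_\beta(p), \mathcal{D}_\beta)$ with $p = p^T$ smooth, so it suffices to treat this case. Fix $u_0 \in \tilde{\mathbb{U}}$ and, as in Lemma \ref{EveryBrinkmannRosen}, solve the linear second-order homogeneous equation $L'' + p(u)L = 0$ with $L(u_0) = I$, $L'(u_0) = 0$. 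Standard ODE theory produces a globally defined smooth $L: \tilde{\mathbb{U}} \to \operatorname{End}(\mathbb{X})$. The Wronskian $L^T L' - (L')^T L$ is $u$-constant (differentiate, using $p = p^T$) and vanishes at $u_0$, so $L^T L'$ is symmetric throughout; in particular $h := L^T L$ is smooth, symmetric, and positive semidefinite, with the useful identity $h' = 2 L^T L'$.

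Next, set $\mathbb{S} := \{u \in \tilde{\mathbb{U}} : \det L(u) = 0\}$. I will argue that $\mathbb{S}$ is discrete: if $u_* \in \mathbb{S}$ and $0 \ne v \in \ker L(u_*)$, then $L'(u_*)v \ne 0$, for otherwise uniqueness for the linear second-order ODE would force $L(u)v \equiv 0$, contradicting $L(u_0) = I$. Choosing a basis of $\mathbb{X}$ adapted to $\ker L(u_*)$ and Taylor expanding then shows $\det L$ vanishes to finite order at $u_*$, so the zero set has no accumulation point in $\tilde{\mathbb{U}}$. Hence $h(u)$ is positive definite on $\tilde{\mathbb{U}} - \mathbb{S}$ and $(\tilde{\mathbb{U}} \times \mathbb{R} \times \mathbb{X}, G_\rho(h), \mathcal{D}_\rho)$ is a singular Rosen plane wave in the sense already defined. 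Define the candidate map $\phi: \mathbb{M}_\rho \to M$ by
\[
\phi(u, V, X) := \bigl(u,\; V + \tfrac12\, X^T h'(u) X,\; L(u) X\bigr),
\]
which is smooth globally, including across the bad fibres over $\mathbb{S}$, because both $L$ and $h'$ are, and which manifestly commutes with the standard dilation.

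Finally, on $\mathbb{M}_\rho - \mathbb{S}$ the fibre map $X \mapsto L(u)X$ is a linear isomorphism, so $\phi$ restricts to a diffeomorphism onto the dense open subset $(\tilde{\mathbb{U}} - \mathbb{S}) \times \mathbb{R} \times \mathbb{X}$ of $M$; rewriting the computation of Lemma \ref{EveryBrinkmannRosen} with $S = L' L^{-1}$ and $L^T S L = \tfrac12 h'$ shows $\phi^* G_\beta(p) = G_\rho(h)$ there. These properties are precisely the data required by Definition \ref{RosenUniverseDefinition}. The main technical obstacle is the discreteness of $\mathbb{S}$: because $p$ is only smooth, analyticity arguments are unavailable, and one must lean on the linear ODE structure (uniqueness together with an adapted basis) to control the order of vanishing of $\det L$ at each singular point. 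A closely related subtlety is that although the Sachs endomorphism $S$ itself blows up on $\mathbb{S}$, the combination $L^T S L$ appearing in the change of variable equals $\tfrac12 h'$ and is therefore smooth everywhere; this is what lets $\phi$ extend smoothly across $\mathbb{S}$ and makes the global definition possible.
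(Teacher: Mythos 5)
Your construction is essentially the paper's: reduce to Brinkmann form via Theorem \ref{PlaneWavesSame}, solve $L''+pL=0$ with Lagrangian initial data, and take $h=L^TL$ as the singular Rosen metric, with $\mathbb S=\{\det L=0\}$. Where you genuinely diverge is at the singular set: the paper disposes of each $u_1\in\mathbb S$ by ``removability'', re-running the Brinkmann-to-Rosen transition with initial condition $L(u_1)=I$ so that the resulting Rosen chart is regular near $u_1$, whereas you instead exhibit one global map $\phi$ and verify the two bullets of Definition \ref{RosenUniverseDefinition} directly, using that $L^TSL=\tfrac12 h'$ stays smooth across $\mathbb S$ even though $S$ blows up. That observation is correct (it is exactly what makes the paper's $\sin u$ example work), and you also attempt a proof of the discreteness of $\mathbb S$, which the paper merely asserts inside Lemma \ref{EveryBrinkmannRosen}; in that sense your route is closer to the letter of Definition \ref{RosenUniverseDefinition}.

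Two steps need repair. First, your $\phi$ carries a factor-of-two error: the change of variables is $v=V+\tfrac12x^TSx$ with $x=LX$, and since $X^TL^TSLX=\tfrac12X^Th'X$, the map must be $\phi(u,V,X)=\bigl(u,\;V+\tfrac14X^Th'(u)X,\;L(u)X\bigr)$. With your coefficient $\tfrac12$ (i.e.\ $v=V+x^TSx$) the pullback of $G_\beta(p)$ retains uncancelled $x^TSL\,dX\,du$ and $du^2$ terms, so the claimed isometry fails as written; compare the paper's example, where the correction is $2^{-1}x^2\cos u\sin u=\tfrac14h'\,x^2$ for $h=\sin^2u$. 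Second, your discreteness argument is incomplete as sketched: injectivity of $L'(u_*)$ on $\ker L(u_*)$ plus Taylor expansion gives $\det L(u)=c\,(u-u_*)^k+O\bigl((u-u_*)^{k+1}\bigr)$ with $k=\dim\ker L(u_*)$, but nonvanishing of $c$ requires $L'(u_*)\bigl(\ker L(u_*)\bigr)$ to be complementary to $\operatorname{im}L(u_*)$, which does not follow from uniqueness alone. This is where the Lagrangian identity you established must be invoked: for $v\in\ker L(u_*)$ and arbitrary $w$, symmetry of $L^TL'$ gives $\bigl(L'(u_*)v\bigr)^TL(u_*)w=\bigl(L(u_*)v\bigr)^TL'(u_*)w=0$, so $L'(u_*)\bigl(\ker L(u_*)\bigr)$ is orthogonal, hence complementary, to $\operatorname{im}L(u_*)$ and $c\ne0$. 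With these two fixes your argument is complete.
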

\begin{proof}  Theorem 3 shows that any plane wave may be put in Brinkmann form.    Then the last part of Theorem 3 constructs a Rosen universe with metric $2\, du\,dV -  dX^TL^T(u)L(u)dX$, with singularities at the isolated points where the endomorphism  $L(u)$ has vanishing determinant, isometric to the Brinkmann plane wave away from the singularities.   Each such singularity is removable, because if $\det(L(u_1)) = 0$, for some $u_1 \in \mathbb{U}$, we can simply redo the transition from the given Brinkmann  to Rosen, but now taking initial conditions at $u = u_1$, such that $L(u_1)$ is the identity. The resulting Rosen metric is guaranteed to be non-singular in an open sub-interval of $\mathbb{U}$ containing the point $u_1$.
\end{proof}

\section{Jacobi and Sachs equations}
In this section, we describe the Jacobi and Sachs equations of a spacetime, and derive their basic relations and properties. This system of equations governs the coordinate change from Brinkmann to Rosen, as we shall discuss in \S\ref{AlekseevskySection}. The section concludes with the cases of Brinkmann and Rosen plane waves, and proves a result (equivalent to one) of Alekseevsky \cite{Alekseevsky} characterizing plane waves that we rely on extensively in the sequel.

Let $(M,g)$ be a smooth spacetime of dimension $n+2$.  By $T^*M'$ denote the cotangent bundle of $M$ with the zero section deleted.  This carries a canonical one-form, which we denote by $\theta$, defined in local coordinates by $\theta=p_idx^i$, where $p_i$ are the fibre coordinates dual to $\partial/\partial x^i$.  Thus we have $\theta_\alpha(v) = \alpha(\pi_*v)$ where $\alpha$ is a point of $T^*M'$, $v$ is a tangent vector to $T^*M'$ at $\alpha$, and $\pi:T^*M'\to M$ is the bundle projection.  The two-form $\omega=d\theta$ is a symplectic form on $T^*M'$.  It induces a Poisson structure: for smooth complex-valued functions $f,g$ on $T^*M'$,
$$\{f,g\} + \omega^{-1}(df,dg) = 0.$$
In terms of local coordinates, then
$$\{f,g\} = \sum_i \left(\frac{\partial f}{\partial x^i}\frac{\partial g}{\partial p_i} - \frac{\partial f}{\partial p_i}\frac{\partial g}{\partial x^i}\right).$$
Since $\{f,g\}$ is linear with respect to constants, and satisfies the Leibniz rule in each argument, it follows that the operator $\{f,-\}:C^\infty(T^*M')\to C^\infty(T^*M')$ is induced by a vector field, which we call the {\em symplectic gradient}, and write
$$\nabla^\omega f := \{f,-\} = \frac{\partial f}{\partial x^i}\frac{\partial}{\partial p_i} - \frac{\partial f}{\partial p_i}\frac{\partial}{\partial x^i}.$$

Let $E$ be the vector field generating the scaling group $\epsilon_t:p\mapsto e^tp$ on sections $p$ of $T^*M$.  In local coordinates, $E=p_i\partial/\partial  p_i$.  Then $\mathscr L_E\theta=\theta$.

Let $H:T^*M'\to\mathbb R$ be the geodesic Hamiltonian, so $H$ is homogeneous of degree two and its restriction to each fibre of the cotangent bundle is a quadratic form of signature $(1,n+1)$, which varies smoothly from point to point.  In local coordinates, $2H=g^{ij}p_ip_j$.  The null cone bundle $\mathcal N\subset T^*M'$ is the zero set of $H$, a submanifold of dimension $2n+3$.  Since $H$ is homogeneous of degree two, $E(H)=2H$.  Therefore $E$ is tangent to $\mathcal N$.  Also, because $\mathcal N$ is an energy surface of $H$, the symplectic gradient $\nabla^\omega H$ is tangent to $\mathcal N$.  Moreover, we have $[E,\nabla^\omega H]=\nabla^\omega H$, so the Lie algebra generated by $E,\nabla^\omega H$ is the non-abelian two-dimensional nilpotent Lie algebra.

The integral curves of $\nabla^\omega H$ are the geodesics.  The integral curves tangent to $\mathcal N$ are the null geodesics.  The ``space of null geodesics'' is the set of integral curves $\nabla^\omega H$ in $\mathcal N$.  (We shall assume that this flow acts properly so that the space of null geodesics is a smooth manifold.  This is always true locally on $M$.)  The space of {\em unparameterized} null geodesics is the space of integral manifolds of the distribution $\mathfrak A$ spanned by $E$ and $\nabla^\omega H$.  Let $\mathcal{GN}$ be the space of null geodesics, and $\mathcal{PN}$ the space of unparameterized null geodesics.  So $\mathcal{GN}$ is the quotient of $\mathcal N$ by $\nabla^\omega H$, a smooth manifold of dimension $2n+2$, and $\mathcal{PN}$ is the quotient of $\mathcal N$ by $\mathfrak A$, a smooth manifold of dimension $2n+1$.  Then $\mathcal{GN}$ is a $\mathbb R_+$ bundle over $\mathcal{PN}$, whose fibres are the orbits of $E$.

The symplectic form $\omega$ descends to a symplectic form on $\mathcal{GN}$.  The canonical one-form $\theta$ descends to a contact structure $\mathbb J=\ker\theta$ on $\mathcal{PN}$; so $\mathbb J\subset T(\mathcal{PN})$ is a codimension one subbundle (everywhere non-integrable as a distribution of vector fields).  The form $\omega$ is naturally the symplectization of the contact structure $\mathbb J$.

\subsubsection*{Summary}
Our basic objects are:
\begin{itemize}
\item $(M,g)$ is a spacetime of dimension $n+2$, where $n\ge 1$;
\item $\mathcal N$ is the null cone bundle of $M$, which is a quadric in the cotangent bundle, of dimension $2n+3$;
\item $\mathcal{PN}$ is the space of (unparameterized) null geodesics, of dimension $2n+1$;
\item the vector field $E$ is the homogeneity operator, and the symplectic gradient $\nabla^\omega H$ is the Hamiltonian vector field;
\item together, $E,\nabla^\omega H$ span a two-dimensional Lie algebra $\mathfrak A$, such that $\mathcal{PN}$ is the quotient of $\mathcal N$ by $\mathfrak A$;
\item finally, $\mathbb J$ is a (contact) distribution of $2n$ planes on $\mathcal{PN}$, and $\mathbb J$ carries a symplectic form $\omega$ up to scale.
\end{itemize}

\subsection{Jacobi fields}

A Jacobi field is a tangent vector field to $\mathcal{PN}$.  That is, a vector field $X$ on $\mathcal N$, modulo $\mathfrak A$, that normalizes the distribution $\mathfrak A$.  We shall normally be interested in tangent vectors to $\mathcal{PN}$, that is, Jacobi fields evaluated at a point $\gamma\in\mathcal{PN}$, or equivalently vector fields along a null geodesic $\gamma$ (regarded as an integral surface in $\mathcal N$) modulo $\mathfrak A$ that normalize the distribution $\mathfrak A$ along $\gamma$.

We now relate these notions to the usual constructions in differential geometry.  We introduce abstract indices so that $\theta=p_a\theta^a$, $\omega=D\theta=Dp_a\wedge\theta^a$ where $Dp_a$ is the Levi-Civita connection.  Then $\nabla^\omega H$ is $g^{ab}p_a\nabla_b$.  A Jacobi field $J$ as we have defined it is a vector field on $\mathcal N$.  Therefore it can be written (in indices) as
\begin{equation}\label{JacobiAbstractIndex}
  J = \alpha^a\nabla_a + \beta_a\partial^a
\end{equation}
where we put $\partial^a=\partial/\partial p_a$ the generator of the translations up the fibre, and $\nabla_a$ is the horizontal distribution of the Levi-Civita connection.  Together these operators obey the structure
$$[\nabla_a,\nabla_b]={R_{abc}}^dp_d\partial^c,\quad [\nabla_a,\partial^b]=0,\quad [\partial^a,\partial^b]=0$$
The variables $p_a$ satisfy $[p_a,p_b]=0$, and
$$[\partial^a,p_b]=\delta^a_b, \quad [\nabla_a,p_b]=0.$$
For $J$ to be a Jacobi field, we need $J$ to commute with $E$ and $[p^b\nabla_b,J]$ to vanish modulo $E=p_a\partial^a$ and $\nabla^\omega H = p^a\nabla_a$.  We can in fact make this commutator equal to zero:
\begin{align*}
  [p^a\nabla_a,J] &= [p^a\nabla_a,\alpha^b\nabla_b + \beta_b\partial^b] \\
                  &= (p^a\nabla_a\alpha^b)\nabla_b + p^a\alpha^b[\nabla_a,\nabla_b] + (p^a\nabla_a\beta_b)\partial^b - g^{ab}\beta_a\nabla_b\\
                  &= (\dot\alpha^b - \beta^b)\nabla_b + (p^a\alpha^b{R_{abc}}^dp_d + \dot\beta_c)\partial^c = 0.
\end{align*}
(The operator $p^a\nabla_a$ is differentiation along the Hamiltonian flow of $H$, so we denote it with a dot.)
We have proved:
\begin{lemma}
  Let $J = J_H + J_V$ be the decomposition of a Jacobi field along $\gamma$ into horizontal and vertical components using the Levi-Civita connection.  Then $\dot J_H=J_V$ and $\dot J_V = PJ_H$ where $PX = R(\dot\gamma, X)\dot\gamma$ is the tidal curvature operator.  In particular, we have the {\em Jacobi equation}:
  $$\ddot J_H = PJ_H.$$
\end{lemma}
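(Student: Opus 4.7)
The plan is to harvest the lemma's content almost directly from the commutator computation immediately preceding the statement, where the author has already evaluated $[p^a\nabla_a, J]$ for an arbitrary $J=\alpha^a\nabla_a+\beta_a\partial^a$ along $\gamma$ and set it equal to zero. The identification of $J_H$ and $J_V$ with the $\alpha^a\nabla_a$ and $\beta_a\partial^a$ pieces is precisely what the horizontal/vertical splitting of the Levi-Civita connection amounts to on $T^*M'$: horizontal lifts are built from $\nabla_a$, vertical translations from $\partial^a$, and every vector on $\mathcal N$ along $\gamma$ decomposes uniquely in this basis.

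With that identification in place, I would read off the two tensorial equations that appear as coefficients of $\nabla_b$ and $\partial^c$ in the vanishing commutator. The $\nabla_b$ coefficient gives $\dot\alpha^b=\beta^b$; under the isomorphism between horizontal vectors and base-space vectors afforded by the connection, this is exactly $\dot J_H=J_V$. The $\partial^c$ coefficient gives $\dot\beta_c=-R_{abc}{}^dp^a\alpha^bp_d$. Combining these two first-order equations yields $\ddot J_H=PJ_H$ once the right-hand side of the second is recognised as the tidal operator applied to $J_H$.

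So the one real (but routine) step is to verify that
\[
-R_{abc}{}^d p^a\alpha^b p_d \;=\; \bigl(R(\dot\gamma, J_H)\dot\gamma\bigr)_c
\]
along the Hamiltonian flow. This follows from two inputs: first, the identity $\dot\gamma^a = p^a$ on $\mathcal N$, which is just the fact that integral curves of $\nabla^\omega H=p^a\nabla_a$ project to geodesics with canonical momentum $p_a = g_{ab}\dot\gamma^b$; second, the curvature convention fixed in the excerpt by $[\nabla_a,\nabla_b]=R_{abc}{}^dp_d\partial^c$, which determines the sign in $PX=R(\dot\gamma,X)\dot\gamma$ unambiguously. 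Substituting $\dot\gamma^a$ for $p^a$ and $p_d$ on the left and using the symmetries of $R$ reproduces the tidal operator on the right.

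The only genuine obstacle is notational: making sure the index placement in $R_{abc}{}^d$ and the associated sign are consistent with the author's convention for the curvature bracket, and confirming that the isomorphism between horizontal vectors on $T^*M'$ and base-space vectors is being used compatibly on both sides of $\dot J_H=J_V$. Everything else is bookkeeping on the computation already displayed.
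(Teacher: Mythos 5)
Your proposal is correct and follows essentially the same route as the paper: the paper's proof is exactly the displayed commutator computation $[p^a\nabla_a,J]=(\dot\alpha^b-\beta^b)\nabla_b+(p^a\alpha^b{R_{abc}}^dp_d+\dot\beta_c)\partial^c=0$, from which the two first-order relations and hence $\ddot J_H=PJ_H$ are read off, with the horizontal/vertical identification and the sign bookkeeping against the convention $[\nabla_a,\nabla_b]={R_{abc}}^dp_d\partial^c$ handled just as you describe.
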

Note that the equation $\ddot J_H=PJ_H$ is compatible with the equivalence relation of $J\mod\mathfrak A$.

\subsection{Abreastness}

A Jacobi field is called {\em abreast} if it is in the contact distribution $\mathbb J$.  Let $\gamma\in\mathcal{PN}$ be a null geodesic, and $\mathbb J_\gamma$ the space of abreast Jacobi fields along $\gamma$.  Then $d\theta$ defines a symplectic structure up to scale on $\mathbb J_\gamma$.  Indeed, the transition between two sections of the bundle $\mathcal{GN}\to\mathcal{PN}$ is of the form $d(\lambda\theta) = \lambda d\theta + d\lambda\wedge\theta$, but the second term is zero on $\mathbb J_\gamma$.

The construction of $\mathcal{PN}$ is conformally invariant.  If $H$ is replaced by $\lambda H$ for $\lambda>0$, then $\nabla^\omega H$ becomes $\lambda\nabla^\omega H + H\nabla^\omega\lambda$, but the latter term vanishes on $\mathcal N$ because $H=0$ there.  Similarly, the definition of Jacobi fields and their abreastness is conformally invariant.  We distill these observations as follows:
\begin{lemma}\label{ConformalInvariance}
  The space $\mathbb J_\gamma$ of abreast Jacobi fields is conformally invariant.  Moreover, the symplectic form on $\mathbb J_\gamma$ defined by
  $$\omega_g(X,Y) = g(X,\dot Y) - g(\dot X,Y) $$
  is conformally invariant (up to an overall constant depending on the affine parameterization of $\gamma$).
\end{lemma}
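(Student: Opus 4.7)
The plan is to reduce both claims to intrinsic observations on the cotangent bundle, followed by a short direct conformal rescaling computation along the null geodesic. For the first claim, recall that the one-form $\theta$, the symplectic form $\omega = d\theta$, and the Euler field $E$ on $T^*M'$ are defined without reference to the metric; only $H$, and hence $\mathcal N$ and the direction of $\nabla^\omega H$, is affected by a conformal change. The observation just above the lemma shows that replacing $H$ by $\lambda H$ alters $\nabla^\omega H$ only by $H\nabla^\omega\lambda$, which vanishes on $\mathcal N$. Thus $\mathcal N$, the distribution $\mathfrak A$, the quotient $\mathcal{PN}$, and the contact distribution $\mathbb J = \ker\theta$ are all conformally invariant, so $\mathbb J_\gamma = T_\gamma\mathcal{PN} \cap \mathbb J$ is conformally invariant as well. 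In particular, the abreast condition on a Jacobi field $Y$, which on $M$ is exactly $g(\dot\gamma, Y) = 0$ (since on $\mathcal N$ the one-form $\theta$ is $\theta(Y) = p_a Y^a = g_{ab}\dot\gamma^b Y^a$), is conformally invariant.

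For the second claim, set $\tilde g = \Omega^2 g$ and $\Upsilon_a = \nabla_a \log\Omega$. A standard calculation shows that $\gamma$ is a null geodesic of $\tilde g$ with affinely parameterized tangent $\tilde{\dot\gamma} = c\,\Omega^{-2}\dot\gamma$, where $c>0$ is the affine reparameterization constant appearing in the statement; without loss of generality I work with this choice. The Levi-Civita connections are related by
\[ \tilde\nabla_a Y^b = \nabla_a Y^b + \Upsilon_a Y^b + \Upsilon_c Y^c \,\delta_a^b - g_{ad} Y^d g^{bc}\Upsilon_c. \]
Contracting with $\dot\gamma^a$ and using $g(\dot\gamma, Y) = 0$ kills the last term, so
\[ \tilde\nabla_{\dot\gamma} Y = \nabla_{\dot\gamma} Y + \Upsilon(\dot\gamma)\,Y + \Upsilon(Y)\,\dot\gamma. \]
Pairing with $X$ via $\tilde g = \Omega^2 g$ and inserting the rescaling $\tilde{\dot\gamma} = c\,\Omega^{-2}\dot\gamma$, the factors of $\Omega$ cancel and I obtain
\[ \tilde g(X, \tilde\nabla_{\tilde{\dot\gamma}} Y) = c\bigl[g(X, \nabla_{\dot\gamma} Y) + \Upsilon(\dot\gamma)\, g(X, Y) + \Upsilon(Y)\, g(X, \dot\gamma)\bigr]. \]
The last term vanishes by abreastness of $X$, and the middle term is symmetric in $X, Y$ and so cancels in the antisymmetrized expression $\omega_{\tilde g}(X,Y) = \tilde g(X, \tilde\nabla_{\tilde{\dot\gamma}}Y) - \tilde g(\tilde\nabla_{\tilde{\dot\gamma}} X, Y)$, leaving $\omega_{\tilde g}(X,Y) = c\,\omega_g(X,Y)$, precisely as claimed.

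The main obstacle is bookkeeping: one must simultaneously rescale the metric, the Levi-Civita connection, and the affine tangent to $\gamma$, and identify the correct places where abreastness enters. Both abreast conditions are needed---one on $Y$ to eliminate the trace correction in $\tilde\nabla - \nabla$, and one on $X$ to eliminate the $\Upsilon(Y)\dot\gamma$ cross term---and the residual $\Upsilon(\dot\gamma)\,g(X,Y)$ correction, which is not killed by abreastness, is symmetric in $X$ and $Y$ and therefore disappears on antisymmetrization. This symmetry is the small miracle that makes the conformal invariance go through.
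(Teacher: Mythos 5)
Your proof is correct and takes essentially the same route as the paper: the first claim via the conformal invariance of $\mathcal N$, $\mathfrak A$, and the contact distribution on $\mathcal{PN}$ (precisely the paper's ``general observations''), and the second via rescaling the connection and affine parameter, where the cross term $\Upsilon(Y)\dot\gamma$ dies against abreastness of $X$ and the residual $\Upsilon(\dot\gamma)\,g(X,Y)$ term cancels on antisymmetrization---this is the paper's computation $\widehat\omega(X,Y)=g(X,\dot Y+\dot\varphi Y)-g(\dot X+\dot\varphi X,Y)=\omega(X,Y)$, with the affine-reparameterization constant $c$ made explicit. The only omission is the paper's supplementary direct check (via the conformal transformation of the curvature) that abreast Jacobi fields remain Jacobi fields for the rescaled metric, which the paper itself flags as unnecessary given the contact-geometric argument you use.
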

\begin{proof}
The first part follows from our general observations, but we can also give a direct proof.  Let $\hat g=e^{2\varphi}g$,
$$\hat\nabla_XY = \nabla_XY + d\varphi(X)Y + d\varphi(Y)X - g(X,Y)\nabla\varphi.$$
Also, with
$$R(X,Y,Z,W) = g(R(X,Y)Z,W),$$
$$T(X,Y) = \nabla^2\varphi(X,Y) - X(\varphi)Y(\varphi) + \frac12 |d\varphi|^2g(X,Y) $$
and
$$(g\owedge T)(X,Y,Z,W) = g(X,Z)T(Y,W) + g(Y,W)T(X,Z) - g(X,W)T(Y,Z)-g(Y,Z)T(X,W). $$
$$\hat R = e^{2\varphi}\left( R + g\owedge T\right),$$
Recall that the tidal curvature is\footnote{These conventions are selected so that the tidal curvature has positive trace on a geodesic where the energy density is positive.
% M = Manifold(4, 'M', r'M' )
% U_sph  = M.open_subset('U_sph'  )
% spher.<u,v,x,y>   = U_sph.chart(r'u:(-oo,oo) v:(-oo,+oo) x:(-oo,oo) y:(-oo,oo)')

% g = M.lorentzian_metric('g')
% g[0,1] = 1
% g[0,0] = x^2+y^2
% g[2,2] = -1
% g[3,3] = -1

% Rc=g.ricci()
% Rc.display()
% R=g.ricci_scalar()
% print("R=",R.display())
% print("Ric=",Rc.display())

% Rm = g.riemann()
% ddu,ddv,ddx,ddy = spher.frame()
% tidal=g.contract(ddu).contract(Rm.contract(ddu))
% print(tidal.display())
}
$$P(X,Z) = R(\dot\gamma,X,\dot\gamma,Z).$$
Suppose that $g(X,\dot\gamma)=g(Y,\dot\gamma)=0$.  Then
\begin{align*}
  (g\owedge T)(X,\dot\gamma,Z,\dot\gamma)
  &= g(X,Z)T(\dot\gamma,\dot\gamma)\\
  &=g(X,Z)\left(\ddot\varphi - \dot\varphi^2\right).
\end{align*}
Suppose $J$ is an abreast Jacobi field, so $g(J,\dot\gamma)=0$, and 
$$\ddot J = (g^{-1}P)J$$
modulo multiples of $\dot\gamma$.

Let $u$ be an affine parameter along $\gamma$, so that the dot is differentiation with respect to $u$.  Define an affine parameter for $\gamma$ in the rescaled metric $\hat g$ has $dU=e^{2\varphi}du$.  Let a prime denote the covariant differentiation $\hat D/dU$.  So we have $\gamma'=e^{-2\varphi}\dot\gamma$, $\gamma'(e^{-2\varphi})=-2\dot\varphi\,e^{-2\varphi}$.

Defining $\widehat P(X,Z) = \widehat g(\widehat R(\gamma',X)\gamma',Z)$, we have
$$\widehat P(X,Z) = e^{-2\varphi}\left(P(X,Z) + g(X,Z)(\ddot\varphi -\dot\varphi^2)\right).$$
Thus we have the operator:
$$\hat g^{-1}\widehat P = e^{-4\varphi}(g^{-1}P + (\ddot\varphi - \dot\varphi^2)I) $$
We must show that $J'' = (\hat g^{-1}\hat P)J$ modulo $\gamma'$.  We have
$$J' = e^{-2\varphi}(\dot J + \dot\varphi J) $$
\begin{align*}
  J'' &=  e^{-4\varphi}( -2\dot\varphi(\dot J + \dot\varphi J) + \ddot J + \ddot\varphi J + 2\dot\varphi\dot J + \dot\varphi^2 J)\\
      &= e^{-4\varphi}(\ddot J + (\ddot\varphi - \dot\varphi^2)J)\\
      &= e^{-4\varphi}((g^{-1}P)J + (\ddot\varphi - \dot\varphi^2)J)\\
      &= (\hat g^{-1}\widehat P)J.
\end{align*}

We can now prove the second part, using the same formalism as the first part:
$$\widehat\omega(X,Y) = e^{2\varphi}[g(X,Y') - g(X',Y)] = g(X,\dot Y + \dot\varphi Y) - g(\dot X + \dot\varphi X) = \omega(X,Y).$$

\end{proof}

\subsection{The shear tensor}
Let $(\mathbb J,\omega)$ be the space of abreast Jacobi fields, a vector space of dimension $2n$ together with a symplectic form $\omega$ (defined up to an overall constant).  A linear map $L:\mathbb R^n\to\mathbb J$ is called a {\em Lagrangian matrix} if $L$ has rank $n$ and $\omega(L\otimes L)=0$.  Let $\mathcal L(\mathbb J)$ be the space of Lagrangian maps to $\mathbb J$.  The {\em Lagrangian Grassmannian} $\mathcal{LG}(\mathbb J) = \mathcal{L}(\mathbb J)/\op{GL}(n,\mathbb R)$ is the quotient of the Lagrangian maps under the right action of $\op{GL}(n)$.  We shall discuss the problem of how to construct the points of the Lagrangian Grassmannian, starting from the Jacobi equation.

Fix a null geodesic $\gamma$ in $M$, with affine parameter $u$ and differentiation denoted by a dot.  The Jacobi equation for a section of $\gamma^{-1}TM$ is
$$\ddot J(u) = P(u)J(u) $$
where $P(u) = R(\dot\gamma(u),-)\dot\gamma(u)$ is the tidal curvature operator (a linear endomorphism of $T_{\gamma(u)}M$ that is symmetric with respect to $g$).  Suppose $L$ is a Lagrangian matrix, so the columns of $L$ are $n$ Jacobi fields on $M$ that are mutually isotropic with respect to $\omega$, and such that $L\oplus\dot L$ has rank $n$.  Then we have, with obvious notation,
$$\ddot L(u) = P(u)L(u),$$
and, denoting the columns of $L$ by $L_i$,
\begin{equation}\label{LIsotropicExplicit}
  \omega(L_i,L_j) = g(L_i,\dot L_j) - g(L_j,\dot L_i) = 0.
\end{equation}
\begin{lemma}
  In a neighborhood of any point where $L(u)$ is an invertible linear map, the {\em shear operator}
  $$S = \dot L L^{-1} $$
  satisfies the {\em Sachs equation}
  \begin{equation}\label{SachsEquation}
    \dot S + S^2 - P = 0.
  \end{equation}
  The shear operator is symmetric with respect to $g$ (i.e., $g(S\otimes I)=g(I\otimes S)$).  Conversely, given a smooth symmetric solution to \eqref{SachsEquation} in any interval around $u=u_0$, there exists a Lagrangian matrix such that $\dot L = SL$.
\end{lemma}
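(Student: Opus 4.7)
The plan is to establish the three assertions---derivation of the Sachs equation, $g$-symmetry of $S$, and the converse construction---as three short direct computations, leveraging only the Jacobi equation $\ddot L = PL$, the matrix form of the Lagrangian condition \eqref{LIsotropicExplicit}, and standard linear ODE theory.

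First, on the open set where $L$ is invertible, differentiating $S = \dot L L^{-1}$ and using $\frac{d}{du}(L^{-1}) = -L^{-1}\dot L L^{-1}$ yields
$$\dot S = \ddot L\, L^{-1} - \dot L L^{-1} \dot L L^{-1} = P L L^{-1} - S^2 = P - S^2,$$
which is \eqref{SachsEquation}. For the $g$-symmetry, I would rewrite \eqref{LIsotropicExplicit} in matrix form as $L^T g \dot L - \dot L^T g L = 0$, i.e.\ $L^T g \dot L$ is symmetric. Substituting $\dot L = SL$ gives $L^T(gS)L$; its symmetry, together with the invertibility of $L$, forces $gS = S^T g$, which is the $g$-symmetry of $S$. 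One small subtlety is that the Lagrangian condition is a pointwise constraint built into the definition of a Lagrangian matrix, so I would note that it propagates along $\gamma$: differentiating $\omega(L_i,L_j) = g(L_i,\dot L_j) - g(\dot L_i,L_j)$ and applying the Jacobi equation gives $\tfrac{d}{du}\omega(L_i,L_j) = g(L_i,PL_j) - g(PL_i,L_j) = 0$ by the $g$-symmetry of the tidal curvature $P$ (itself a consequence of the usual Riemann tensor symmetries).

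For the converse, given a smooth $g$-symmetric $S$ solving \eqref{SachsEquation} on an interval about $u_0$, standard linear ODE theory produces a smooth $L$ with $\dot L = SL$ and any prescribed invertible initial datum $L(u_0)$; the Liouville formula $(\det L)' = (\tr S)(\det L)$ keeps $\det L$ nonzero on the interval. Differentiating $\dot L = SL$ and using \eqref{SachsEquation} gives $\ddot L = \dot S L + S\dot L = (P - S^2)L + S^2 L = PL$, the Jacobi equation; and the Lagrangian condition reduces to symmetry of $L^T g \dot L = L^T(gS)L$, which holds because $gS = S^T g$. I do not anticipate a serious obstacle; the only care needed is to work consistently in the $n$-dimensional screen space transverse to $\dot\gamma$, on which $g$ is positive definite and $P$ is $g$-symmetric, so that all matrix manipulations are well-posed.
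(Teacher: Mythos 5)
Your proposal is correct and follows essentially the same route as the paper: differentiate $S=\dot L L^{-1}$ and use $\ddot L = PL$ to get the Sachs equation, extract $g$-symmetry of $S$ from the isotropy relation $g(\dot Lx,Ly)=g(Lx,\dot Ly)$, and for the converse solve $\dot L = SL$ by linear ODE theory and check that the Jacobi equation and Lagrangian condition follow from the symmetry of $S$. Your added remarks (Liouville's formula for invertibility of $L$ on the whole interval, and the constancy of $\omega(L_i,L_j)$ along $\gamma$) are harmless refinements of the same argument.
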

\begin{proof}
  Taking a derivative of $S=\dot LL^{-1}$, we obtain at once
  $$\dot S = \ddot L L^{-1} - \dot L L^{-1} \dot L L^{-1} = P - S^2,$$
  as required.  The symmetry of $S$ follows from the relation of isotropy, for $x,y\in\mathbb R^n$:
  $$g(\dot Lx,Ly) = g(Lx,\dot Ly).$$
  With $X=Lx, Y=Ly$, this implies that $S$ is a symmetric endomorphism.
  
  For the converse, the equation $\dot L=SL$ admits a local solution with $L$ invertible (unique modulo $GL(n)$).  Differentiating this equation gives
  $$\ddot L = \dot SL + S\dot L = (-S^2+P)L + S^2L = PL.$$
  We now have a second order differential equation, whose solutions are uniquely determined from initial conditions $L(u_0),\dot L(u_0)$.  Moreover, $L$ is isotropic because near $u_0$ we have \eqref{LIsotropicExplicit}:
  $$\omega(L_i,L_j) = g(L_i,\dot L_j) - g(L_j,\dot L_i) = g(L_i,SL_j) - g(L_j,SL_i) = 0$$
  by the symmetry of $S$.  Since this vanishes locally, it must also vanish globally because of uniqueness of solutions to second order equations.
\end{proof}

  Let $\widehat g = e^{2\varphi}g$ be a conformally rescaled metric, and $L$ a Lagrangian matrix.  Then, in the notation of the proof of Lemma \ref{ConformalInvariance}, we have:
  $$\widehat S = L'L^{-1} = e^{-2\varphi}(\dot LL^{-1} + \dot\varphi).$$
  In particular:
  \begin{lemma}
    The trace-free parts of the $(2,0)$-tensors $gS$ and $\widehat g\widehat S$ are identical, and there is a unique conformal factor $e^{2\varphi}$ up to a multiplicative constant such that $\widehat S$ is trace-free.
  \end{lemma}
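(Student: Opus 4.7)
The plan is to exploit the formula $\widehat S = e^{-2\varphi}(S + \dot\varphi\, I)$ that was derived in the paragraph immediately preceding the lemma, and convert the question about the endomorphism $\widehat S$ into a one-line manipulation on the $(2,0)$-tensors $gS$ and $\widehat g\,\widehat S$. Because $\widehat g = e^{2\varphi} g$, the conformal exponentials cancel and the difference becomes a pure trace.

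First I would multiply through to obtain
$$\widehat g\,\widehat S \;=\; e^{2\varphi}g\cdot e^{-2\varphi}(S+\dot\varphi\, I) \;=\; gS + \dot\varphi\, g.$$
Thus $\widehat g\,\widehat S$ differs from $gS$ only by the term $\dot\varphi\, g$, which is a scalar multiple of a metric. Taking the trace with respect to $\widehat g$ and using $\tr_{\widehat g}\widehat g = n$ (where $n$ is the rank of the Lagrangian matrix $L$, so also the rank of the space on which $S$ acts), I get $\tr\widehat S = e^{-2\varphi}(\tr S + n\dot\varphi)$. Subtracting the trace pieces with respect to the appropriate metric,
$$\widehat g\,\widehat S - \frac{\tr\widehat S}{n}\,\widehat g \;=\; gS + \dot\varphi\, g - \frac{\tr S + n\dot\varphi}{n}\,g \;=\; gS - \frac{\tr S}{n}\,g,$$
proving that the trace-free parts of $gS$ and $\widehat g\,\widehat S$ coincide.

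For the second assertion, I would impose $\tr\widehat S = 0$ directly in the formula just obtained, which reduces to the first-order linear ODE
$$\dot\varphi \;=\; -\frac{\tr S}{n}$$
along the null geodesic $\gamma$. Integration along $\gamma$ produces a function $\varphi$ unique up to an additive constant, hence $e^{2\varphi}$ is unique up to a positive multiplicative constant. There is no genuine analytic obstacle here: everything follows immediately from the previous lemma's conformal transformation rule for $S$. The only real bookkeeping point, and thus the only place an error could creep in, is keeping straight that the $n$ appearing in $\tr_{\widehat g}\widehat g = n$ refers to the dimension of the space on which $S$ acts (the rank of $L$), not the dimension of the ambient spacetime, which is $n+2$.
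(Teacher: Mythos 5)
Your argument is correct and is exactly the computation the paper intends: the lemma is stated as an immediate consequence of the displayed formula $\widehat S = e^{-2\varphi}(\dot L L^{-1}+\dot\varphi)$, and your manipulation $\widehat g\,\widehat S = gS+\dot\varphi\,g$ together with the ODE $\dot\varphi=-\tr S/n$ along $\gamma$ is the intended unpacking of that formula. Your closing caveat that $n$ is the rank of $L$ (the screen dimension), not the spacetime dimension $n+2$, is also the right bookkeeping point.
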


  \subsection{Jacobi fields and Sachs equations on plane waves}
  We develop the ideas in this section in the examples of the Brinkmann and Rosen metrics.

  For a plane wave in the Brinkmann form $G_\beta(p)$ in \eqref{Brinkmann1}, it can be readily checked that the tidal curvature is $-p(u)$, and therefore the equation of Jacobi transport of a vector $X:u\to\mathbb X$ along the geodesic $v=x=0$ is
$$\ddot X + p(u)X = 0.$$
Now $X$ is an ordinary vector in the $n$-dimensional Euclidean space $\mathbb X$, $p(u)$ is a symmetric matrix, and the dot is (as usual) differentiation with respect to $u$.  The space $\mathbb J$ of Jacobi fields $X$ is a real vector space of dimension $2n$, being determined by a pair of initial conditions $X_0,\dot X_0$ in $\mathbb X$.  The symplectic form for a pair of Jacobi fields $X,Y\in\mathbb J$ is just
$$J(X,Y) = X^T\dot Y - Y^T\dot X.$$
Note that $\frac{d}{du}\omega(X,Y)=0$, because $X,Y$ satisfy the same second order equation, and $p$ is symmetric.  The Sachs equation for a smooth function $S:\mathbb U\to\mathbb X\odot\mathbb X$ is just
$$ \dot S + S^2 + p = 0.$$
This form of the Sachs equation shall be used in the construction of associated Rosen metrics in the next section.

For a Penrose limit, we can consider Jacobi fields along the central null geodesic.  The following theorem gives two basic examples:
\begin{theorem}\label{JacobiPlaneWaveTheorem}\mbox{}
  \begin{enumerate}[(a)]
  \item For a Rosen plane wave \eqref{Rosen1}, a basis of the Jacobi fields is
    $$\partial_x, \quad x^T\partial_v + H\cdot\partial_x.$$
    (And $\partial_v$ is an additional non-abreast Jacobi field.) The tidal curvature tensor of the Rosen metric is
  $$P = dx^T\left(\tfrac12\ddot g - \tfrac14\dot g g^{-1}\dot g\right)dx.$$
  \item For the Brinkmann plane wave \eqref{Brinkmann1}, the Jacobi equation is
    $$\ddot X + p(u) X = 0,$$
    for the $\mathbb R^n$-valued function $X$; the Sachs equation is
    $$\dot S + S^2 + p(u) = 0 $$
    for the symmetric $n\times n$ matrix $S=S(u)$.  The tidal curvature is thus $P(u) = -dx^Tp(u)dx$.  Moreover, the hypersurfaces $u=$constant are affine spaces (and therefore totally geodesic).
  \end{enumerate}
  In both cases, the space $\mathbb J$ of (abreast) Jacobi fields is a $2n$-dimensional vector space, where $n=\dim\mathbb X$.
\end{theorem}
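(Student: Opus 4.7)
The plan is to handle the two plane wave forms separately, exploiting the structural simplicity of each. Part (b) is essentially a direct Christoffel computation; part (a) requires more care with index bookkeeping and the formula for the tidal curvature.

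For part (b), the Brinkmann metric $G_\beta(p)$ has constant $g_{uv}$ and $g_{ij}$, so every Christoffel symbol is determined by derivatives of $g_{uu}=x^Tp(u)x$. Direct computation gives $\Gamma^v_{uj}=(p(u)x)_j$ and $\Gamma^i_{uu}=(p(u)x)^i$ as the only non-trivial symbols besides $\Gamma^v_{uu}=\tfrac12 x^Tp'(u)x$, with all symbols purely in $v$- or $x$-indices equal to zero. Restricting to the central null geodesic $\gamma(u)=(u,0,0)$ makes these all vanish, and the Riemann tensor is read off from the transverse derivatives $\partial_j\Gamma^i_{uu}=p^i{}_j(u)$, giving $P(u)=-p(u)$. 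The Jacobi equation $\ddot X=PX$ thus reads $\ddot X+p(u)X=0$, and the Sachs equation follows from the general theory of the preceding subsection applied to $S=\dot L L^{-1}$ with $L$ a Lagrangian matrix of Jacobi fields. The claim that the wave fronts $\{u=\text{const}\}$ are affine is immediate once one checks that $\Gamma^a_{ij}$, $\Gamma^a_{iv}$ and $\Gamma^a_{vv}$ all vanish identically, so $\partial_v$ and each $\partial_{x^i}$ are globally parallel on every wave front.

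For part (a), the Rosen metric is translation-invariant in $v$ and in each $x^i$, so $\partial_v$ and each $\partial_{x^i}$ are Killing and hence Jacobi along every geodesic. Computing $g(\partial_v,\dot\gamma)=1$ and $g(\partial_{x^i},\dot\gamma)=0$ shows that $\partial_v$ is non-abreast while the $\partial_{x^i}$ give $n$ abreast Jacobi fields at once. For the remaining $n$ abreast Jacobi fields, I compute the non-vanishing Christoffels along $\gamma$, namely $\Gamma^k_{ui}=\tfrac12(h^{-1}\dot h)^k{}_i$ and $\Gamma^v_{ij}=\tfrac12\dot h_{ij}$, and write the Jacobi equation for a field of the form $H(u)\cdot\partial_x$ along $\gamma$. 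After the curvature contribution cancels with one of the connection terms, the equation reduces to $(h\dot H)'=0$; taking $H(u_0)=0$ and $\dot H(u_0)=h(u_0)^{-1}$ produces $n$ further Jacobi fields independent of the $\partial_{x^i}$, completing the $2n$-dimensional space. The $x^T\partial_v$ summand in the stated vector field is a canonical extension off $\gamma$ that vanishes on $\gamma$ itself; it does not affect the restriction to the central geodesic but makes the extended vector field natural as a symmetry-like object on $M$.

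Finally, the tidal curvature formula for Rosen is obtained by setting $\Omega=\tfrac12 h^{-1}\dot h$ and applying $\nabla_{\dot\gamma}$ twice to a constant field such as $J=\partial_{x^i}$: the Jacobi equation then forces $P=\dot\Omega+\Omega^2$. Using $(h^{-1})'=-h^{-1}\dot h h^{-1}$ simplifies this to $P=\tfrac12 h^{-1}\ddot h-\tfrac14(h^{-1}\dot h)^2$; lowering an index with $h$ yields the symmetric $(0,2)$-tensor $\tfrac12\ddot h-\tfrac14\dot h h^{-1}\dot h$, which matches the theorem. The main bookkeeping obstacle throughout part (a) is careful tracking of index placements and the verification of symmetry in $i,j$; this symmetry is automatic from the symmetry of $h$ but still requires attention.
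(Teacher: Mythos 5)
Your proposal is correct, and for part (b) it is essentially the paper's argument in different clothing: the paper records the connection via $D\otimes dv$, $D\otimes dx$ and reads the curvature off $D^2\otimes\partial_x=(du\wedge p\,dx)\otimes\partial_v$, which is the same computation as your Christoffel bookkeeping, including the observation that $\partial_v,\partial_x$ are parallel on each wave front. For part (a), however, you take a genuinely different route. The paper gets the basis of Jacobi fields by simply noting that $\partial_x$ and $x^T\partial_v+H\cdot\partial_x$ are Killing, and it obtains the Rosen tidal curvature not by computing in Rosen coordinates at all, but by pulling back the Brinkmann result through the correspondence $g=L^TL$, $\dot L=SL$: using the Sachs equation it writes $\dot g=2L^TSL$, $\ddot g=\tfrac12\dot g g^{-1}\dot g-2L^TpL$, and then substitutes $P=-dx^Tp\,dx$. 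You instead work intrinsically in Rosen coordinates, computing $\Gamma^k_{ui}=\tfrac12(h^{-1}\dot h)^k{}_i$, $\Gamma^v_{ij}=\tfrac12\dot h_{ij}$, obtaining the tidal operator $\dot\Omega+\Omega^2$ with $\Omega=\tfrac12 h^{-1}\dot h$ (which indeed expands to the stated formula once the index is lowered with the positive-definite spatial form, consistent with the paper's sign convention $P=-dx^Tp\,dx$ in (b)), and producing the second family of Jacobi fields by reducing the Jacobi ODE along $\gamma$ to $(h\dot H)^{\boldsymbol\cdot}=0$ and solving with $\dot H=h^{-1}$. Your version is self-contained: it needs neither the prior existence of a Brinkmann form nor a Lagrangian matrix $L$, and it verifies directly that the count of abreast fields is $2n$. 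What the paper's route buys is the explicit link to the Sachs equation (which is the point of the section) and the fact that $x^T\partial_v+H\cdot\partial_x$ is Killing on all of $\mathbb M$, not merely a Jacobi field along the central geodesic — a slightly stronger statement that is used later (Heisenberg symmetries); in your write-up that global Killing property is asserted as ``natural'' but not checked, though it is a one-line Lie-derivative computation if needed.
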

\begin{proof}
  The first statement in (a) follows immediately from the fact that these vector fields are Killing vectors of the metric, and therefore Jacobi fields.
To compute the curvature of a Rosen metric, the passage from Brinkmann to Rosen was mediated by the introduction of a Lagrangian matrix $L(u)$, such that $\ddot L +pL=0$ and $S=\dot L L^{-1}$ is symmetric, and satisfies the Sachs equation $\dot S + S^2 + p=0$.  The Rosen metric in the resulting Jacobi frame is then
$$2\,du\,dv - dx^Tg(u)dx,\qquad g(u) = L^TL.$$
  We have
  \begin{align*}
    \dot g &= 2L^TSL\\
    \ddot g &= 2L^T(\dot S + 2S^2)L = \tfrac12\dot gg^{-1}\dot g -2L^TpL.              
  \end{align*}
  Rearranging, and recalling $P=-dx^Tpdx$ now gives the theorem.

  For (b), we must show that the tidal curvature of the Brinkmann spacetime is $dx^TP(u)dx$ (the signs follow because the metric is negative definite on $\mathbb X$).  To do this, we have the connection
  \begin{align*}
    D\otimes(du) &= 0\\
    D\otimes(dv) &= -\tfrac12 x^T\dot p x\, du\otimes du - x^T p(dx\otimes du + du\otimes dx)\\
    D\otimes(dx) &= -px\,du\otimes du.
  \end{align*}
  So $\partial_v,\partial_x$ comprises a parallel frame on every $u=$constant slice.  The remaining assertion now follows from:
  $$D^2\otimes(\partial_x) = (du\wedge p\,dx)\otimes \partial_v.$$

\end{proof}

Note that the tidal curvature of the Rosen metric can be equivalently rewritten, with $H$ a primitive of $g^{-1}$, as follows:
%$$\ddot H = -g^{-1}\dot gg^{-1}$$
%$$\dddot H = -g^{-1}\ddot gg^{-1} + 2g^{-1}\dot g g^{-1}\dot gg^{-1}$$
$$-2P=dx\dot H^{-1}\left(\dddot H - \frac32\ddot H\dot H^{-1}\ddot H\right)\dot H^{-1}dx^T = dx(-\ddot g + \frac12 \dot gg^{-1}\dot g)dx^T.$$
Therefore we have:
\begin{corollary}
  The tidal curvature of a Rosen metric satisfies
  $$\mathcal S(H) = -2P$$
  where $\mathcal S(H)$ is the {\em Lagrangian Schwarzian} of $H$:
  $$\mathcal S(H) = dx^T\dot H^{-1}\left(\dddot H - \frac32\ddot H\dot H^{-1}\ddot H\right)\dot H^{-1}\,dx.$$
\end{corollary}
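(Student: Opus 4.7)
The plan is a direct computation reducing the Lagrangian Schwarzian to the expression for the tidal curvature already derived in Theorem \ref{JacobiPlaneWaveTheorem}(a). Since $H$ is a primitive of $g^{-1}$, we have $\dot H = g^{-1}$, so $\dot H^{-1} = g$. The target identity is thus
\[
  g\left(\dddot H - \tfrac32\ddot H\,g\,\ddot H\right)g \;=\; -\ddot g + \tfrac12\dot g\,g^{-1}\dot g,
\]
since the right-hand side, sandwiched between $dx^T$ and $dx$, is exactly $-2P$ by the theorem.

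First I would differentiate $\dot H = g^{-1}$ twice to obtain
\[
  \ddot H = -g^{-1}\dot g\,g^{-1},\qquad \dddot H = 2g^{-1}\dot g\,g^{-1}\dot g\,g^{-1} - g^{-1}\ddot g\,g^{-1},
\]
being careful about the non-commutativity (the formula $\tfrac{d}{du}g^{-1} = -g^{-1}\dot g\,g^{-1}$ is the only nontrivial input). Sandwiching by $g$ on each side immediately gives $g\dddot H\,g = 2\dot g\,g^{-1}\dot g - \ddot g$ and $g\ddot H\,g = -\dot g$, so
\[
  g\,\ddot H\,g\,\ddot H\,g = (g\ddot H g)(\ddot H g) = -\dot g\cdot(-g^{-1}\dot g) = \dot g\,g^{-1}\dot g.
\]
Combining these,
\[
  g\left(\dddot H - \tfrac32\ddot H g\ddot H\right)g = 2\dot g\,g^{-1}\dot g - \ddot g - \tfrac32\dot g\,g^{-1}\dot g = -\ddot g + \tfrac12\dot g\,g^{-1}\dot g,
\]
which matches the remark preceding the corollary.

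Contracting with $dx^T$ on the left and $dx$ on the right gives $\mathcal S(H) = -2P$, as required. There is really no conceptual obstacle here, only the bookkeeping of the noncommutative derivatives; the one place to be careful is the appearance of the coefficient $-\tfrac32$, which combines with the $+2$ coming from $\dddot H$ to yield the $+\tfrac12$ in front of $\dot g\,g^{-1}\dot g$ that is required for consistency with the trace-free combination identified in Theorem \ref{JacobiPlaneWaveTheorem}(a).
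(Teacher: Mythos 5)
Your proposal is correct and is essentially the paper's own argument: the paper likewise verifies the corollary by differentiating $\dot H = g^{-1}$ and checking the identity $\dot H^{-1}\bigl(\dddot H - \tfrac32\ddot H\dot H^{-1}\ddot H\bigr)\dot H^{-1} = -\ddot g + \tfrac12\dot g\,g^{-1}\dot g$, then invoking the Rosen tidal-curvature formula from Theorem \ref{JacobiPlaneWaveTheorem}(a). Your bookkeeping of the noncommutative derivatives and the $2 - \tfrac32 = \tfrac12$ coefficient is exactly the computation the paper records in the display preceding the corollary.
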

The full implications of this Corollary shall be deferred until a later article in this series, however the main point is that $L(u)$ (modulo $\op{GL}(n)$) is a positive curve in the Lagrangian Grassmannian, and the curvature serves as a natural model of the Lagrangian Schwarzian of the curve.

\section{Characterization of plane waves}\label{CharacterizationSection}
The following result is equivalent to the Theorem of Alekseevsky \cite{Alekseevsky}, discussed in \S\ref{AlekseevskySection}, but we include a rather different proof than his.  Because this result is so fundamental in our work, we thought it useful to provide an independent proof, even though it is an easy consequence of results already in the literature.  The proof we give also contains within it a very general construction of Penrose limits in the sense of \cite{penrose1976any}, although we shall not elaborate this construction because it is not used in the present paper.
\begin{theorem}\label{PlaneWaveCharacterization}
  Let $(M,g,\mathcal D)$ be a plane wave spacetime.  Then there exists (locally in $u$) a Rosen metric for $M$.
\end{theorem}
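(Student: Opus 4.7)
The plan is to reduce the abstract plane wave structure to the class $\mathcal{A}$ locally around any point of $\gamma$, at which point Theorem \ref{PlaneWavesSame} yields a Brinkmann form and Lemma \ref{EveryBrinkmannRosen} converts this to a local Rosen metric via the Sachs equation. The essential new content is therefore to construct, in a tubular neighborhood of any point $p_0 = \gamma(u_0)$ on the fixed null geodesic, coordinates $(u,v,x) \in \mathbb{U}\times\mathbb{R}\times\mathbb{X}$ in which the metric has the $\mathcal{A}$-form $\alpha\,du - dx^T h(u)\,dx$ and the dilation acts in the standard way.

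First I would analyze the linearization $L = dD|_{p_0}$ of the dilation vector field at a fixed point. Evaluating $\mathcal{L}_D g = 2g$ at $p_0$, where $D$ vanishes, gives the algebraic identity $g(LX,Y) + g(X,LY) = 2g(X,Y)$ on $T_{p_0}M$, so eigenvalues of $L$ pair as $\lambda \leftrightarrow 2-\lambda$. The one-dimensional fixed set forces $\dot\gamma(u_0) \in \ker L$, paired with a $2$-eigenvector spanning a transverse null direction, and the orthogonal $n$-dimensional complement inherits a positive-definite inner product on which $L$ has the form $I + A$ with $A$ skew. Properness of the action combined with a Bochner-type linearization at fixed points yields, after a $u$-dependent orthogonal change of frame absorbing $A(u)$, coordinates $(u,v,x)$ in a tubular neighborhood of a local arc of $\gamma$ in which $\mathcal{D}_t(u,v,x) = (u, e^{2t}v, e^t x)$.

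Second, the scaling law $\mathcal{D}_t^* g = e^{2t}g$ in these coordinates forces each coefficient of $g$ to be a polynomial in $(v,x)$ of the appropriate weight. Combined with smoothness along $\gamma$, the components are at most quadratic in $x$ and at most linear in $v$, with coefficients smooth in $u$: this is exactly the class $\mathcal{A}$ form with $\alpha = 2\,dv - b(u)v\,du + x^T p(u)x\,du - x^T q(u)\,dx$, where $b$ and $q$ encode the residual freedom in the choice of null partner to $\dot\gamma$ and of orthogonal frame in $\mathbb{X}$, while $p$ and $h$ capture the curvature information and the transverse metric. Theorem \ref{PlaneWavesSame} followed by Lemma \ref{EveryBrinkmannRosen} then completes the proof.

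The main obstacle is the smooth extension of the linearization from an individual fixed point to a neighborhood of an entire local arc of $\gamma$: standard Bochner-type theorems produce normal forms at isolated fixed points, so one must patch them smoothly, using that the eigenspaces of $L(u)$ vary smoothly in $u$ and that the dilation acts properly on a neighborhood of $\gamma$. Once this patching is accomplished, the rest of the argument is scaling bookkeeping of the kind already exhibited in Section \ref{AlekseevskySection}.
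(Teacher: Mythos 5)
Your reduction has the right shape in its second half: once you have coordinates near an arc of $\gamma$ in which the metric is of class $\mathcal{A}$ and the given dilation is (conjugate to) the standard one, the weight bookkeeping does force $g_{vv}=g_{vx}=0$, $g_{uv}=g_{uv}(u)$, $g_{uu}$ linear in $v$ plus quadratic in $x$, and $g_{xx}=h(u)$, and then Theorem \ref{PlaneWavesSame} and Lemma \ref{EveryBrinkmannRosen} finish the job. The gap is in the step you lean on to get there, namely ``properness of the action combined with a Bochner-type linearization.'' Bochner's theorem linearizes \emph{compact} group actions; your group is a noncompact one-parameter flow, and the relevant smooth normal-form theorems (Sternberg type) require non-resonance of the linearization, which fails here: transverse to $\gamma$ the eigenvalues of $L$ are $2,1,\dots,1$ (up to imaginary parts), exhibiting the resonance $2=1+1$, and there is in addition the zero eigenvalue along $\gamma$. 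A vector field such as $2v\partial_v+x\partial_x+x^2\partial_v$ has exactly this linear part and is \emph{not} smoothly conjugate to its linearization (its flow contains the secular term $te^{2t}x_0^2$), so smooth linearizability of $D$ is not a general dynamical fact; it has to be extracted from the homothety equation itself, and that is precisely where the substance of the theorem lies. A second, separate problem is the phrase ``after a $u$-dependent orthogonal change of frame absorbing $A(u)$'': if the linearization on the spacelike complement is $I+A$ with $A\neq 0$ skew (which genuinely occurs, e.g.\ by composing the standard dilation of a rotationally symmetric Brinkmann wave with a rotational isometry fixing $\gamma$), the rotation acts in the group parameter $t$, not in $u$, and the resulting flow is not conjugate to the standard dilation at all (the eigenvalues $1\pm i\theta$ are invariants). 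That case is not fatal --- you could keep the rotating normal form and redo the weight analysis, landing in class $\mathcal{A}$ with $q\neq 0$ --- but as written the claimed standard form is false, and in any case it still presupposes the unproved smooth normal form.

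By contrast, the paper's proof never invokes a linearization theorem for the flow on $M$. It lifts $\mathcal D$ to $\widehat D=\mathscr L_D+2E$ on the null-cone bundle $\mathcal N$, uses the pseudo-hyperbolic unstable-manifold theorem to produce the invariant wave fronts $\mathcal U(u)$, equips each with a sub-Riemannian contact structure on which $\widehat{\mathcal D}$ acts as a metric dilation, and then applies the Gromov--Bella\"iche theorem: a sub-Riemannian contact manifold with a dilation is the Heisenberg group, with a canonical graded frame $Z,P_i,Q^j$. Lie (Jacobi) transport of this frame along $\nabla^\omega H$ and pushforward to $M$ gives commuting coordinate vector fields $\partial_v,\partial_i,\partial_u$, and dilation invariance of the various inner products then yields the Rosen form directly; the fact that $\mathcal D$ is linear (standard) in the resulting coordinates comes out of this construction rather than being fed into it. If you want to salvage your route, you would need to prove a linearization (or graded normal form) statement for homothetic vector fields vanishing on a null geodesic --- essentially Alekseevsky's original analysis --- and that proof would have to use the metric structure to kill the resonant terms, since dynamics alone does not.
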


We shall prove this by essentially showing that a plane wave with a dilation is equal to its Penrose limit (a Rosen metric).  The proof uses ideas from metric (sub-Riemannian) geometry.  A general reference is \cite{Montgomery}, and for the specific details \cite{Bellaiche1996} and \cite{gromov1996carnot}.  

A dilation in a metric space $(\mathcal U,d)$ is defined as a one-parameter group $t\mapsto \delta_t$ of homeomorphisms, fixing a point, such that $d(\delta_tx,\delta_ty)=e^td(x,y)$ for all $x,y\in\mathcal U$.  A sub-Riemannian contact manifold is a connected  manifold $(\mathcal U,\theta)$, with a positive-definite metric on the contact distribution $\theta=0$.  It is known, by Chow's theorem, that this is a path metric space whose geodesics are the minimizers among horizontal paths.  Our proof uses the following lemma, due to Gromov and Bella\"iche on Carnot--Carath\'eodory spaces \cite{Bellaiche1996}:
\begin{lemma}
  Let $\mathcal U$ be a smooth connected contact manifold, with a contact one-form $\theta$, and a sub-Riemannian metric $g$ on the contact distribution $\theta=0$.  If $\mathcal U$ has a dilation, then it is the Heisenberg group: there exists a basis of smooth vector fields $P_i,Q^j$ of $\theta=0$ such that
  $$[P_i,Q^j]=\delta_i^jZ,\quad [P_i,P_j]=[Q^i,Q^j]=[Z,P_i]=[Z,Q^j]=0.$$
  with homogeneities
  $$\mathscr L_{D}P_i=-P_i,\quad\mathscr L_{D}Q^j=-Q^j,\quad\mathscr L_{D}Z=-2Z$$
  where ${D}$ is the infinitesimal generator of the dilation.  
\end{lemma}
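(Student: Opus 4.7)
My plan is to leverage the attracting character of the dilation $\delta_t$ as $t\to-\infty$ and the resulting weight decomposition of smooth tensors near the fixed point $x_0$. The Heisenberg relations should fall out from homogeneity alone, once the correct weights are pinned down on the Reeb vector field and on a Darboux frame of the contact distribution. First I would record the homogeneities of the structural tensors: because $\delta_t$ is a metric homeomorphism it preserves the contact distribution $\ker\theta$, and the sub-Riemannian metric $g$ on $\ker\theta$ satisfies $\delta_t^*g=e^{2t}g$; compatibility of the symplectic form $d\theta$ restricted to $\ker\theta$ with this scaling then forces $\delta_t^*\theta=e^{2t}\theta$ and hence $\mathscr L_D\theta=2\theta$. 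Defining $Z$ by $\theta(Z)=1$ and $\iota_Z d\theta=0$ and differentiating these identities along $D$ then yields $\mathscr L_D Z=-2Z$.

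Next I would produce the $P_i,Q^j$. Equip $\ker\theta_{x_0}$ with the symplectic form $d\theta_{x_0}$ and fix a Darboux basis $p_1,\dots,p_n,q^1,\dots,q^n$ there. The differential $d\delta_t|_{x_0}$ acts as $e^t$ on $\ker\theta_{x_0}$ and as $e^{2t}$ on the one-dimensional complementary direction $\mathbb R\,Z(x_0)$, so $\delta_t$ is a smooth contraction of $\mathcal U$ to $x_0$. In a chart adapted to this linearization (the content of the Bella\"iche--Gromov theorem on tangent cones of sub-Riemannian manifolds), the vectors $p_i,q^j$ extend to smooth global vector fields $P_i,Q^j$ on $\mathcal U$ satisfying $\mathscr L_D P_i=-P_i$ and $\mathscr L_D Q^j=-Q^j$.

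The bracket relations now follow from the weight principle that a smooth function $f$ with $Df=\lambda f$ satisfies $f(\delta_tx)=e^{\lambda t}f(x)$, and since the left side tends to the finite value $f(x_0)$ as $t\to-\infty$ it must vanish identically when $\lambda<0$ and be constant when $\lambda=0$. Expanding an arbitrary smooth vector field in the frame $\{P_i,Q^j,Z\}$ with weights $-1,-1,-2$, one concludes that every smooth weight $-2$ vector field is a constant multiple of $Z$, and that every smooth vector field of weight at most $-3$ vanishes. Since $[P_i,Q^j]$ has weight $-2$, it equals $c_i^{\,j}Z$ for some constant $c_i^{\,j}$; pairing with $\theta$ and using $\theta(P_i)=\theta(Q^j)=0$ gives $c_i^{\,j}=-d\theta(P_i,Q^j)$, whose value at $x_0$ is $\pm\delta_i^j$ by the Darboux choice, and a sign normalization of the basis then delivers $[P_i,Q^j]=\delta_i^j Z$. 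The same argument applied to $[P_i,P_j]$ and $[Q^i,Q^j]$ gives zero by isotropy of the Darboux basis in each factor, while $[Z,P_i]$ and $[Z,Q^j]$ vanish as weight $-3$ smooth vector fields.

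The principal obstacle is the middle step: the smooth homogeneous extension of a frame across the fixed point of a sub-Riemannian dilation is exactly the content of the Bella\"iche--Gromov tangent cone theorem, and requires a delicate regularity analysis of the dilation flow near $x_0$. Once this input is granted, the Heisenberg commutation relations are forced by the purely algebraic weight argument in the third step.
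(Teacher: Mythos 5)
The paper does not actually prove this lemma: it is quoted as an imported result of Bella\"iche and Gromov on Carnot--Carath\'eodory spaces, and the proof of Theorem \ref{PlaneWaveCharacterization} simply uses it as a black box. Your proposal is therefore not comparable to a proof in the paper; what it does is re-derive part of the statement while outsourcing the genuinely hard core --- the existence of a smooth frame near (and across) the fixed point that is homogeneous for the dilation, together with the smoothness of the flow $\delta_t$ and of its generator $D$ --- to the very same Bella\"iche--Gromov tangent-cone theorem. Given that input, your weight argument is correct and clean: since $d(\delta_t x,x_0)=e^t d(x,x_0)\to 0$ as $t\to-\infty$, a smooth function of negative weight vanishes and one of weight zero is constant; expanding in the frame $\{P_i,Q^j,Z\}$ then shows any weight $-2$ field is a constant multiple of $Z$ and any field of weight $\le -3$ vanishes, which forces $[P_i,Q^j]=c_i^{\,j}Z$ with constant $c_i^{\,j}=-d\theta(P_i,Q^j)$, normalizable to $\delta_i^{\,j}$, and kills the remaining brackets. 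Two caveats you should make explicit: first, the paper's definition of a dilation is purely metric (a one-parameter group of \emph{homeomorphisms}), so the smoothness of $\delta_t$, the existence of $D$, and the claim that $d\delta_t|_{x_0}$ has eigenvalues $e^t,e^{2t}$ are themselves part of the regularity being imported, not free; second, your assertion that $\delta_t^*g=e^{2t}g$ ``forces'' $\delta_t^*\theta=e^{2t}\theta$ is too quick --- a priori $\delta_t^*\theta=\lambda_t\theta$ with $\lambda_t$ a positive function, and pinning $\lambda_t$ down to $e^{2t}$ requires an extra argument (e.g.\ comparing the operator relating $d\theta|_{\ker\theta}$ to $g$ along the orbit and letting $t\to-\infty$, or invoking the ball-box scaling of the vertical direction). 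With those points acknowledged, your sketch is a reasonable reconstruction, but it does not replace the citation; it only redistributes which parts of the Gromov--Bella\"iche machinery are quoted.
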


We shall use this lemma to construct a Rosen metric on a spacetime admitting a dilation fixing a null geodesic.

Let $\gamma$ be the null geodesic fixed by the dilation group, and $u$ an affine parameter along $\gamma$.  Lift $\gamma$ to the corresponding trajectory $\widehat\gamma$ of $\nabla^\omega H$ in $\mathcal N$.  Also lift the infinitesimal generator ${D}$ of the dilation to the vector field $\widehat {D}=\mathscr L_{D} + 2E$ on $\mathcal N$.  We note the relations $\mathscr L_{\widehat {D}}\theta=2\theta$, $\widehat {D}(H) = 2H$.  Therefore $[\widehat {D},\nabla^\omega H]=0$, i.e., $\widehat {D}$ is a Jacobi field.  For each $u$, let $\mathcal U(u)$ be the unstable manifold of $\widehat {D}$ containing $\widehat\gamma(u)$.  A discussion is in \S\ref{UnstableManifolds}.

On $\mathcal U(u)$ define a metric by
$$\widehat g = -g_{ab}\theta^a\theta^b - g^{ab}\omega_a\omega_b $$
where $\omega_a=Dp_a$ is the connection one-form.  On $\theta=0$, $\widehat g$ is positive definite, because $\theta^a$ and $\omega_a$ are both spacelike.  Also $\mathscr L_{\widehat {D}}\widehat g=2\widehat g$.  So, by the lemma, $\mathcal U(u)$ is a Heisenberg group.  Fix $u$ and let $Z,P_i,Q^j$ be a Heisenberg basis of $\mathcal U(u)$.  Extend $Z,P_i,Q^i$ to vector fields on the trace of the family $\mathcal U(u)$, as $u$ varies, by Lie transport along $\nabla^\omega H$ (i.e., Jacobi transport).  Since Jacobi transport respects the relations
$$[P_i,Q^j]=\delta_i^jZ,\quad [P_i,P_j]=[Q^i,Q^j]=[Z,P_i]=[Z,Q^j]=0$$
$$\mathscr L_{\widehat {D}}P_i = -P_i,\quad \mathscr L_{\widehat {D}}Q^j = -Q^j,\quad \mathscr L_{\widehat {D}}Z = -2Z $$
the extended vector fields also satisfy these, and in addition commute with $\nabla^\omega H$.

The commuting vector fields $P_i,Z$ are linearly independent, and have $n$-dimensional integral manifolds in $\mathcal U(u)$ for each $u$, and in particular an integral manifold containing the preferred point $\widehat\gamma(u)$.  Because they are projections of linearly independent solutions to the Jacobi equation, for all except a discrete set of $u$, $\pi_*P_i,\pi_*Z$ are linearly independent.  Choosing $u$ so that they are linearly independent, we have constructed a basis of commuting vector fields of the tangent space of $M$: $\pi_*P_i,\pi_*Z,\pi_*(\nabla^\omega H)$.  Introduce coordinates such that
$$\partial_i = \pi_*P_i,\quad \partial_v = \pi_*Z,\quad \partial_u=\pi_*(\nabla^\omega H).$$
Then $g(\partial_i,\partial_j)$ depends only on $u$, because it is preserved by the dilation.  For the same reason, $g(\partial_v,\partial_i)=g(\partial_v,\partial_v)=0$.  Also $g(\partial_u, \partial_i)=0$ because $P_i$ are contact vector fields, and $g(\partial_u,\partial_u)=0$ because $\nabla^\omega H$ is null.

Finally, $g(\partial_u,\partial_v)=\theta(Z)$, which has weight zero under $\widehat{\mathcal D}$, and is therefore a function only of $u$.  But the scalar $\theta(Z)$ is also preserved along the flow of $\nabla^\omega H$, because $Z$ is a Jacobi field and $\theta$ is invariant, and is therefore a constant.  (Moreover, $\theta(Z) \ne 0$, because $Z\pmod{\theta=0}$ spans the brackets of the contact distribution with itself, which is non-zero by the contact hypothesis.) So we can without loss of generality take $\theta(Z)=1$.

  Now we are done: we have shown that the metric $g$ is of the form
$$g=2\,du\,dv - g_{ij}(u)dx^i\,dx^j$$
as required.

\subsubsection{Unstable manifolds}\label{UnstableManifolds}
A lemma from dynamics is the (un)stable manifold theorem for pseudohyperbolic diffeomorphisms \cite{Ruelle}:
\begin{lemma}
  Let $X$ be a Euclidean space $V\subset X$ a neighborhood of $0\in X$, $f:V\to X$ a $C^r$ diffeomorphism onto the open set $f(V)\subset X$, such that $f(0)=0$.  Consider the spectral decomposition of $X$ under $df(0)$.  Let $X^-$ be the invariant subspace of $X$ corresponding to the spectrum of $df(0)$ inside the closed unit disc, and $X^+$ be the invariant subspace corresponding to the eigenvalues $|\lambda| > \rho > 1$, where $\rho$ is a constant.  Then, for sufficiently small $R>0$, the {\em unstable manifold}
  $$\mathcal U^+ = \bigcap_{n=0}^\infty f^n\bigcap_{n=0}^m f^{-m}X(\rho^{m-n}R)$$
  is the graph of a $C^r$ map $g:X^+(R)\to X^-(R)$ where $g(0)=0$ and $dg(0)=0$.  The manifold $\mathcal U^+$ depends continuously on $f$ in the $C^r$ topology.
\end{lemma}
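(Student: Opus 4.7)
The plan is to realize $\mathcal U^+$ as the graph of a $C^r$ map $g : X^+(R) \to X^-(R)$ with $g(0)=0$ and $dg(0)=0$, following the classical Hadamard graph-transform approach. First I would choose Lyapunov norms on $X^\pm$ adapted to the spectral gap, arranging $\|L^-\| \le \mu$ and $\|(L^+)^{-1}\| \le \rho^{-1}$ for some constants with $\mu<\rho$, where $L=df(0)=L^+\oplus L^-$. Writing $f=L+N$, the nonlinear part $N$ has vanishing derivative at the origin, so by restricting to a ball $X(R)$ of sufficiently small radius, the Lipschitz constant of $N$ can be made arbitrarily small.

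Second, consider the complete metric space $\mathcal F$ of Lipschitz maps $g:X^+(R)\to X^-(R)$ with $g(0)=0$ and $\mathrm{Lip}(g)\le 1$, equipped with the sup norm. Define a graph transform $\Gamma:\mathcal F\to\mathcal F$ by the requirement that $\mathrm{graph}(\Gamma g)$ is the component of $f(\mathrm{graph}(g))\cap (X^+(R)\times X^-(R))$ through the origin; the existence of $\Gamma g$ and the estimate $\mathrm{Lip}(\Gamma g)\le 1$ follow from the hyperbolic estimates on $L$ and the smallness of $N$ on $X(R)$. A routine computation shows that $\Gamma$ is a contraction with constant essentially $\mu/\rho<1$, so by the Banach fixed point theorem it has a unique fixed point $g_*\in\mathcal F$, and $\mathrm{graph}(g_*)$ is (locally) $f$-invariant. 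The condition $dg_*(0)=0$ follows because the tangent space at the origin to a $\Gamma$-invariant graph must itself be $L$-invariant, hence equal to $X^+$.

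Third, identify $\mathrm{graph}(g_*)$ with the intersection $\mathcal U^+$. By unraveling the definition, the intersection consists exactly of those points $x$ admitting a full backward orbit $f^{-m}(x)$ remaining in $X(R)$ with $\|f^{-m}(x)\|\le C\rho^{-m}R$; on the other hand, invariance of $\mathrm{graph}(g_*)$ together with the contractive bound $\|(L^+)^{-1}\|\le\rho^{-1}$ shows precisely the same characterization for points on the graph. The reverse inclusion uses that any such backward orbit must, by hyperbolicity, have its $X^-$-component determined up to exponentially small error by its $X^+$-component, forcing the orbit onto $\mathrm{graph}(g_*)$.

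Finally, to promote the result to $C^r$ regularity and continuous dependence on $f$, I would invoke the fiber contraction theorem of Hirsch--Pugh--Shub. Inductively on $k\le r$, the graph transform $\Gamma$ lifts to a fiber map on a bundle of candidate $k$-jets over $\mathcal F$; this lift is contractive in the fiber with a rate controlled by the product of $k$ expansion factors against one contraction factor, giving a contraction because the spectral gap is uniform. The unique invariant section at level $k$ must then coincide with the $k$-th derivative of $g_*$, establishing $g_*\in C^r$. Uniformity of the contraction constant in a $C^r$-neighborhood of $f$, together with the uniform contraction principle, yields continuous dependence of $g_*$ on $f$. I expect the main obstacle to be precisely this $C^r$ step: the $C^0$ and Lipschitz statements fall directly out of the Banach fixed point theorem, but propagating smoothness through the jet bundles requires careful bookkeeping of adapted norms and separate spectral estimates at each order of differentiation.
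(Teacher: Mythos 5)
The paper does not actually prove this lemma: it is quoted verbatim as the (un)stable manifold theorem for pseudohyperbolic maps, with a citation to Ruelle (and to Shub for the flow-invariance refinement), and is then merely \emph{used} to produce the unstable manifolds $\mathcal U(u)$. So your proposal is not competing with an argument in the paper; it is a reconstruction of the classical proof, and in outline it is the right one --- Hadamard graph transform on Lipschitz sections with the spectral-gap contraction $\mu/\rho<1$, identification of the fixed graph with the backward-decay set defining $\mathcal U^+$, and the Hirsch--Pugh--Shub fiber contraction to upgrade to $C^r$ and to get continuous dependence on $f$. Two points deserve explicit care if you write it out. First, this is the \emph{pseudo}hyperbolic case: the spectrum of $df(0)$ on $X^-$ lies only in the \emph{closed} unit disc, so $L^-$ need not be a contraction; all you may assume, after passing to an adapted norm, is $\|L^-\|\le\mu$ with $\mu$ arbitrarily close to $1$, hence $1\le\mu<\rho$. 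Your Lipschitz and $C^0$ steps only use $\mu<\rho$, so they survive, and the $k$-jet fiber contraction rate $\mu\rho^{-k}$ is still $<1$ for every $k\le r$ because $\mu<\rho\le\rho^{k}$; but you should state this bunching inequality rather than appeal to a generic ``spectral gap,'' since with a genuinely neutral $X^-$ the naive ``one contraction factor'' language is misleading. Second, the identification of $\operatorname{graph}(g_*)$ with the specific intersection $\bigcap_n f^n\bigcap_m f^{-m}X(\rho^{m-n}R)$ requires the quantitative backward-decay characterization (backward orbits bounded by $\rho^{-m}R$ up to a constant), and the inclusion of that set into the graph needs an argument that two backward orbits with the same $X^+$-shadowing cannot separate in $X^-$ faster than $\mu^{m}$ while both decaying like $\rho^{-m}$; your sketch gestures at this correctly but it is the place where the non-contracting $X^-$ again forces you to compare growth rates rather than invoke plain hyperbolicity. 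With those two points made precise, your argument is essentially the proof in the cited literature, and nothing in the paper's use of the lemma (smooth dependence of $\mathcal U^+(\gamma_0)$ on the fixed point, invariance under the flow for small $t$) requires more than what your scheme delivers.
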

Moreover, the set $\mathcal U^+$ is mapped to itself under $f^{-1}$.  In our case, we take $f=\mathcal D_1$ to be the time one map of the dilation group $\mathcal D_t$.  The set $\mathcal U^+$, while a priori only invariant under $\mathcal D_{-1}$, is in fact invariant under the flow $\mathcal D_t$ for sufficiently small $t$ (see \cite{Shub} for details).  The manifold $\mathcal U^+=\mathcal U^+(\gamma_0)$ varies smoothly as the fixed point $\gamma_0$ varies along $\gamma$.  Therefore the $\mathcal U^+(\gamma_0)$ are locally the level sets of a pair of smooth functions $u,p_v:\mathcal N\to\mathbb R$, such that $du\wedge dp_v=0$.  We can then extend these functions by using invariance under $\mathcal D_t$ to all of $\mathcal N$, by setting $u=u\circ \mathcal D_t, p_v = p_v\circ \mathcal D_t$, and the level surfaces of $u,p_v$ are then minimal invariant manifolds of $\mathcal D_t$, which we here call the unstable manifolds.

Instead of appeal to this general theorem, we can construct the unstable manifolds more directly in this case.  The idea is that since $\widehat D$ commutes with the vector field $\nabla^\omega H$ on $T^*M$, the exponential map of the connection is equivariant with respect to the group $\widehat{\mathcal D}$, so the unstable manifolds are the exponentials of the negative part of the spectrum of $\widehat D$.  We must show that this can be made precise globally.

It is simplest to find the stable manifolds of the original dilation group $\mathcal D_t$ first.  The Lie derivative $\mathscr L_D$ defines an endomorphism $D_0$ of the tangent space $\gamma^{-1}TM$ because $\gamma$ is fixed by $\mathcal D$ and so $D=0$ on $\gamma$.  Because the exponential is equivariant under the dilation, the negative part of the spectrum $\ker(D_0+1)(D_0+2) \subset \gamma^{-1}TM$ is stable under $D_0$, and therefore its exponential is stable under $\mathcal D_t$.  But $D_0$ is strictly hyperbolic on the negative part of its spectrum, so the exponential map is complete on this subspace.

The argument for the unstable manifolds of the lifted group $\widehat{\mathcal D}_t$ follows the same idea.  Here we use the metric $-\widehat g = g_{ab}\theta^a\theta^b + g^{ab}\omega_a\omega_b$ as above, but now on all of the $2n+4$ dimensional manifold $T^*M$. The lifted dilation group $\widehat{\mathcal D}$ is a dilation of $\widehat g$, so the unstable manifold $\exp\ker(\widehat D_0+1)(\widehat D_0+2)$ for $\widehat g$ is a $2n+2$ dimensional manifold (we lose two dimensions now).  Here $\widehat D_0$ is the endomorphism induced on $\widehat\gamma^{-1}T(T^*M)$, where $\widehat\gamma$ is the trajectory of $\nabla^\omega H$ in the cotangent bundle, corresponding to the geodesic $\gamma$.  Now the unstable manifold for the restriction of $\widehat D$ to $\mathcal N$ is just the intersection with the null cone $\mathcal N$, clearly an invariant manifold for $\widehat D$.

\subsubsection{Example}

To fix some of the ideas of the proof, it is instructive to go codify its basic elements in the case when we know a priori that $g$ is a Rosen metric.  The canonical contact form is
$$\theta = p_udu + p_vdv + p_idx^i.$$
The generator of the scaling in the cotangent bundle is:
$$E=p_u\partial^{p_u} + p_v\partial^{p_v} + p_i\partial^i.$$
The dilation and the induced vector field on the cotangent bundle are
$${D} = 2v\partial_v + x^i\partial_i, \mathscr L_{D} = 2v\partial_v + x^i\partial_i - 2p_v\partial^{p_v} - p_i\partial^i.$$
The lift is therefore
$$\widehat {D} = \mathscr L_{D}+2E = 2v\partial_v + x^i\partial_i + 2p_u\partial^{p_u} + p_i\partial^i.$$
The Hamiltonian and its gradient are:
$$2H=2p_up_v - g^{ij}(u)p_ip_j,\quad \nabla^\omega H = p_u\partial_v + p_v\partial_u - g^{ij}(u)p_i\partial_j + \tfrac12\dot g^{ij}p_ip_j\partial^{p_u}.$$
The foliation by $\mathcal U$ is thus
$$\mathcal U(u_0) = \{u=u_0, p_v=1\} \subset \mathcal N$$
The Jacobi fields are:
$$P_i = \partial_i - p_i\partial_v, \quad Z=\partial_v,\qquad Q^j = (x^j-p_iH^{ij})\partial_v + H^{ij}\partial_i.$$
Finally, on the integral manifold of $P,Z$ containing $\gamma$, we have $p_i=0$, so that
$$\pi_*P_i = \partial_i,\quad \pi_* Z = \partial_v, \quad \pi_*(\nabla^\omega H) = \partial_u$$
and we recover the usual Rosen frame.

\section{Rosen universes}
This section contains a proof of Theorem \ref{RosenUniverse} characterizing Rosen universes.  Before proceeding to the proof, we investigate some examples.  Suppose a plane wave carries a global Brinkmann metric \eqref{Brinkmann1}
\begin{equation*}
G_\beta(p) = 2\,du\,dv + x^Tp(u)x\,du^2 - dx^Tdx.
\end{equation*}
A Rosen universe associated to the plane wave is $G_\rho(L^TL)=2\,du\,dv - dx^TL^TL\,dx$ where $L$ is a Lagrangian matrix for the Jacobi equation determined by the symmetric operator $p$.  The singular points (places where the tensor $G_\rho$ degenerates) are exactly the isolated points $\mathbb S\subset\mathbb U$ where the rank of $L$ drops.  We shall describe the structure of these singularities, which amounts to studying the degenerate points of the tensor $h=L^TL$.

The theorem can be made plausible by considering examples.  Consider the case of $n=1$, and $p$ is constant.  We consider the problem of placing a singular point at $u=0$.  The three cases are then essentially:
\begin{itemize}
\item $p=-1$: $\dot S + S^2-1=0$, $S=\tanh u$, $L=L_0\cosh u$.  [Here $S$ is globally regular, so there is no singular point.]
\item $p=0$: $\dot S + S^2=0$, $S=1/u$, $\dot L=SL$, $L=L_1u$.  
\item $p=1$: $\dot S + S^2+1=0$, $S=\cot u$, $L=L_1\sin u$.  
\end{itemize}
Only in the cases $p=0,p=1$ is there a singular point, and we have selected the unique $S$ putting it at $u=0$.  Note that in these cases, $S$ is $u^{-1} + O(1)$ and $L$ is $L_1u + O(u^3)$.  Thus in both singular cases ($p\ge 0$), we have $h=L_1^2u^2+O(u^4)$ ($L_1\not=0$), and in the nonsingular case ($p<0$), we have $h=O(1)$.

If $n>1$, and $p$ is constant, then the theorem is proven easily by eigendecomposing $p$ and applying the $n=1$ case to the factors.  In fact, if we were to assume that only $O(1)$ terms in $p$ contribute to the analysis of the singular point, we would already be done with one direction of the proof.  The actual proof must show that this plausible fact is indeed the case, by keeping track of the correction terms.

For the reverse direction, we again consider the case of $n=1$.  Let $h$ be a function with an isolated zero at $u=0$, assume that
$$K(u)= u^2/h(u)$$
extends smoothly at $u=0$ and $K'(0)=0$.  Then $K_0=K(0)\ne 0$, and $h=u^2/K_0 + O(u^4)$.  Going to $n>1$ is possible if we assume that we are in the case where $h(u)$ is diagonalizable (under a {\em constant} coordinate change).  Each singular point of $h(u)$, is either a singular point or regular point of one of the factors.  We already noted that the theorem is trivial at a regular point, and at a singular point it follows by the $n=1$ case.  The actual proof again involves keeping track of the corrections in the spectral decomposition of $K(u)$.

\subsection{Proof of Theorem \ref{RosenUniverse}}
Consider then a Rosen metric
$$G=2\,du\,dv - dx^Tg(u)dx$$
and suppose that the metric $g(u)$ degenerates at $u=0$.  Recall that the Rosen metric has the form
$$g(u) = L(u)^TL(u) $$
where $L(u)$ obeys
$$L''(u) + pL(u) =0.$$
and, away from the singular point $u=0$, there is a symmetric tensor $S(u)$ such that
$$L' = SL,$$
where
$$S' + S^2 + p =0.$$
By taking $u_0=0$, it suffices to prove the following:
\begin{theorem}
  A necessary and sufficient condition for the Rosen metric to come from a smooth Brinkmann metric in a neighborhood of the singular point $u=0$ is that the symmetric tensor
  $$K(u) = u^2g(u)^{-1},\quad u\ne 0,$$
  admits a smooth extension to $u=0$, whose derivative vanishes at $0$ $(K'(0)=0)$.
\end{theorem}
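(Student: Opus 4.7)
The plan is to work in the block decomposition $\mathbb X = V_0 \oplus V_0^\perp$ with $V_0 = \ker h(u_0)$, reducing both directions to a Taylor analysis near $u_0$, which I take to be $0$ without loss of generality. In block form, write $h = \bigl(\begin{smallmatrix} a & b \\ b^T & c \end{smallmatrix}\bigr)$ with $a(0) = 0$, $b(0) = 0$, and $c(0) > 0$. The identity $Kh = u^2 I$ combined with the Schur complement formula for $h^{-1}$ yields $K_{11} = u^2(a - bc^{-1}b^T)^{-1}$, $K_{12} = -K_{11}bc^{-1}$, and $K_{22} = u^2c^{-1} + c^{-1}b^T K_{11}bc^{-1}$. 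This dictionary translates regularity properties of $K$ into leading-order structure of $h$, and vice versa.

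For necessity, apply Lemma \ref{EveryBrinkmannRosen} to produce a smooth Lagrangian matrix $L$ with $h = L^TL$ satisfying $L'' + pL = 0$ for smooth symmetric $p$ and the isotropy condition $L^TL' = L'^TL$. The Jacobi equation forces $L_2|_{V_0} = -p(0)L_0|_{V_0} = 0$, so $L(u)|_{V_0} = uA(u)$ with $A$ smooth and $A'(0) = 0$. The isotropy relation at $u = 0$, namely $L_0^TL_1 = L_1^TL_0$, constrains $\op{im}(L_1|_{V_0})$ to lie in $(\op{im} L_0)^\perp$, and a dimension count forces $A(0)$ to be an isomorphism onto this complement. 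Plugging this information into the Schur formulas, the factor $u^2$ in $K_{11}$ cancels against the vanishing of $a - bc^{-1}b^T$ to order $u^2$, leaving a smooth expression at $0$, and the even-in-$u$ structure coming from $A'(0) = 0$ (together with the explicit $u^2$-prefactors on $K_{12}$ and $K_{22}$) yields $K'(0) = 0$.

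For sufficiency I reverse this argument. From smoothness of $K$ and the hypothesis $K'(0) = 0$, differentiating $Kh = u^2 I$ at $u = 0$ yields $a'(0) = 0$ and $b'(0) = 0$, hence $b = O(u^2)$ and $a = u^2 K_{11}(u)^{-1} + O(u^4)$. The key quantity is the Brinkmann tidal candidate $M := \tfrac14 \dot h h^{-1}\dot h - \tfrac12\ddot h$, which using $h^{-1} = K/u^2$ equals $\tfrac{1}{4u^2}\dot h K \dot h - \tfrac12\ddot h$. A direct block computation shows $\dot h K \dot h$ vanishes to order $u^2$ at $0$ (the leading vanishing of $\dot h$ on $V_0$ together with $K'(0) = 0$ produces the order-$u^2$ cancellation), so $M$ extends smoothly across $u=0$. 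Construct the Lagrangian $L$ by solving the first-order ODE $L^TL' = \tfrac12 h'$ on the regular set; using the established form $L(u)|_{V_0} = uA(u)$ with $A'(0) = 0$, the identity $-L''L^{-1} = L^{-T}ML^{-1}$ (valid for isotropic $L$) produces the required smooth cancellation of the $O(1/u)$ poles of $L^{-1}$ against the induced double-order vanishing of $M$ in the appropriate block, making $p := -L''L^{-1}$ smooth at $u_0$. Lemma \ref{EveryBrinkmannRosen} then packages this $p$ into a smooth Brinkmann metric agreeing with the Rosen metric on $u \ne u_0$.

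The main obstacle is precisely the last step: showing that the formal expression $p = -L''L^{-1}$ genuinely extends smoothly across the singular point despite $L^{-1}$ having an $O(1/u)$ singularity there, and in parallel that the ODE-constructed Lagrangian $L$ itself admits a smooth extension across $u_0$ rather than merely a limit. The resolution rests entirely on the detailed block bookkeeping: the condition $K'(0) = 0$, refined via the Schur dictionary, simultaneously guarantees (i) that $M$ is smooth, (ii) that the $V_0$-block of $M$ vanishes to order exactly $u^2$ at $0$, and (iii) that $L|_{V_0}$ vanishes to order exactly $u$ with vanishing first correction, which matches the pole orders of $L^{-1}$ in the product $L^{-T}ML^{-1}$ block by block. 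Uniqueness of $K$ is then immediate from $K(u)h(u) = (u-u_0)^2 I$ holding on the dense open set $u \ne u_0$.
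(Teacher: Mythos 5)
Your first half (smooth Brinkmann continuation $\Rightarrow$ $K$ smooth with $K'(0)=0$) is essentially sound, and is a legitimate variant of the paper's argument: the paper expands $L=L_0+uL_1+\cdots$, block-inverts it with respect to the projections onto $\operatorname{im}L_0$ and $\ker L_0^T$, and obtains the vanishing of the off-diagonal block of $L_1$ from the ansatz $S=Au^{-1}+B(u)$ for the Sachs solution; you extract the same two inputs --- $L''(0)|_{V_0}=0$ from the Jacobi equation and $\operatorname{im}\bigl(L'(0)|_{V_0}\bigr)\perp\operatorname{im}L(0)$ from isotropy at $u=0$ --- and run them through the Schur-complement expressions for $K=u^2h^{-1}$ relative to $V_0=\ker h(0)$. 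That computation does deliver smoothness of $K$ and $K'(0)=0$, and if anything your bookkeeping is more explicit than the paper's about why the order-$u$ terms in $K$ cancel.

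The converse half has a genuine gap, located exactly where you flag ``the main obstacle''. Your plan needs a Lagrangian matrix $L$, defined and smooth \emph{across} $u=0$, with $L|_{V_0}=uA(u)$ and $A'(0)=0$; but that structure was ``established'' in the first half from the Jacobi equation $L''+pL=0$ with $p$ smooth, i.e.\ from the conclusion now being proved. From the hypotheses you only get $L$ on the punctured interval (solutions of $L^TL'=\tfrac12 h'$, each determined only up to a constant orthogonal factor on each side of $0$), and nothing you say shows that any choice extends even continuously through $0$, let alone smoothly with the stated vanishing pattern; matching the two sides is precisely the delicate issue the paper's separate local-characterization section must handle with the orthogonal matrices $R_\pm$. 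Nor does your intermediate milestone carry much weight: smoothness of $M=\tfrac14\dot h h^{-1}\dot h-\tfrac12\ddot h$ is cheap and does not force removability --- for $n=1$, $h=u^2(1+u)^2$ gives $M=-2u(1+u)$ smooth while $p=-2/(u(1+u))$ blows up (there $K'(0)=-2\neq0$) --- so the entire burden sits on the unproved smooth extension of $L$ and the merely asserted block-by-block pole cancellation in $p=L^{-T}ML^{-1}$. The paper's proof of this direction avoids constructing $L$ or $p$ at the singularity altogether: it passes to the conjugate Lagrangian frame of Jacobi fields $x^i\partial_v+H^{ij}\partial_j$ with $\dot H=g^{-1}=u^{-2}K$, uses $K'(0)=0$ to see that $uH$ extends smoothly with value $-K_0$ at $u=0$, and verifies directly that the transformed coefficients $H\dot H^{-1}H=(uH)K^{-1}(uH)$ are smooth at $u=0$ with invertible, hence positive definite, limit $K_0+g(0)$, yielding a regular Rosen metric near $0$. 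Either adopt such a route, or supply an actual proof, from the structure of $K$ alone, that a suitable $L$ extends smoothly through $u=0$ with first-order vanishing on $V_0$ and $A'(0)=0$; as written, the sufficiency half is circular.
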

\begin{proof}
  {\em Necessity}:
  Assume that the stated condition holds.  We will then produce another Rosen metric in a neighborhood of $0$ that is smooth at this point.  Let $\partial_i$ be the basis of Jacobi fields giving the original Rosen metric.  We define a new  Lagrangian basis of Jacobi fields by
  $$X^i = x^i\partial_v + H^{ij}\partial_j $$
  where the symmetric tensor $H$ satisfies $\dot H(u) = g^{-1}(u)$, $u\ne 0$.  (It is easy to see that $X^i$ is a Killing vector, a more direct argument follows the proof.)  For such a new frame to produce a coordinate system that is not singular at $u=0$, we must have that $G(X^i,X^j)$ is smooth and defines a nonsingular quadratic form.  The $G(X^i,X^j)$ are the matrix coefficients of the tensor $H\dot H^{-1}H$.

Thus we want  to compute:
\[ G(u) = H(u)\dot{H}^{-1}(u) H(u) = H(u)g(u)H(u), \]
Here we have:
\[ g(u)K(u) = K(u) g(u) = u^2 I.\]
Note in particular that $g_0 K_0 = K_0 g_0 = 0$, where $g_0 = g(0)$ and $K(0) = K_0$.
So we have, for $M(u)$ smooth, such that $M(0) = I$, the identity:
\[K(u) = K_0 + u^2 M'(u).\]
Now, with
\begin{align*}
  H &= \int g^{-1}(u) du = \int u^{-2} K(u) du \\
      &= - K_0 u^{-1} + M(u),
\end{align*}
this gives:
\[ uH = - K_0 + uM(u).\]
Then we get:
\begin{align*}
  G(u) &= H\dot{H}^{-1} H = uH K^{-1} Hu \\
       &= (-K_0 + uM(u))K^{-1}(- K_0 + uM(u))\\
       &= K_0K^{-1}K_0  + M(u)gM(u) - K_0 K^{-1} uM(u) - uM(u) K^{-1} K_0\\
       &= (I - M'(u)g(u))K_0 + M(u)g(u)M(u) \\ &\qquad - (I - M'(u)g(u))uM(u) - uM(u) (I - M'(u)g(u)).
\end{align*}
This is plainly smooth near $u = 0$ with value at $u = 0$:
\[ G(0) = K_0 + g_0.\]
We would like $G(0)$ to be invertible.    
Now we have
\[ u^2 I = g(u)K(u) = g(u)(K_0 + u^2 M'(u)), \]
Work to order $u^2$, with $g(u) = g_0 + ug_1 + u^2 g_2 + O(u^3)$.
Then we get:
\begin{align*}
 g_0 K_0 &= K_0 g_0 = g_1 K_0 = K_0 g_1 = 0, \\
  I &= g_2 K_0 + g_0M'(0) \\
         &= K_0 g_2 + M'(0) g_0.
\end{align*}
If now 
\[ K_0 X = 0, \]
then we get:
\[ X = g_0M'(0) X, \]
So the kernel of $K_0$ is a subspace of the range of $g_0$.
Similarly, if  
\[ g_0 Y = 0, \]
then we get:
\[ Y = K_0g_2 Y.\]
So the kernel of $g_0$ is a subspace of the range of $K_0$.

Since we have also $g_0K_0 = K_0 g_0 = 0$, we get that the kernel of $g_0$ equals the range of $K_0$; vice versa, the kernel of $K_0$ equals the range of $g_0$. In particular we get that $G(0) = K_0 + g_0$ is invertible.   But $G(0)$ is the smooth limit of positive definite endomorphisms, so, being invertible, is itself positive definite.

  {\em Sufficiency}:
  Conversely, assume that a singular Rosen has a smooth Brinkmann continuation at $u=0$.
  Let $S$ be the solution of the Sachs equation corresponding to the Rosen form.  So we have
  $$\dot S + S^2 + p =0,\quad \dot L = SL,\quad g(u) = L^TL.$$
  Also, we have $\ddot L + p L=0$.  %Let $X$ be in the kernel of $L(0)$.  If $\dot L(0)X$ were also zero, then uniqueness of solutions to a second order equation give $LX\equiv 0$, which in turn implies that $X=0$.

  Consider a decomposition
  $$L = L_0 + u L_1 + O(u^2).$$
  We consider a block diagonalization of $L_0$, using a left and right projection operator $P,\widetilde P$, where $P$ is the orthogonal projection onto the image of $L_0$, $\widetilde P$ is the orthogonal projection whose kernel is the kernel of $L_0$:
  $$PL_0=L_0\widetilde P = L_0, \quad P^2=P, \widetilde P^2=\widetilde P$$
  $$QL_0 = L_0\widetilde Q=0,\qquad Q=I-P, \widetilde Q = I-\widetilde P.$$
  These orthogonal projection operators are thus those associated to the four subspaces of the matrix $L_0$:
  $$\operatorname{im} P = \operatorname{im} L_0, \quad \operatorname{im} \widetilde P = \operatorname{im} L_0^T, \quad \operatorname{im} Q = \operatorname{ker} L_0^T, \quad \operatorname{im} \widetilde Q = \operatorname{ker} L_0.$$
  
  In terms of the left and right bases then,
  $$L_0 = \begin{bmatrix}K_0&0\\0&0\end{bmatrix}.$$
  We have:
  $$L^{-1} = u^{-1}(QL_1\widetilde Q)^{-1} + O(1).$$
  Therefore $uL^{-1}$ is smooth, and hence so is $K=uL^{-1}L^{-T}u$.

  The condition on the derivative of $K$ at $u=0$ requires more control over the $O(1)$ term in the above.  Note that we have not yet used the Sachs equation in an essential way.   So suppose $\dot L = SL$.  [Lemma: $S$ has a decomposition $S=Au^{-1}+B(u)$ where $A,B$ are symmetric, $A$ is constant, and $B(u)$ is smooth.] Assuming the ansatz that $S=Au^{-1}+B(u)$, the Sachs equation now reads
  $$(A^2-A)u^{-2} + u^{-1}(AB+BA) + \dot B + B^2 + p = 0.$$
  So $A^2=A$ and $AB+BA=O(u)$.  Since $A$ is symmetric, it is an orthogonal projection.  The right hand side of $\dot LX = SLX$ (any $X\in\mathbb X^2$) remains continuous.  The ansatz then demands that $AL_0=0$.  Because of the rank, it then follows that $A=Q$ is the orthogonal projection whose kernel is the image of $L_0$, i.e., onto the kernel of $L_0^T$.  It then follows that $P=I-A$ is the orthogonal projection onto the image of $L_0$.

  The original equation $\dot L = SL$ now becomes
  $$L_1 + O(u) = u^{-1}QL_0 + BL_0 + QL_1 = BL_0 + QL_1\implies PL_1 = BL_0.$$
  In particular, $PL_1\widetilde Q=0$ because $L_0\widetilde Q=0$, and so the corresponding off-diagonal block of $L_1$ is zero.
  % Recall that if a block matrix is
  % $$\begin{bmatrix}A & B\\ C & D\end{bmatrix} $$
  % with $A$ and $\Delta = D-CA^{-1}B$ invertible, then
  % $$\begin{bmatrix}A & B\\ C & D\end{bmatrix}^{-1} = \begin{bmatrix}A^{-1} + A^{-1}B\Delta^{-1}CA^{-1} & - A^{-1}B\Delta^{-1}\\ -\Delta^{-1}CA^{-1}& \Delta^{-1}\end{bmatrix}.$$

  Putting $L_1=\begin{bmatrix}L_{11}&0\\ L_{21} & L_{22}\end{bmatrix}$ and $A=K_0+L_{11}u$,
  $$\begin{bmatrix}K_0+uL_{11} & 0\\ uL_{21} & uL_{22}\end{bmatrix}^{-1} = \begin{bmatrix}A^{-1} & 0\\ -L_{22}^{-1}L_{21}A^{-1}& u^{-1}L_{22}^{-1}\end{bmatrix},$$
  and $uL^{-1}L^{-T}u$ has no terms of order $u$.  (Note that because the projections $\widetilde P,\widetilde Q$ are complementary orthogonal projections, the block product of $L^{-1}L^{-T}$ is well-defined.)

\end{proof}

A lemma that was used but not proved is:
\begin{lemma}
Let $S$ be a solution of the Sachs equation and $L$ an associated solution of $\dot L=SL$.  Then $S=O(u^{-1})$ at any singular point.
\end{lemma}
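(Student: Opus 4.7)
The plan is to show that at any isolated singular point $u_0$ of $\det L(u)$, which we may assume without loss of generality to be $u_0=0$, the matrix $S(u)=\dot L(u)L(u)^{-1}$ blows up at most like $u^{-1}$. Observe that $L$ itself is smooth on all of $\mathbb U$ because it solves the second-order linear ODE $\ddot L + pL = 0$ with smooth coefficients; the strategy is to exploit this smoothness, together with a Lagrangian symmetry condition and ODE uniqueness, to impose enough block structure on $L(0)$ and $\dot L(0)$ to directly estimate the blow-up of $L^{-1}$ and hence of $S$. The symmetry of $S$ implies $L^T\dot L = \dot L^T L$, a relation preserved in $u$ by a standard Wronskian calculation and hence holding at $u=0$ as well.

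Two algebraic facts about $L(0)$ and $\dot L(0)$ are central. First, by uniqueness of solutions to $\ddot L + pL = 0$, if a vector $v\in \ker L(0)$ also satisfied $\dot L(0)v=0$, the solution $Lv$ would vanish identically, contradicting $\det L(u)\ne 0$ for $u\ne 0$ near $0$; hence $\dot L(0)$ is injective on $\ker L(0)$. Second, the relation $L(0)^T\dot L(0)=\dot L(0)^T L(0)$, combined with the fact that $L(0)$ restricts to an isomorphism $\ker L(0)^\perp \to \operatorname{im} L(0)$, forces $\dot L(0)$ to send $\ker L(0)$ into $(\operatorname{im} L(0))^\perp = \ker L(0)^T$. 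Choosing orthonormal bases adapted to the decomposition $\ker L(0)^\perp \oplus \ker L(0)$ of the domain and $\operatorname{im} L(0)\oplus \ker L(0)^T$ of the codomain, these two facts yield the block form
\[ L(0)=\begin{bmatrix}K_0 & 0\\ 0 & 0\end{bmatrix},\qquad \dot L(0)=\begin{bmatrix}A & 0\\ C & D\end{bmatrix}, \]
with both $K_0$ and $D$ invertible (the invertibility of $D$ being equivalent to the injectivity of $\dot L(0)$ on $\ker L(0)$ together with the dimension count).

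The remaining computation is then direct. Taylor expanding, we find
\[ L(u)=\begin{bmatrix}K_0+uA+O(u^2) & O(u^2) \\ uC+O(u^2) & uD+O(u^2)\end{bmatrix}, \]
where the crucial point is that the upper-right block is $O(u^2)$ rather than $O(u)$, since both $L_{12}(0)=0$ and $\dot L_{12}(0)=0$ (the latter by the Lagrangian constraint). A block inversion using the Schur complement of the lower-right then gives that $L(u)^{-1}$ has a simple pole only in the lower-right block, with leading term $u^{-1}D^{-1}$; the upper-left block stays bounded, and the off-diagonal blocks are $O(1)$ (in fact the upper-right is $O(u)$, after using the $O(u^2)$ estimate above). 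Multiplying out $S(u)=\dot L(u) L(u)^{-1}$ block by block, one finds
\[ S(u)=u^{-1}P_0 + O(1), \]
where $P_0$ denotes the orthogonal projection onto $\ker L(0)^T$, which establishes the lemma.

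The only subtle ingredient is the use of the Lagrangian relation to upgrade the upper-right block of $L(u)$ from $O(u)$ to $O(u^2)$; this ensures that the $u^{-1}$ singularity of the Schur complement inverse does not propagate into the off-diagonal blocks of $L^{-1}$ in a manner that would spoil the eventual cancellation giving $S=O(u^{-1})$. Once this is noted, the remaining estimates are routine bookkeeping in the block decomposition, and the leading residue $P_0$ being an orthogonal projection is reassuringly consistent with the constraint $P_0^2=P_0$ that the Sachs equation $\dot S + S^2 + p = 0$ would independently impose on the residue of any $u^{-1}$ pole in $S$.
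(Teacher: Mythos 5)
Your proof is correct, and its skeleton is the same as the paper's: expand $L(u)=L(0)+u\dot L(0)+O(u^2)$, block-decompose adapted to $\ker L(0)^\perp\oplus\ker L(0)$ in the domain and $\operatorname{im}L(0)\oplus\ker L(0)^T$ in the codomain, invert by a Schur complement, and read off $S=\dot LL^{-1}=O(u^{-1})$. The genuine difference is that you make explicit the two facts on which the inversion hinges, which the paper's very terse argument leaves implicit: first, that $\dot L(0)$ is injective on $\ker L(0)$ (ODE uniqueness; this is the paper's statement that the stacked matrix of $L_0,L_1$ has full rank), and second, and more importantly, that the Wronskian identity $L^T\dot L=\dot L^TL$ — i.e.\ the symmetry of $S$ — persists to $u=0$ and forces $\dot L(0)\bigl(\ker L(0)\bigr)\subseteq\ker L(0)^T$. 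That second fact is what makes the lower-right block $D$ square and invertible, so that the Schur complement is $u\,(D+O(u))$ and the pole of $L^{-1}$ is simple; the full-rank property alone does not suffice, and indeed without the Lagrangian/symmetry hypothesis the conclusion can fail (e.g.\ $p=0$, $L=\left(\begin{smallmatrix}1&u\\ u&0\end{smallmatrix}\right)$ gives a non-symmetric Sachs solution with $S\sim u^{-2}$), so your use of the isotropy relation is genuinely load-bearing rather than bookkeeping. You also obtain slightly more than the lemma asserts — the residue of the simple pole is the orthogonal projection onto $\ker L(0)^T$ — which agrees with the projection $A=Q$ identified in the paper's sufficiency argument for Theorem \ref{RosenUniverse}.
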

\begin{proof}
  We shall use the fact that $S=\dot LL^{-1}$. Write $L=L_0+uL_1+O(u^2)$.  The $2n\times n$ matrix $\begin{bmatrix}L_0\\ L_1\end{bmatrix}$ is of full rank $n$.
Recall that if a block matrix is
  $$\begin{bmatrix}A & B\\ C & D\end{bmatrix} $$
  with $A$ and $\Delta = D-CA^{-1}B$ invertible, then
  $$\begin{bmatrix}A & B\\ C & D\end{bmatrix}^{-1} = \begin{bmatrix}A^{-1} + A^{-1}B\Delta^{-1}CA^{-1} & - A^{-1}B\Delta^{-1}\\ -\Delta^{-1}CA^{-1}& \Delta^{-1}\end{bmatrix}.$$
  Applied to the block decomposition of $L_0+uL_1$,
  $$\begin{bmatrix}K_0 + u L_{11} & uL_{12}\\ uL_{21} & uL_{22}\end{bmatrix}^{-1} = \begin{bmatrix} A^{-1} + O(u)& O(1)\\ O(1) & O(u^{-1})\end{bmatrix}.$$
  Hence
  $$L_1(L_0+uL_1)^{-1} = L_1\begin{bmatrix} A^{-1} + O(u)& O(1)\\ O(1) & 0\end{bmatrix} + O(u^{-1}) $$
  as required.
  
\end{proof}

We also provide the following lemma, not strictly required, but for the reader's convenience since it is perhaps not obvious that $x^i\partial_v + H^{ij}\partial_j$ indeed comprises a Lagrangian basis of Jacobi fields:
\begin{lemma}
  Let $L,S$ be a solution of the Sachs system: $\dot S+S^2+p=0$, $\dot L = SL$.  Let $H$ be a symmetric matrix such that $L\dot HL^T=I$.  Then, at a point where $L,H$ are both non-singular, $J=LH$ and $T=S + (LHL^T)^{-1}$ give a solution of the Sachs system.
\end{lemma}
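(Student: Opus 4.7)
The plan is to verify directly that $J = LH$ and $T = S + (LHL^T)^{-1}$ satisfy both $\dot J = TJ$ and the Sachs equation $\dot T + T^2 + p = 0$, together with the symmetry of $T$, using only the hypothesized relations $\dot L = SL$, $\dot S + S^2 + p = 0$, $L\dot H L^T = I$, and the symmetry of $S$ and $H$. No initial-value or uniqueness arguments are required; everything is a pointwise algebraic identity on the open set where $L$ and $H$ are non-singular.

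For the first equation, I would extract $\dot H = L^{-1}L^{-T}$ from $L\dot H L^T = I$, and then compute $\dot J = \dot L H + L\dot H = SLH + L^{-T}$, while on the other side $TJ = SLH + (LHL^T)^{-1}LH = SLH + L^{-T}H^{-1}L^{-1}LH = SLH + L^{-T}$. The two expressions agree.

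The core of the proof is checking the Sachs equation for $T$. I would introduce $M := LHL^T$, which is symmetric because $H$ is, so that $T = S + M^{-1}$ is symmetric as well. Differentiating $M$ using $\dot L = SL$, $L\dot H L^T = I$, and the symmetry of $S$ gives the key identity
$$\dot M = \dot L H L^T + L\dot H L^T + LH\dot L^T = SM + I + MS.$$
Expanding $\dot T + T^2 + p$ and collecting the $\dot S + S^2 + p = 0$ terms, the remainder is
$$SM^{-1} + M^{-1}S + M^{-2} - M^{-1}\dot M M^{-1} = M^{-1}\bigl(MS + SM + I - \dot M\bigr)M^{-1},$$
which vanishes by the key identity.

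There is no real obstacle in this proof; the statement reduces to two short algebraic identities. The only point that deserves a second glance is the symmetry of $T$ (needed so that $(J,T)$ is a genuine Sachs solution and not merely a pair satisfying two equations), and this is immediate once $M$ is recognized as symmetric.
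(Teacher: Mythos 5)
Your proof is correct and follows essentially the same route as the paper's: both rest on the identity $\frac{d}{du}(LHL^T) = S(LHL^T) + (LHL^T)S + I$ and then verify $\dot J = TJ$ and the Sachs equation for $T$ by direct algebra. Your added remark on the symmetry of $T$ (from the symmetry of $M = LHL^T$) is a harmless and sensible supplement.
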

\begin{proof}
  Note that
  $$\frac{d}{du}(LHL^T) = S(LHL^T) + (LHL^T)S + I.$$
  So
  \begin{align*}
    \frac{d}{du}\left(S + (LHL^T)^{-1}\right)
    &= -S^2-p - S(LHL^T)^{-1} - (LHL^T)^{-1}S -(LHL^T)^{-2} \\
    &= -T^2 - p.
  \end{align*}
  Also,
  $$\dot J = SLH + (LHL^T)^{-1}LH = TJ.$$
\end{proof}

\section{Heisenberg symmetries}

\subsection{Jacobi fields and symmetries of plane waves}
Suppose we start with a Brinkmann metric in the form \eqref{Brinkmann1}:
$$G = 2\,du\,dv + x^Tp(u)x\,du^2 - dx^Tdx.$$
The equation for an (abreast) Jacobi field is
$$\ddot J + pJ=0.$$
\begin{lemma}
  If $J$ is an abreast Jacobi field, then the vector field
  \begin{equation}\label{JacobiKilling}
    X_J=x^T\dot J\partial_v + J(\partial_x)
  \end{equation}
  is a Killing field.
\end{lemma}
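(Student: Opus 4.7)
My plan is to verify the Killing equation $\mathcal{L}_{X_J} G = 0$ by a direct component-wise computation using Lie brackets and the Jacobi equation $\ddot J + pJ = 0$. Because the metric is simple (the only nontrivial components are $g_{uv}=1$, $g_{uu}=x^Tpx$, and $g_{ij}=-\delta_{ij}$) and because the coefficients of $X_J$ have a particularly mild dependence on the coordinates (linear in $x$ with $u$-dependent coefficients, and independent of $v$), the computation reduces to a small collection of line-by-line checks.

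First I would compute the relevant commutators:
\begin{align*}
[X_J,\partial_u] &= -x^T\ddot J\,\partial_v - \dot J^i\,\partial_i,\\
[X_J,\partial_v] &= 0,\\
[X_J,\partial_j] &= -\dot J^j\,\partial_v,
\end{align*}
using that $X_J$ is independent of $v$ and that $J$ depends only on $u$. Then I would apply the formula $(\mathcal{L}_{X_J}G)(Y,Z) = X_J(G(Y,Z)) - G([X_J,Y],Z) - G(Y,[X_J,Z])$ to each pair of coordinate basis fields $(\partial_u,\partial_u)$, $(\partial_u,\partial_v)$, $(\partial_u,\partial_j)$, $(\partial_v,\partial_v)$, $(\partial_v,\partial_j)$, $(\partial_i,\partial_j)$.

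Most of these vanish essentially immediately because $G(\partial_v,\cdot)$ is supported only on $\partial_u$ and the brackets with $\partial_v$ and $\partial_j$ land in $\partial_v$. The single substantive check is
\[
(\mathcal{L}_{X_J}G)(\partial_u,\partial_u) = J^i\partial_i(x^Tp x) + 2x^T\ddot J = 2x^T(pJ+\ddot J),
\]
where I use $J^i\partial_i(x^Tpx)=2J^Tpx=2x^TpJ$ by symmetry of $p$; this vanishes by the Jacobi equation. The other nontrivial cancellation is
\[
(\mathcal{L}_{X_J}G)(\partial_u,\partial_j) = -G(-\dot J^i\partial_i,\partial_j) - G(\partial_u,-\dot J^j\partial_v) = -\dot J^j + \dot J^j = 0,
\]
in which the $dx^2$ term and the $du\,dv$ term exactly compensate.

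The main (and essentially only) obstacle is the $(u,u)$ component, and it is resolved precisely by the two facts already baked into the setup: that $p$ is symmetric (so $J^Tpx=x^TpJ$) and that $J$ satisfies the Jacobi equation. I expect no other subtleties, since the calculation is entirely finite-dimensional and linear in $J$.
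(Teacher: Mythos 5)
Your proposal is correct and is essentially the paper's argument: the paper verifies $\mathscr L_{X_J}G=0$ in one line by computing $\mathscr L_{X_J}G = 2\,du\,d(x^T\dot J) + 2x^TpJ\,du^2 - 2\,dJ^T dx$, where the $du\,dx$ terms cancel and the $du^2$ term vanishes by the Jacobi equation and the symmetry of $p$ — exactly the two cancellations you identify. Your component-wise check via brackets is just the coordinate rendering of the same direct computation, so there is nothing substantively different to flag.
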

\begin{proof}
  We have
  $$\mathscr L_{X_J}G = 2\,du\, d(x^T\dot J) + 2x^TpJ\,du^2 - 2dJ^Tdx=0.$$
\end{proof}

\begin{lemma}\label{BracketJacobi}
  IF $J,K$ are abreast Jacobi fields, and $X_J$ is defined by \eqref{JacobiKilling}, then
  $$[X_J,X_K] = \omega(J,K)\partial_v$$
  where $\omega(J,K)$ is the symplectic form $\omega(J,K) =J^T\dot K-K^T\dot J$.  
\end{lemma}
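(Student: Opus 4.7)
The plan is to compute $[X_J, X_K]$ directly in the coordinate expression. Write
\[
X_J = (x^T \dot J)\partial_v + J^i\partial_i, \qquad X_K = (x^T\dot K)\partial_v + K^j\partial_j,
\]
where indices run over the $\mathbb{X}$ coordinates and $J,K$ are vector-valued functions of $u$ alone (since they are Jacobi fields along the central null geodesic). Crucially, neither $X_J$ nor $X_K$ has a $\partial_u$ component, so each annihilates any function of $u$; in particular $X_J(\dot K)=0$, $X_J(J)=0$, and so forth.

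Given this, expanding $[X_J,X_K]=X_J X_K - X_K X_J$ kills all terms coming from differentiating the $u$-dependent coefficients. The only surviving contributions come from $X_J$ differentiating the $x^T$ factor inside $X_K$'s $\partial_v$ coefficient (and vice versa):
\[
X_J(x^T \dot K) = J^i\partial_i(x^b)\dot K_b = J^T\dot K,
\]
and similarly $X_K(x^T\dot J) = K^T\dot J$. The $\partial_i$ components cancel outright since $X_J(K^j)=X_K(J^i)=0$. Therefore
\[
[X_J,X_K] = \bigl(J^T\dot K - K^T\dot J\bigr)\partial_v = \omega(J,K)\,\partial_v,
\]
which is the claim.

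There is essentially no obstacle: the computation is immediate once one notices that $J$ and $K$ depend only on $u$ while $X_J,X_K$ carry no $\partial_u$. If anything, the only thing worth remarking is the consistency check that $\omega(J,K)$ is indeed constant in $u$ (hence the right-hand side is a genuine vector field, not a $u$-dependent multiple of $\partial_v$); this follows from $\frac{d}{du}\omega(J,K) = J^T\ddot K - \ddot J^T K = -J^T pK + (pJ)^T K = 0$ by symmetry of $p$, consistent with the fact that $\partial_v$ is the central element $Z$ of the Heisenberg algebra identified in the plane wave's symmetry structure.
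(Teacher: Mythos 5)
Your computation is correct and is exactly the direct coordinate calculation the paper leaves implicit for this lemma: since $J,K$ depend on $u$ alone and $X_J,X_K$ carry no $\partial_u$ component, the only surviving terms are $X_J(x^T\dot K)=J^T\dot K$ and $X_K(x^T\dot J)=K^T\dot J$, giving $[X_J,X_K]=\omega(J,K)\partial_v$. Your closing remark that $\omega(J,K)$ is $u$-independent (via $\ddot J+pJ=0$ and the symmetry of $p$) matches the paper's earlier observation about the symplectic form and is a worthwhile sanity check, though not needed for the bracket identity itself.
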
 

This suggests the following:
\begin{definition}
  Let $(\mathbb J,\omega)$ be a symplectic vector space.  The associated Heisenberg algebra is the vector space $\mathbb R\oplus\mathbb J$, with the Lie bracket
  $$[ v\oplus x, v'\oplus x' ] = \omega(x,x')\oplus 0.$$
\end{definition}

\begin{theorem}
  Let $\mathbb J$ denote the space of solutions to $\ddot J + pJ=0$, and define $X_J$ by \eqref{JacobiKilling} for $J\in\mathbb R$.  Then the set of linear combinations $X_{k,J} = k\partial_v + X_J, k\in\mathbb R, J\in\mathbb J$, is a Heisenberg algebra of Killing vector fields of the plane wave $\mathbb M$.  Moreover, the vector field $X_{k,J}$ integrates to a global isometry of $\mathbb M$.
\end{theorem}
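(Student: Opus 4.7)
The plan is to verify three facts in order: that each $X_{k,J}$ is a Killing field, that the collection forms a Heisenberg Lie algebra under the vector-field bracket, and that each $X_{k,J}$ integrates to a globally defined one-parameter group of isometries.

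First, since the coefficients of the Brinkmann metric \eqref{Brinkmann1} depend only on $u$ and $x$, the coordinate vector field $\partial_v$ is immediately seen to be Killing. Combined with the preceding lemma, which gives that $X_J$ is Killing for each abreast Jacobi field $J$, linearity of the Killing equation yields that $X_{k,J} = k\partial_v + X_J$ is Killing for every $(k,J)\in \mathbb R\oplus \mathbb J$.

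Next, to check the Heisenberg structure, note that the scalar coefficient $x^T\dot J(u)$ and the components of $J(u)$ have no $v$-dependence, so $[\partial_v, X_J]=0$. Combined with Lemma \ref{BracketJacobi}, this yields
$$[X_{k,J},\, X_{k',K}] = [X_J,X_K] = \omega(J,K)\partial_v = X_{\omega(J,K),\,0}.$$
Identifying $k\oplus J\in \mathbb R\oplus \mathbb J$ with $X_{k,J}$, this bracket matches exactly the Heisenberg bracket $[k\oplus J,\, k'\oplus K] = \omega(J,K)\oplus 0$ of the preceding definition. The identification is injective, since the $\partial_x$-components determine $J$ and the $\partial_v$-component then determines $k$, so the span of the $X_{k,J}$ inherits an honest Heisenberg algebra structure.

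For global integration, I would directly solve the ODE system generated by $X_{k,J}$:
$$\dot u = 0,\qquad \dot v = k + x^T\dot J(u),\qquad \dot x = J(u).$$
The first equation gives $u(t) = u_0$ for all $t$, which reduces the remaining equations to an affine system with constant coefficients along each wave front. The explicit solutions
$$x(t) = x_0 + t\,J(u_0),\qquad v(t) = v_0 + t\bigl(k + x_0^T\dot J(u_0)\bigr) + \tfrac{t^2}{2}\,J(u_0)^T\dot J(u_0)$$
are defined for all $t\in\mathbb R$ and all initial data, so the flow is complete and, its generator being Killing, consists of global isometries. The only conceptual point requiring care is the completeness, which cannot be read off from general theorems (the manifold is typically geodesically incomplete); however, the conservation of $u$ reduces the ODE to one with constant coefficients, making completeness transparent.
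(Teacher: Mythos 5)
Your proposal is correct and follows essentially the same route as the paper: the Killing and bracket facts are reduced to the two preceding lemmas, and global integrability is established by explicitly solving the flow equations along each wave front (where $u$ is conserved), obtaining the same formulas $x(t)=x_0+tJ(u_0)$ and $v(t)=v_0+t\bigl(k+x_0^T\dot J(u_0)\bigr)+\tfrac{t^2}{2}J(u_0)^T\dot J(u_0)$ that the paper uses (at $t=1$) to exhibit the global isometry. Your extra remarks on injectivity of $(k,J)\mapsto X_{k,J}$ and on why completeness is transparent are fine refinements of the same argument, not a different method.
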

\begin{proof}
  The first part of the theorem is the content of Lemma \eqref{BracketJacobi}.

  We want to find the one-parameter local group of diffeomorphisms with
  \begin{align*}
    v' &= k + x^T\dot J\\
    x' &= J.
  \end{align*}
  We have $x=x_0 + tJ$, $v=v_0 + t(k + x_0^T\dot J) + \frac{t^2}{2}J^T\dot J$.  Taking $t=1$, the diffeomorphism is
  $$[u,v,x]\mapsto [u,v + k + \dot J^Tx + 2^{-1}J^T\dot J, x + J]$$
  which is obviously globally defined and preserves the metric everywhere.
\end{proof}

In the course of the proof, we identified a group of global isometries of a plane wave,
\begin{equation}\label{JacobiIsometry}
  T(k,J):[u,v,x] \mapsto [u, v+k+\dot J.x + 2^{-1}J.\dot J, x+J].
\end{equation}
\begin{definition}\label{HeisenbergGroupDef}
  The Heisenberg group on the symplectic vector space $(\mathbb J,\omega)$ is the group of pairs $(a,J)$, $a\in\mathbb R$, $J\in\mathbb J$, with the composition law
  $$(k',J')\circ (k,J) = (k + k' + 2^{-1}\omega(J,J'), J+J').$$
\end{definition}
One checks that the composition is associative, with identity $(0,0)$, and inverse $(k,J)^{-1}=(-k,-J)$.
\begin{theorem}\label{HeisenbergAction}
The action \eqref{JacobiIsometry} is an action of the Heisenberg group for the symplectic space $(\mathbb J,\omega)$, where $\omega(J,K)=J^T\dot K - K^T\dot J$.
\end{theorem}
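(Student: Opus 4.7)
The plan is to verify the homomorphism property $T(k',J')\circ T(k,J) = T\bigl((k',J')\circ(k,J)\bigr)$ by direct computation, using Definition \ref{HeisenbergGroupDef} for the right-hand side. Preservation of the identity, $T(0,0)=\op{id}$, and the formula for inverses, $T(k,J)^{-1}=T(-k,-J)$, then follow formally from the homomorphism property together with the definition of the group operation, so the only real content is composition.

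First I would apply $T(k,J)$ to a point $[u,v,x]$, and then apply $T(k',J')$ to the image, to obtain a point whose third component is $x+J+J'$ and whose second component is
\[
v + k + \dot J\cdot x + \tfrac12 J\cdot\dot J + k' + \dot J'\cdot(x+J) + \tfrac12 J'\cdot \dot J'.
\]
Rearranging, this equals $v + (k+k') + (\dot J + \dot J')\cdot x + \tfrac12 J\cdot\dot J + \dot J'\cdot J + \tfrac12 J'\cdot\dot J'$. On the other hand, applying $T$ to the group product $(k+k'+\tfrac12\omega(J,J'),J+J')$ yields a second component
\[
v + (k+k') + \tfrac12\omega(J,J') + (\dot J + \dot J')\cdot x + \tfrac12(J+J')\cdot(\dot J+\dot J').
\]
The first and third components already agree, so it remains to compare the remaining scalar pieces.

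The key step is then algebraic: using that the dot product on $\mathbb X$ is symmetric, so $J\cdot\dot J'=\dot J'\cdot J$ and $J'\cdot\dot J=\dot J\cdot J'$, expand $\tfrac12(J+J')\cdot(\dot J+\dot J')$ and substitute the definition $\omega(J,J')=J^T\dot J'-J'^T\dot J$ from the statement of the theorem. After cancellation of the symmetric cross terms against half of $\omega(J,J')$, both sides reduce to $\tfrac12 J\cdot\dot J + J\cdot\dot J' + \tfrac12 J'\cdot\dot J'$, establishing equality.

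I do not expect any real obstacle: this is a short bookkeeping calculation, and the only conceptual point is that the antisymmetric correction $\tfrac12\omega(J,J')$ in the Heisenberg product is precisely what compensates for the discrepancy between the symmetric expression $\tfrac12(J+J')\cdot(\dot J+\dot J')$ appearing in $T((k',J')\circ(k,J))$ and the asymmetric expression $\tfrac12 J\cdot\dot J + \dot J'\cdot J + \tfrac12 J'\cdot\dot J'$ produced by sequential composition. This is the same mechanism that turned the bracket of two Jacobi Killing fields into $\omega(J,K)\partial_v$ in Lemma \ref{BracketJacobi}, and it confirms that the global isometries \eqref{JacobiIsometry} integrate the infinitesimal Heisenberg algebra of that lemma.
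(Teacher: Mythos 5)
Your proposal is correct and is essentially the paper's own argument: the paper reduces the theorem to verifying $T(k',J')\circ T(k,J)=T((k',J')\circ(k,J))$ and calls it a straightforward symbolic calculation, which is exactly the computation you carry out (correctly, with the antisymmetric term $\tfrac12\omega(J,J')$ absorbing the discrepancy between $\tfrac12(J+J')\cdot(\dot J+\dot J')$ and $\tfrac12 J\cdot\dot J+J\cdot\dot J'+\tfrac12 J'\cdot\dot J'$).
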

  We only have to verify that $T(k',J')\circ T(k,J) = T( (k',J')\circ (k,J))$, which is a straightforward symbolic calculation.
% Heis[t1_, J1_, Jd1_][t_, J_, Jd_] := {t + t1 + 1/2 (J.Jd1 - Jd.J1), 
%   J + J1, Jd + Jd1}
% S[a_, J_, Jd_] := 
%  Function[{v, x}, {v + a + Jd.x + 1/2 (J.Jd), x + J}]
% commdot[x__] := Dot @@ Sort[{x}]
% TensorExpand[S[a, J, Jd] @@ S[a1, J1, Jd1] @@ {v, x}][[1]] /. 
%  Dot -> commdot
% TensorExpand[(S @@ Heis[a, J, Jd][a1, J1, Jd1]) @@ {v, x}][[1]] /. 
%  Dot -> commdot
%% a + a1 + v + J.Jd/2 + J1.Jd + J1.Jd1/2 + Jd.x + Jd1.x
%% a + a1 + v + J.Jd/2 + J1.Jd + J1.Jd1/2 + Jd.x + Jd1.x
  \subsection{Regular Rosen metrics}
Consider a Rosen universe \eqref{Rosen1}:
$$G = 2\,du\,dv - dx^Th(u)dx.$$
Suppose temporarily that the set $\mathbb S$ of singular points of $h$ is empty, so that $G$ is non-singular for all $u$.  One reason that Rosen metrics are nice is that their isometry group contains a $2n+1$ dimensional Heisenberg group that is easy to write down.  (There can be {\em additional} symmetries, whose discussion is deferred to the next article in this series.)  As usual, let $H(u)$ be a smooth symmetric $n\times n$ matrix, such that $dH = h^{-1}du$.  Then one sees by an easy calculation that metric is preserved by the vector fields
$$\partial_v, \partial_x, x^T\partial_v + H\,\partial_x.$$
These vector fields obey commutation relations for the $2n+1$ dimensional Heisenberg algebra:
$$[P,Q]=I\,Z $$
with $Z=\partial_v, P=\partial_x,$ and $Q=x^T\partial_v + H\,\partial_x$, and $I$ is the identity on $\mathbb X$.  The action integrates to give a global action of the $2n+1$ dimensional Heisenberg group:
\begin{equation}\label{Teqn}
  T(v_0,a,b):\left[u,v,x\right] \mapsto \left[u, v + v_0 + b^Tx + \tfrac12(b^T a + b^THb) , x + a + Hb\right]
\end{equation}
where the inverse is
$$T(v_0,a,b)^{-1} = T(-v_0,-a,-b).$$

Since $\mathbb X$ is a Euclidean space, $\mathbb X\oplus\mathbb X$ carries a natural symplectic form $\omega(x\oplus y,x'\oplus y') = y.x' - x.y'$.  Definition \ref{HeisenbergGroupDef} specializes to the following composition law on the set of tuples $[v,x,b], v\in\mathbb R, x,b\in\mathbb X$:
  \begin{equation}\label{OurHeisenberg}
    [v,a,b]\circ [v',a',b'] = [v + v' + 2^{-1}(b.a' - b'.a), a + a', b + b']
  \end{equation}
One checks that the composition is associative, with identity $[0,0,0]$, and inverse $[v,a,b]^{-1}=[-v,-a,-b]$.  Moreover, the center of the group is the set of $[v,0,0]$, for $v\in\mathbb R$.

We can therefore use the group to write down a formula for the transverse null cone through any point.\footnote{By a {\em transverse null cone}, we mean a null cone transverse to the wave fronts, i.e., the null cone after deleting its generator parallel to $\partial_v$.}
\begin{theorem}
  If the Rosen metric \eqref{Rosen1} is regular everywhere $(\mathbb S=\emptyset)$ then:
  \begin{itemize}
    \item $T(v_0,a,b)$ in \eqref{Teqn} defines an action of the Heisenberg group \eqref{OurHeisenberg}, transitive on the affine spaces $u=$constant, and transitive on the transverse null directions through any point; and
    \item the transverse null cone with vertex $(u_0,v_0,x_0)\in\mathbb M$ is the hypersurface:
      \begin{equation}\label{nullcones}
        0 = 2(v - v_0)  - (x-x_0)^T(H(u) - H(u_0))^{-1}(x - x_0).
      \end{equation}
    \end{itemize}
  \end{theorem}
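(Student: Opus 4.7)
The first claim has three parts. That $T$ realizes the Heisenberg composition law \eqref{OurHeisenberg} is the specialization of Theorem \ref{HeisenbergAction} to the Killing fields $\partial_{x^i}$ and $x^j\partial_v + H^{jk}\partial_{x^k}$ viewed as abreast Jacobi fields along $\gamma$; alternatively one verifies this by a direct symbolic check. Transitivity on the slice $u = u_0$ is immediate because $T(v' - v,\, x' - x,\, 0)$ sends $(u_0, v, x)$ to $(u_0, v', x')$. For transitivity on transverse null directions at a point, I would differentiate the coordinate formula \eqref{Teqn} to obtain
\[
dT(\partial_u) \;=\; \partial_u + \tfrac{1}{2}\bigl(b^T h^{-1}(u)\, b\bigr)\,\partial_v + \bigl(h^{-1}(u)\,b\bigr)^T\partial_x,
\]
which must be null since $T$ is an isometry; as $b$ varies over $\mathbb{X}$ the transverse component $\gamma = h^{-1}(u)\,b$ ranges over all of $\mathbb{X}$, sweeping every transverse null direction through the image point.

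For the second claim, my plan is to exploit this transitivity by applying $T$ to the central null geodesic $\gamma_0: u \mapsto (u, 0, 0)$ to parametrize the transverse null geodesics through the vertex $(u_0, v_0, x_0)$. Since $\gamma_0$ is null and $T$ is an isometry, each image $T(v_0^{*}, a, b)(\gamma_0)$ is a null geodesic. Choosing $a = x_0 - H(u_0)\,b$ and $v_0^{*} = v_0 - \tfrac{1}{2}\,b^T x_0$ so that at $u = u_0$ the image passes through the vertex, direct substitution into \eqref{Teqn} shows the image is
\[
u \longmapsto \bigl(u,\; v_0 + \tfrac{1}{2}\,b^T(H(u) - H(u_0))\,b,\; x_0 + (H(u) - H(u_0))\,b\bigr).
\]
As $b$ runs over $\mathbb{X}$, the tangent at $u_0$ of this image sweeps every transverse null direction at the vertex, so by part one and local uniqueness of geodesics this family exhausts the transverse null cone. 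Eliminating $b$ via $b = (H(u) - H(u_0))^{-1}(x - x_0)$ in $2(v - v_0) = b^T(x - x_0)$ yields \eqref{nullcones}.

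The one technical obstacle is inverting $H(u) - H(u_0)$. Since $\dot H = h^{-1}$ is smooth and positive-definite on all of $\mathbb{U}$ in the regular case, $H(u) - H(u_0) = \int_{u_0}^{u} h^{-1}(s)\,ds$ is strictly positive-definite for $u > u_0$ and strictly negative-definite for $u < u_0$, hence invertible away from the wave front $u = u_0$. The parametrization therefore descends to a genuine smooth hypersurface off the vertex's wave front, which is the transverse null cone as claimed.
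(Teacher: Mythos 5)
Your proposal is correct and follows essentially the same route as the paper: the action claim via Theorem \ref{HeisenbergAction}, wave-front transitivity by the translations $T(v_0,a,0)$, direction-transitivity from the $b$-dependent part of the isotropy (you push forward $\partial_u$ where the paper dually acts on the covector $dv$), and the cone obtained by sweeping the central null geodesic with vertex-stabilizing group elements and eliminating $b$. The only cosmetic difference is that the paper normalizes the vertex to the origin with $H(u_0)=0$, while you keep a general vertex by choosing $a=x_0-H(u_0)b$ and $v_0^{*}$ explicitly.
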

  
  \begin{proof}

% ClearAll[H]
% T[v0_, x0_, b_] := 
%  Function[{v, x}, {v + v0 + b.x + 1/2 (b.x0 + b.(H.b)), 
%    x + x0 + H.b}]
% TensorExpand[T[v0, a, b] @@ T[v1, a1, b1] @@ {v, x}][[1]]
%% v + v0 + v1 + b.a/2 + b.a1 + b.x + b1.a1/2 + b1.x + b.H.b/2 + b.H.b1 + b1.H.b1/2
% TensorExpand[
%    T[v0 + v1 + b1.x0/2 - b0.x1/2, x0 + x1, b0 + b1] @@ {v, x}][[1]]
%% v + v0 + v1 + b0.x + b0.x0/2 + b1.x + b1.x0 + b1.x1/2 + b0.H.b0/2 + 
%% b0.H.b1/2 + b1.H.b0/2 + b1.H.b1/2
  The fact that it is an action is a restatement of Theorem \ref{HeisenbergAction}.  Now, because the affine transformations $T[v,a,0]$ are transitive on any wave front, the full group is transitive on each wave front as well.

  % T(v_0,a,b):\left[u,v,x\right] \mapsto \left[u, v + v_0 + b^Tx + \tfrac12(b^T a + b^THb) , x + a + Hb\right]
  To prove that it is transitive on the transverse null directions, it is sufficient to prove it for the stabilizer of a single point, $(v,x)=(0,0)$.  This stabilizer is $T(0,-Hb,b)$.  Acting on the null covector $dv$, the isotropy action is
  $$dv \mapsto dv + b^Tdx$$
  which is therefore transitive on the null covectors normalized against $\partial_v$.
Putting these observations together, we conclude that the isometry group is transitive on the set of null directions transverse to any wave front.

  The only part that remains to be shown is the second bullet.      Note that, since $\dot H=h^{-1}$ is positive-definite, the matrix $H(u)-H(u_0)$ is invertible for all $u\not=u_0$.
Without loss of generality, we can suppose $u_0=0$ and $H(0)=0$. The stabilizer of the origin $u=v=0, x=0$, is $T(0,0,b)$.  The orbit of the central null geodesic $v=x=0$ under this stabilizer is
$$v - \tfrac12 b^THb=0,\quad x - Hb = 0.$$
That is,
$$2v = x^TH^{-1}x,$$
which is precisely \eqref{nullcones}.
\end{proof}

\subsection{Rosen universes}

This all works nicely in a coordinate system where the metric in the Rosen universe is globally regular, but the description of the symmetries breaks down if the set $\mathbb S\subset\mathbb U$ of points where the metric has a (removable) singularity is nonempty, because the primitive $H$ of $h^{-1}$ blows up at every point of $\mathbb S$.  Even worse, it is not obvious how to continue $H$ past such a singularity.  With the characterization of Theorem \ref{RosenUniverse}, we can obtain a canonical continuation of $H$ past the singularity.  Specifically,
\begin{lemma}\label{JumpCondition}
  Given an arbitrary initial condition $H(u_1)$ at a regular point $u_1\in\mathbb U-\mathbb S$, there exists a unique function $H$ that is smooth on $\mathbb U-\mathbb S$ such that:
  \begin{itemize}
  \item $\dot H(u) = h^{-1}(u)$ for all $u\in\mathbb U-\mathbb S$; and
  \item $(u-u_0)H$ has a smooth continuation across $u=u_0$ for all $u_0\in\mathbb S$.
  \end{itemize}
\end{lemma}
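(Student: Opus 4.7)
The plan is to integrate the ODE $\dot H = h^{-1}$ on each connected component of $\mathbb U - \mathbb S$, and then to use Theorem \ref{RosenUniverse} to show that the requirement that $(u-u_0)H$ extend smoothly across each $u_0\in\mathbb S$ uniquely pins down the constants of integration on adjacent components. Since $\mathbb S$ has no limit point in $\mathbb U$, its points on each side of $u_1$ can be enumerated as a discrete sequence, and the construction will proceed inductively, one component at a time, starting from the component of $\mathbb U-\mathbb S$ that contains $u_1$ where $H$ is determined by the initial value $H(u_1)$ and the ODE.

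The first step is to analyze things locally near a fixed $u_0 \in \mathbb S$. Theorem \ref{RosenUniverse} supplies a smooth symmetric $K$ with $K(u)h(u) = (u-u_0)^2 I$ and $K'(u_0)=0$, so Taylor's theorem yields $K(u) = K_0 + (u-u_0)^2 M(u)$ with $M$ smooth and symmetric, where $K_0:=K(u_0)$. Note that $K_0\ne 0$: otherwise $K(u)=(u-u_0)^2 M(u)$ would force $h(u)M(u)=I$ for $u\ne u_0$, and continuity would make $h(u_0)$ invertible, contradicting the singularity of $h$ at $u_0$. For $u\ne u_0$ we therefore obtain
\[ h^{-1}(u) \;=\; \frac{K_0}{(u-u_0)^2} + M(u),\]
and hence every local solution of $\dot H = h^{-1}$ on a punctured neighborhood of $u_0$ has the form
\[ H(u) \;=\; -\frac{K_0}{u-u_0} + F(u) + C_\pm,\]
where $F$ is a fixed smooth antiderivative of $M$ on a neighborhood of $u_0$ and $C_-,C_+$ are the integration constants to the left and right of $u_0$.

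The second step is to translate smoothness of $(u-u_0)H$ at $u_0$ into a condition on $C_\pm$. Multiplying through,
\[ (u-u_0)H(u) \;=\; -K_0 + (u-u_0)\bigl(F(u)+C_\pm\bigr),\]
which is smooth on each side with value $-K_0$ at $u_0$. A short computation of Taylor coefficients gives that the $k$-th derivative at $u_0$ equals $k F^{(k-1)}(u_0)$ for $k\ge 2$, independent of $C_\pm$, while the first derivative equals $F(u_0)+C_\pm$. Thus the one-sided Taylor series agree if and only if $C_-=C_+$, in which case $(u-u_0)H$ is automatically $C^\infty$ across $u_0$. This matching condition uniquely determines the constant of integration on the adjacent component in terms of the constant on the current component, completing the inductive step and assembling a unique smooth $H$ on $\mathbb U-\mathbb S$ with the two listed properties.

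The main obstacle I expect, and the place where Theorem \ref{RosenUniverse} genuinely enters, is the upgrade from continuity of the first derivative of $(u-u_0)H$ at $u_0$ to smoothness of \emph{all} orders. This hinges essentially on the vanishing of $K'(u_0)$: a non-vanishing linear term in the Taylor expansion of $K$ about $u_0$ would produce a $\log(u-u_0)$ term in $H$, which no choice of $C_\pm$ could cancel, and the desired extension would be impossible. Under the hypothesis $K'(u_0)=0$ the only singular contribution is the $1/(u-u_0)$ pole with the intrinsic coefficient $-K_0$ (the same from both sides), and the entire higher-order matching collapses to the single condition $C_-=C_+$.
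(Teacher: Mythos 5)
Your proposal is correct and takes essentially the same route as the paper: both use Theorem \ref{RosenUniverse} (via $K(u)=K_0+(u-u_0)^2M(u)$) to write $h^{-1}(u)=(u-u_0)^{-2}K_0+(\text{smooth})$ near each singular point, integrate to get $H(u)=-(u-u_0)^{-1}K_0+F(u)+C_\pm$, and observe that smoothness of $(u-u_0)H$ across $u_0$ forces $C_-=C_+$. The explicit component-by-component induction from $u_1$ that you spell out is left implicit in the paper, but the core local matching argument is identical.
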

\begin{proof}
  Moving a singular point to $u=0$, we have $g(u)^{-1} = u^{-2}K_0 + K_1(u)$ on a punctured neighborhood $0<|u|<\epsilon$, where $K_0$ is constant and $K_1$ is smooth on $(-\epsilon,\epsilon)$.  Let $M_1(u)$ be an arbitrary primitive of $K_1$ on $(-\epsilon,\epsilon)$.  We then have
  $$H(u)=-u^{-1}K_0 + M_1(u) + A\,1_{(-\epsilon,0)}(u) + B\, 1_{(0,\epsilon)}(u)$$
  where $1$ denotes the indicator function of the interval, and $A,B$ are constant symmetric matrices.  The condition that $uH$ remain smooth at $u=0$ then guarantees that $A=B$, and so we have uniqueness of the continuation.
\end{proof}
The main theorem of this section is:
\begin{theorem}
  Suppose that $\dot H = h^{-1}$.  The following are equivalent:
  \begin{itemize}
  \item $(u-u_0)H$ has a smooth continuation across $u=u_0$ for all $u_0\in\mathbb U$; and
  \item each of the vector fields
  $$ Q^i = x^i\partial_v + H^{ij}\partial_j$$
  is equal on $\mathbb U-\mathbb S$ to a unique Jacobi field $Q^i$ defined on the entire plane wave (i.e., in Brinkmann coordinates which are regular on all of $\mathbb U$).
\end{itemize}
\end{theorem}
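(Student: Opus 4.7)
The plan is to follow the Brinkmann-to-Rosen coordinate change from Lemma \ref{EveryBrinkmannRosen}, push forward the Rosen vector field $Q^i$, and then analyze its smoothness at each singular point using the structure given by Theorem \ref{RosenUniverse}.  Recall that the transformation is $x_B = L(u)X$, $v_B = V + \tfrac12 X^T L^T S L X$, where $L$ is a Lagrangian matrix with $\ddot L + pL = 0$, $h = L^T L$, and symmetric shear $S = \dot L L^{-1}$.  Applying the chain rule, a short calculation (using $Lh^{-1}L^T = I$ and the symmetry of $H$ and $S$) shows that $Q^i = X^i\partial_V + H^{ij}\partial_{X^j}$ pushes to
$$X_{J^{(i)}} = x_B^T \dot J^{(i)}\partial_{v_B} + J^{(i)}\cdot\partial_{x_B},\qquad J = LH,$$
with $J^{(i)}$ the $i$th column of $J$.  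A further computation using the Sachs equation $\dot S + S^2 + p = 0$ and $\dot H = h^{-1}$ yields $\ddot J + pJ = 0$, so on $\mathbb U - \mathbb S$ the vector fields $Q^i$ are precisely the Brinkmann Jacobi fields associated to the columns of $J = LH$.  The claim reduces to: \emph{$J = LH$ extends smoothly to all of $\mathbb U$ if and only if $(u-u_0)H$ extends smoothly across each $u_0 \in \mathbb S$.}

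For the local analysis near $u_0 \in \mathbb S$, Theorem \ref{RosenUniverse} gives $K(u) = (u-u_0)^2 h^{-1}(u)$ smooth with $K'(u_0) = 0$, so $h^{-1}(u) = (u-u_0)^{-2}K_* + K_2(u)$ with $K_* := K(u_0)$ and $K_2$ smooth.  Integrating,
$$H(u) = -(u-u_0)^{-1} K_* + M(u) + C^\pm,$$
where $M$ is a smooth primitive of $K_2$ and $C^\pm$ is a constant of integration that may jump across $u_0$ (cf.\ Lemma \ref{JumpCondition}).  The key identity is $L(u_0) K_* = 0$:  from $K(u)h(u) = (u-u_0)^2 I$ we get $K_* h(u_0) = 0$, and writing $h(u_0) = L_0^T L_0$ with $L_0 := L(u_0)$ gives $K_* L_0^T L_0 = 0$, hence $\im L_0^T \subseteq \ker K_*$; symmetry of $K_*$ then yields $K_* L_0^T = 0$, i.e.\ $L_0 K_* = 0$.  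Consequently the polar term $-L(u)K_*/(u-u_0)$ is smooth at $u_0$, with value $-\dot L(u_0) K_*$.

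It remains to analyze $L(u)(M(u) + C^\pm)$, which is smooth on each side of $u_0$ with a possible jump $L_0 \Delta$ at $u_0$, where $\Delta := C^+ - C^-$.  If $(u-u_0)H$ extends smoothly, Lemma \ref{JumpCondition} gives $\Delta = 0$, so $L(M+C)$ is smooth and hence $LH$ is smooth.  Conversely, smoothness of $LH$ demands that all left and right derivatives of $L(u)\Delta$ agree at $u_0$, forcing $L^{(k)}(u_0)\Delta = 0$ for every $k \ge 0$.  The Jacobi equation $\ddot L + pL = 0$ with uniqueness of second-order linear ODEs propagates this to $L(u)\Delta \equiv 0$ on $\mathbb U$, and since $L$ is invertible on the dense set $\mathbb U - \mathbb S$, $\Delta = 0$, which by Lemma \ref{JumpCondition} is exactly the smoothness of $(u-u_0)H$ across $u_0$.

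The main obstacle is the kernel/image bookkeeping that yields the identity $L(u_0)K_* = 0$; once this cancels the simple pole in $LH$, the remaining argument is a clean application of ODE uniqueness to the jump constant $\Delta$.  The identification of $J = LH$ as the Brinkmann Jacobi field and the verification $\ddot J + pJ = 0$ are mechanical but need careful handling of the transposes and symmetries.
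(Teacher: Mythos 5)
Your proposal is correct, but it takes a genuinely different route from the paper's proof. The paper works intrinsically in the symplectic space $\mathbb J$ of abreast Jacobi fields: it introduces the Lagrangian subspaces $\mathbb H(u)$ of Jacobi fields vanishing at $u$, writes $\mathbb H(u)$ as a graph over $\mathbb H(u_0)$ in two ways --- relative to the Lagrangian $\mathbb L$ attached to the singular Rosen frame $L$, and relative to an auxiliary Lagrangian $\bar{\mathbb L}$ coming from a matrix $\bar L$ regular at the singularity --- and obtains $H(u)=A^{-T}\bar H(u)\,(A+B\bar H(u))^{-1}$ as a linear-fractional transform of the smooth $\bar H$; the equality of the two jump constants of Lemma \ref{JumpCondition} is then read off from the fact that the continued $\widetilde H$ is a single rational function of $\bar H$ on both sides of the singular point. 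You instead push the $Q^i$ forward to Brinkmann coordinates, identify them with the columns of $J=LH$ (this is exactly the content of the paper's side lemma that $J=LH$, $T=S+(LHL^T)^{-1}$ solves the Sachs system, so your ``mechanical'' verification is consistent with the paper), and then argue by explicit expansion at the singular point: the pole of $H=-(u-u_0)^{-1}K_*+M+C^{\pm}$ is cancelled in $LH$ by the identity $L(u_0)K_*=0$, which indeed follows from $K_*h(u_0)=0$ and symmetry (e.g.\ $(L_0K_*)^T(L_0K_*)=K_*L_0^TL_0K_*=0$), so smoothness of $LH$ is equivalent to $L^{(k)}(u_0)\Delta=0$, and ODE uniqueness applied to $L\Delta$ (only $k=0,1$ are needed) together with invertibility of $L$ on the dense set $\mathbb U-\mathbb S$ forces $\Delta=0$, which by Lemma \ref{JumpCondition} is precisely the first bullet. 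Both arguments lean on the same structural inputs (Theorem \ref{RosenUniverse} and the decomposition $h^{-1}=(u-u_0)^{-2}K_*+K_2$ behind Lemma \ref{JumpCondition}), but yours is more elementary and makes the global extension completely explicit, $\widetilde Q^i=x^T(\dot{LH})^{(i)}\partial_v+(LH)^{(i)}\cdot\partial_x$, at the cost of coordinate computations; the paper's version is coordinate-free and exhibits the continuation of $H$ as a graph (M\"obius-type) transformation between Lagrangian complements, which is the Lagrangian-Grassmannian picture the series develops further.
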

\begin{proof}
  Consider a singular point which we assume without loss of generality at $u=0$.  Let $L(u)$ be a Lagrangian matrix associated to the Rosen universe, with $\mathbb L=L(u)\pmod{GL(n)}\subset\mathbb J$ the associated Lagrangian subspace.  Choose a Lagrangian matrix $\bar L(u)$ of Jacobi fields such that $\bar L(0)$ is invertible, and let $\bar{\mathbb L}\subset\mathbb J$ be its associated Lagrangian subspace.  There is $\epsilon>0$ such that $L(u)$ and $\bar L(u)$ are both invertible on $(0,\epsilon)$.  For each $u\in\mathbb U$, define $\mathbb H(u)\subset\mathbb J$ to be the Lagrangian space of Jacobi fields vanishing at $u$.  Then $\mathbb H$ depends smoothly on $u$.  We shall give a concrete description of it momentarily.  Fix $u_0=\epsilon/2$.  By subtracting a constant, we can assume that $H(u_0)=0$.

  With these preliminaries in place, we relate them to the present problem.  On $(0,\epsilon)$, the vector fields $Q^i$ are a basis of the space $\mathbb H(u_0)$.  Therefore, they extend (uniquely) to Jacobi fields $\widetilde Q^i$ on $(-\epsilon,\epsilon)$.  We have to show that $Q^i=\widetilde Q^i$ on $0<|u|<\epsilon$.  

  We describe the space $\mathbb H(u)$ in two ways.

  \begin{itemize}
  \item There exists a symmetric operator $\bar H(u) : \mathbb H(u_0)\to\bar{\mathbb L}$, depending smoothly on $u$, such that $\mathbb H(u)$ is the set of vectors of the form $J + \bar H(u) J$ with $J\in\mathbb H(u_0)$, for each $u\in(-\epsilon,\epsilon)$.    
  \item There exists a symmetric operator $H(u) : \mathbb H(u_0)\to\mathbb L$, depending smoothly on $u$, such that $\mathbb H(u)$ is the set of vectors of the form $J + H(u) J$ with $J\in\mathbb H(u_0)$, for each $0<|u|<\epsilon$.    
  \end{itemize}
  Moreover, $\bar H'(u)$ and $H'(u)$ are positive-definite on their respective domains.  Because these describe the same Lagrangian subspace, they are related by a canonical transformation that sends $\mathbb L$ to $\bar{\mathbb L}$ and is the identity on $\mathbb H(u_0)$.

  Use a block decomposition of $\mathbb J$ relative to the pair of subspaces $\mathbb H(u_0), \bar{\mathbb L}$.  Then $\mathbb H(u)$ is the image of the $2n\times n$ matrix
  $$\begin{bmatrix}I\\ \bar H(u)\end{bmatrix}.$$
  A canonical transformation preserving $\mathbb H(u_0)$ is an upper triangular (symplectic) matrix 
  $$\begin{bmatrix}A&B\\ 0&A^{-T}\end{bmatrix} $$
  where $A^{-1}B$ is symmetric.  Now
  $$\begin{bmatrix}A&B\\ 0&A^{-T}\end{bmatrix}\begin{bmatrix}I\\ \bar H(u)\end{bmatrix} = \begin{bmatrix}A+B\bar H(u)\\ A^{-T}\bar H(u)\end{bmatrix} = \begin{bmatrix}I\\ A^{-T}\bar H(u)(A+B\bar H(u))^{-1}\end{bmatrix}(A+B\bar H(u)).$$
  Because $A$ is invertible and $\bar H'$ is positive definite, the singularities of $A+B\bar H(u)$ are isolated.  
  So, possibly after shrinking $\epsilon$, for $u\in (0,\epsilon)$,
  \begin{equation}\label{Htrans}
    H(u) = A^{-T}\bar H(u)(A+B\bar H(u))^{-1}.
  \end{equation}
  Define $\widetilde H(u)$ by \eqref{Htrans} on $0<|u|<\epsilon$.

  So we have $\widetilde Q^i=x^i\partial_v + \widetilde{H}^{ij}\partial_j$ on $0<|u|<\epsilon$.  It is sufficient to finally prove that $u\widetilde{H}$ has a smooth extension across $u=0$.  As in the proof of Lemma \ref{JumpCondition}, $\widetilde H(u) = -u^{-1}K_0 + M_1(u) + A\,1_{(-\epsilon,0)}(u) + B\, 1_{(0,\epsilon)}(u)$, where $M_1$ is smooth. Since $\bar H(u)$ is smooth and $\widetilde H(u)$ is a rational function of $\bar H$, we conclude that $A=B$.  
\end{proof}

\begin{corollary}
  Let $H$ be a primitive of $h^{-1}$ such that $(u-s)H(u)$ has a smooth continuation through $u=s$ at all $s\in\mathbb S$.
  The Heisenberg isometries are transitive on each wave front $u=u_0$ where $u_0\not\in\mathbb S$, and transitive on the set of null directions transverse to the wave front.
  The transverse null cone of the point $(u_0,v_0,x_0)$ is the hypersurface of $\mathbb M - \mathbb S$
  $$2(v-v_0) - (x-x_0)^T(H(u)-H(u_0))^{-1}(x-x_0) = 0,$$
  $$(u,v,x)\in \mathbb M - \mathbb S := (\mathbb U - \mathbb S)\times\mathbb R\times\mathbb X$$
\end{corollary}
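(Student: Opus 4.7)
The plan is to reduce the singular case to the regular case already handled in the preceding theorem, by using the canonical continuation of $H$ across $\mathbb S$ afforded by Lemma \ref{JumpCondition} and the theorem that immediately precedes the corollary.

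First I would observe that, by the main theorem of the subsection, the Jacobi fields $Q^i = x^i\partial_v + H^{ij}\partial_j$ defined on $\mathbb M-\mathbb S$ extend (uniquely) to globally defined Jacobi fields on all of $\mathbb M$, so the full Heisenberg algebra generated by $\partial_v,\partial_x,Q^i$ consists of globally defined Killing vector fields.  Invoking Theorem \ref{HeisenbergAction}, these integrate to a global Heisenberg action on $\mathbb M$ which, on the open dense subset $\mathbb M-\mathbb S$, is given explicitly by the formula \eqref{Teqn}.  This reduction means that for any $u_0\notin\mathbb S$, every argument from the regular Rosen case can be applied verbatim at points of the wave front $u=u_0$, since the coefficient $H(u_0)$ appearing in \eqref{Teqn} is finite at such $u_0$.

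Next I would dispatch the two transitivity statements.  Transitivity of $T(v_0,a,b)$ on the wave front $u=u_0$ is immediate by setting $b=0$, reducing the action to $(v,x)\mapsto(v+v_0,x+a)$.  Transitivity on transverse null directions at a fixed point follows the isotropy argument already given for the regular case: after translating to the origin, the stabilizer consists of the transformations $T(0,-H(u_0)b,b)$, $b\in\mathbb X$, and the induced action on the null covector $dv$ is $dv\mapsto dv + b^T dx$, which clearly exhausts the null covectors normalized against $\partial_v$.

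For the explicit equation of the transverse null cone, I would translate to set $(u_0,v_0,x_0) = (u_0,0,0)$, then act on the central null geodesic $u\mapsto(u,0,0)$ by the stabilizer $T(0,-H(u_0)b,b)$.  A direct computation from \eqref{Teqn} yields the orbit
\[
T(0,-H(u_0)b,b)(u,0,0) = \bigl(u,\tfrac12 b^T(H(u)-H(u_0))b,\,(H(u)-H(u_0))b\bigr),
\]
and eliminating $b$ via $b = (H(u)-H(u_0))^{-1}x$ produces the stated formula $2v = x^T(H(u)-H(u_0))^{-1}x$.  The main obstacle, and the only step that uses the structure at $\mathbb S$ in a nontrivial way, is the invertibility of $H(u)-H(u_0)$ for all $u\in\mathbb U-\mathbb S$ with $u\ne u_0$.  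On a connected component of $\mathbb U-\mathbb S$ this follows from $\dot H = h^{-1}$ being positive definite (so $H$ is strictly monotone).  To cross a singular point $s\in\mathbb S$, I would invoke the precise form of $H$ near $s$ from the proof of Lemma \ref{JumpCondition}, namely $H(u) = -(u-s)^{-1}K_0 + M_1(u) + \text{const}$, where the image of $K_0$ is the kernel of $h(s)$; combined with positivity of $\dot H$ on each side, this forces $H(u)-H(u_0)$ to remain invertible throughout $\mathbb U-\mathbb S$ except at $u=u_0$.  The transitivity arguments then confirm that the hypersurface so described is exactly the orbit swept out by null geodesics through the vertex, completing the identification of the transverse null cone.
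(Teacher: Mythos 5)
Your overall route is the one the paper takes: the paper's proof of this corollary is essentially a pointer back to the regular case, the only new input being that the canonically continued $H$ of Lemma \ref{JumpCondition} makes the fields $Q^i = x^i\partial_v + H^{ij}\partial_j$ restrictions of globally defined Jacobi fields (the theorem immediately preceding the corollary), after which the transitivity arguments and the orbit computation $T(0,-H(u_0)b,b)$ applied to the central null geodesic are repeated verbatim. Your first three steps reproduce exactly this.

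However, the step you single out as ``the main obstacle'' asserts something false: it is not true that $H(u)-H(u_0)$ is invertible for every $u\in\mathbb U-\mathbb S$ with $u\ne u_0$. Positivity of $\dot H=h^{-1}$ controls $H$ only on each connected component of $\mathbb U-\mathbb S$; upon crossing a singular point $s$, the continued $H(u)=-(u-s)^{-1}K_0+M_1(u)+\mathrm{const}$ plunges to $-\infty$ along the $K_0$-directions and then increases again, so it can (and typically does) pass back through the value $H(u_0)$. The paper's own example right after the corollary exhibits this: for $h=\sin^2u$ one has $H(u)=-\cot u$, and with $u_0=\pi/4$ the difference $H(u)-H(u_0)=1-\cot u$ vanishes at $u=5\pi/4\notin\mathbb S$, on the far side of the singular point $\pi$; these are conjugate values of $u$, where the transverse null cone meets the wave front in a caustic of lower dimension rather than a hypersurface. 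So the displayed equation is to be read only where $H(u)-H(u_0)$ is invertible — which your elimination $b=(H(u)-H(u_0))^{-1}x$ already presupposes — and no argument can upgrade this to invertibility on all of $\mathbb U-\mathbb S$, because the claim is simply false. Dropping that purported strengthening (and its proof sketch) leaves an argument that is correct and coincides with the paper's; the genuinely singular-set-dependent content is the global extension of the $Q^i$, which you invoke correctly at the outset, not the invertibility of $H(u)-H(u_0)$.
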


The proof of the second statement is the same as the one already given in the case of regular Rosen metrics.

Because the set of transverse null directions through a regular wave front is $2n+1$ dimensional, this set is a principal homogeneous space for the $2n+1$ dimensional Heisenberg group.  Carrying the dilation fixing the central null geodesic along an isometry, we obtain:
\begin{corollary}\label{ManyDilations}
Any plane wave has at least a $(2n+1)$-dimensional family of compatible dilations.
\end{corollary}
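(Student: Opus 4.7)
The plan is to produce the $(2n+1)$-parameter family by conjugating the given dilation $\mathcal D$ by the Heisenberg group of isometries $\mathcal H$ constructed in the preceding subsection. For each $\phi \in \mathcal H$, I will set $\mathcal D^\phi_t := \phi \circ \mathcal D_t \circ \phi^{-1}$; since $\phi$ is an isometry and $\mathcal D_t^*G = e^{2t} G$, the conjugate satisfies $(\mathcal D^\phi_t)^*G = e^{2t} G$ and inherits properness of the action from $\mathcal D$. Thus each $\mathcal D^\phi$ is a compatible dilation of $(\mathbb M, G)$, with fixed null geodesic $\phi(\gamma_{\mathcal D})$.

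To conclude that the resulting family has dimension $2n+1$, I will show that $\phi \mapsto \mathcal D^\phi$ is injective on $\mathcal H$. This reduces to showing that the Heisenberg stabilizer of the central null geodesic $\gamma_{\mathcal D} = \{v = 0,\ x = 0\}$ is trivial. Plugging a point of $\gamma_{\mathcal D}$ into the formula \eqref{Teqn} and demanding its image lie on $\gamma_{\mathcal D}$ for every $u$ yields the identity $a + H(u) b \equiv 0$ (together with the corresponding condition for $v_0$); differentiating in $u$ and using that $\dot H = h^{-1}$ is positive-definite forces $b = 0$, then $a = 0$, and finally $v_0 = 0$. Hence $\phi \mapsto \phi(\gamma_{\mathcal D})$, and therefore $\phi \mapsto \mathcal D^\phi$, is a smooth injection, embedding $\mathcal H$ as a $(2n+1)$-dimensional family of compatible dilations.

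The main point requiring care will be the global validity of the Heisenberg action when the Rosen form has a non-empty singular set $\mathbb S \subset \mathbb U$, because the primitive $H$ of $h^{-1}$ would otherwise blow up across $\mathbb S$. This is precisely the obstruction resolved by the theorem preceding the corollary: there is a canonical continuation of $H$ through every $u_0 \in \mathbb S$ for which the vector fields $Q^i = x^i \partial_v + H^{ij} \partial_j$ extend globally (equivalently, in Brinkmann coordinates) to honest Jacobi, and hence Killing, fields on all of $\mathbb M$. With that globalization in hand $\mathcal H$ acts by isometries of $(\mathbb M, G)$, and the conjugation construction above delivers the required $(2n+1)$-dimensional family.
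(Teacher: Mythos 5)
Your proposal is correct and takes essentially the paper's route: the paper likewise obtains the family by carrying the given dilation along the global Heisenberg isometries, with the $(2n+1)$-dimensionality coming from the free and transitive Heisenberg action on the transverse null directions through a regular wave front, and with the singular set $\mathbb S$ handled by the canonical continuation of $H$ from the preceding theorem, exactly as you note. Your explicit conjugation $\mathcal D^\phi_t=\phi\circ\mathcal D_t\circ\phi^{-1}$ together with the stabilizer computation from \eqref{Teqn} (forcing $b=0$, then $a=0$, then $v_0=0$) is simply a more hands-on justification of the freeness that the paper packages as the principal-homogeneous-space statement.
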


\subsection{An example}
Consider the Brinkmann metric  with $n=1$ and $p=1$
$$2\,du\,dv + x^2\,du^2 - dx^2.$$
A Rosen universe that is singular at $u=0$ comes from the solution $J=\sin u$ to the Jacobi equation $J'' + J = 0$.  We then have the metric
$$g = 2\,du\,dv - \sin^2u\,dx^2.$$
In these coordinates, the isometries are generated by the vector fields $$\partial_v,\quad \partial_x, \quad x\partial_v - \cot u \partial_x,$$ which are clearly singular in these coordinates where $u=0$.

% M = Manifold(3, 'M', r'M' ) ;
% U  = M.open_subset('U'  ) ;
% coords.<u,v,x>   = U.chart(r'u:(-oo,oo) v:(-oo,oo) x:(-oo,oo)')
% ddu, ddv, ddx = coords.frame()

% g = M.lorentzian_metric('g')
% g[0,1]=1
% g[2,2]=-sin(u)^2

% Q=x*ddv - cot(u)*ddx
% print(g.lie_der(Q).display())

% X=M.scalar_field(x*tan(u))
% dX=X.differential()
% V=M.scalar_field(v+1/2*x^2*tan(u))
% dV=V.differential()
% du,dv,dx = coords.coframe()
% print((du*dV + dV*du - cos(u)^2*dX*dX).display())

We make a change of coordinates to remove the singularity.  Let $\bar x \cos u=x\sin u$ and $\bar v=v + 2^{-1}x^2\tan u$ .
\begin{align*}
  d\bar x &= x\,\sec^2u\,du + \tan u\,dx\\
  d\bar v &= dv + x\tan u\,dx + 2^{-1}x^2\sec^2u\,du.
\end{align*}
Then
$$g = 2\,du\,d\bar v - \cos^2u d\bar x^2$$
is regular at $u=0$.

The vector field $\partial_x$ satisfies $\partial_x(d\bar x) = \tan u, \partial_x(d\bar v) = x\tan u$, so
$$\partial_x = \tan u\,\bar\partial_{\bar x} + x\tan u\,\bar{\partial}_{\bar v},$$
which vanishes at $u=0$.  We have $\partial_v(d\bar x)=0$, $\partial_v(d\bar v) = 1$, and so $\bar\partial_{\bar v} = \partial_v$. The remaining Heisenberg symmetry is
\begin{align*}
  Q&=x\,\partial_v - \cot u\,\partial_x \\
   &= x\partial_v - \cot u\left(\tan u\,\bar\partial_{\bar x} + x\tan u\,\bar{\partial}_{\bar v}\right)\\
  &= -\bar{\partial}_{\bar x}.
\end{align*}

Consider the null cone with vertex at $v=x=0$, $u_0=\pi/4$:
$$2v - (1-\cot u)^{-1}x^2 = 0.$$
Note that $v=\bar v - 2^{-1}\bar x^2\cot u, x = \cot u\,\bar x$.  So in the barred coordinates,
\begin{align*}
  0 &= 2\bar v - \cot u\,\bar x^2 - (1-\cot u)^{-1}\cot^2u\,\bar x^2\\
    &= 2\bar v - (\tan u-1)^{-1}\bar x^2,
\end{align*}
which is now regular across $u=0$.

        \section{Extended example}
        This section contains an extended example, illustrating the power of some of the formalism developed thus far.

Let $X$ be the space of $n\times n$ symmetric positive-definite matrices, and allow $GL(n,\mathbb R)$ to act on $X$ by $g\mapsto AgA^T$.  For any continuous curve $g$ in $X$, define a plane wave spacetime by
$$G(g) = 2\,du\,dv - dx^Tg(u)dx.$$

\begin{theorem}
  Suppose that $g(u)$ is an orbit of a one-parameter subgroup of $GL(n,\mathbb R)$ with infinitesimal generator $A\in\mathfrak{gl}(n,\mathbb R)$.  Then the Ricci curvature of $G(g)$ is
  $$\operatorname{Ric}(G(g)) = -\frac14\operatorname{tr}[(A+g_0A^Tg_0^{-1})^2]\,du\otimes du$$
  where $g_0=g(u_0)$ is any given starting point.
\end{theorem}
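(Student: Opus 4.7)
My plan is to first reformulate the Ricci tensor of a general Rosen metric $G(g)=2\,du\,dv-dx^Tg(u)\,dx$ purely in terms of $g(u)$, then substitute the orbit $g(u)=e^{(u-u_0)A}g_0 e^{(u-u_0)A^T}$, and finally use a conjugation argument to replace $g(u)$ with $g_0$ under the trace.

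The first step goes through the Brinkmann--Rosen dictionary of Lemma \ref{EveryBrinkmannRosen}. Writing $g=L^TL$ with $\dot L=SL$, $S=S^T$, and $S$ satisfying Sachs' equation $\dot S+S^2+p=0$, direct differentiation gives $\dot g=2L^TSL$, $\ddot g=2L^T(\dot S+2S^2)L$, and $\dot g\,g^{-1}\dot g=4L^TS^2L$, whence the key identity
\[\tfrac12\ddot g-\tfrac14\dot g\,g^{-1}\dot g=L^T(\dot S+S^2)L=-L^TpL.\]
Conjugating by $L$ and taking the trace yields $\operatorname{tr}\!\left[g^{-1}\!\left(\tfrac12\ddot g-\tfrac14\dot g\,g^{-1}\dot g\right)\right]=-\operatorname{tr}(p)$. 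A brief Christoffel calculation on the Brinkmann metric $G_\beta(p)$ confirms that its only non-vanishing Ricci component is $R_{uu}=\tfrac12\Delta(x^Tpx)=\operatorname{tr}(p)$, so by the isometric invariance of Ricci,
\[\operatorname{Ric}(G(g))=-\operatorname{tr}\!\left[g^{-1}\!\left(\tfrac12\ddot g-\tfrac14\dot g\,g^{-1}\dot g\right)\right]du\otimes du.\]

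Now take $u_0=0$ without loss of generality and substitute $g=e^{uA}g_0e^{uA^T}$. Then $\dot g=Ag+gA^T$ and $\ddot g=A^2g+2AgA^T+g(A^T)^2$, while $\dot g\,g^{-1}\dot g=A^2g+AgA^T+gA^Tg^{-1}Ag+g(A^T)^2$. Taking the trace of $g^{-1}$ times the combination $\tfrac12\ddot g-\tfrac14\dot g\,g^{-1}\dot g$, the cyclic property collapses the expression to $\tfrac12\operatorname{tr}(A^2)+\tfrac12\operatorname{tr}(g^{-1}AgA^T)$. Expanding $M(u)^2$ with $M(u):=A+g(u)A^Tg(u)^{-1}$ shows that this is precisely $\tfrac14\operatorname{tr}(M(u)^2)$, so $\operatorname{Ric}(G(g))=-\tfrac14\operatorname{tr}(M(u)^2)\,du\otimes du$.

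Finally, with $T(u):=e^{uA}$, which commutes with $A$, and $T(u)^T=e^{uA^T}$, which commutes with $A^T$, the orbit relation $g=T g_0 T^T$ yields $gA^Tg^{-1}=T g_0 T^T A^T T^{-T}g_0^{-1}T^{-1}=T(g_0A^Tg_0^{-1})T^{-1}$ and $A=TAT^{-1}$, so $M(u)=T(u)\,M_0\,T(u)^{-1}$ with $M_0=A+g_0A^Tg_0^{-1}$. Consequently $\operatorname{tr}(M(u)^2)=\operatorname{tr}(M_0^2)$ is constant in $u$, producing the claimed formula. The principal obstacle is keeping track of signs and conventions when moving through the Brinkmann--Rosen dictionary; once the identity $\tfrac12\ddot g-\tfrac14\dot g\,g^{-1}\dot g=-L^TpL$ is secured, everything else is a compact algebraic calculation.
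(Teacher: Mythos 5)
Your proposal is correct and follows essentially the same route as the paper: both derive the Rosen-form Ricci formula $\operatorname{Ric}=\operatorname{tr}\bigl(\tfrac14 g^{-1}\dot g\,g^{-1}\dot g-\tfrac12 g^{-1}\ddot g\bigr)\,du\otimes du$ from the $g=L^TL$, Sachs-equation dictionary with the Brinkmann metric, then substitute $g(u)=e^{uA}g_0e^{uA^T}$ and use cyclicity of the trace. Your repackaging of the last step as conjugation-invariance of $M(u)=A+g(u)A^Tg(u)^{-1}$ is just a mild rephrasing of the paper's identity $\operatorname{tr}(g^{-1}AgA^T)=\operatorname{tr}(Ag_0A^Tg_0^{-1})$.
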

\begin{proof}
A Rosen metric of the form $G(g)$ can be described as follows.  There exists a factorization of the form $g(u)$ as $g(u)=L(u)^TL(u)$ where $L(u)$ is a vector of $n$ Jacobi fields.\footnote{Spanning a Lagrangian subspace in the space of all Jacobi fields abreast of the central null geodesic with the symplectic form $\omega(L_1,L_2) = \dot L_1^TL_2 - \dot L_2^TL_1$.}  Thus $L=[L_1\dots L_n]$ where
$$\ddot L + pL =0$$
and $p$ is the (symmetric) tidal curvature.  Introducing the symmetric matrix $S$ associated to the Sachs equation, we are thus led to the system
\begin{align*}
  \dot S + S^2 + p &= 0\\
  S + S^T &=0\\
  \dot L&=SL.
\end{align*}
Then
$$\ddot L + pL= (SL)^{\dot{}} + pL = \dot SL + S^2L + pL = 0.$$
Let $g=L^TL$.  Then
\begin{align*}
  \dot g &= 2 L^TSL\\
  \ddot g &= 4L^TS^2L + 2L^T\dot SL\\
         &= 2L^T(S^2-p)L\\
  \dot gg^{-1}\dot g &= 4L^TS^2L\\
  2^{-1}\dot g g^{-1}\dot g - \ddot g &= 2L^TpL.
\end{align*}
Summarizing, the Ricci curvature can be computed as
$$\operatorname{Ric}(G(g)) = \operatorname{tr}\left(4^{-1}g^{-1}\dot g g^{-1}\dot g - 2^{-1}g^{-1}\ddot g\right)\,du\otimes du.$$

Now assume that
$$g(u) = e^{Au}g_0e^{A^Tu}.$$
We have
  \begin{align*}
    \dot g &= Ag + gA^T\\
    \ddot g &= A^2g + gA^{2T} + 2AgA^T.
  \end{align*}
  Thus
  \begin{align*}
    (g^{-1}\dot g)^2 &= g^{-1}A^2g + A^{2T} + g^{-1}AgA^T + A^Tg^{-1}Ag\\
    g^{-1}\ddot g &= g^{-1}A^2g + A^{2T} + 2g^{-1}AgA^T\\
    (g^{-1}\dot g)^2 - 2g^{-1}\ddot g &= -g^{-1}A^2g - A^{2T}  - 3g^{-1}AgA^T + A^Tg^{-1}Ag.
  \end{align*}
  The trace is
  $$-\operatorname{tr}(A^2) -\operatorname{tr}(A^{2T}) - 2\operatorname{tr}(g^{-1}AgA^T).$$
  Now note that
  $$\operatorname{tr}(g^{-1}AgA^T)=\operatorname{tr}\left(e^{-A^Tu}g_0^{-1}e^{-Au}Ae^{Au}g_0e^{A^Tu}A^T\right)=\operatorname{tr}(Ag_0A^Tg_0^{-1}).$$
  Putting these together, and dividing by four, we have
  $$\operatorname{Ric}(G(g)) = -\frac14\operatorname{tr}\left[(A+g_0A^Tg_0^{-1})^2\right]\,du\otimes du $$
  as claimed.

\end{proof}

\begin{corollary}
  Consider a curve in $X$ given by
  $$g(u) = e^{Af(u)}g_0e^{A^Tf(u)} $$
  where $f(u)$ is a smooth function and $A\in\mathfrak{sl}(n,\mathbb R)$.  Then
  $$\operatorname{Ric}(G(g)) = -\frac14\operatorname{tr}[(A+g_0A^Tg_0^{-1})^2]|f'(u)|^2\,du\otimes du $$
\end{corollary}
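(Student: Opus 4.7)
The plan is to invoke the Ricci formula
\[ \operatorname{Ric}(G(g)) = \operatorname{tr}\left(\tfrac14 g^{-1}\dot g\, g^{-1}\dot g - \tfrac12 g^{-1}\ddot g\right)\,du\otimes du, \]
which the proof of the preceding theorem establishes for any smooth curve $g(u)$ in $X$, and to apply it directly to $g(u) = e^{Af(u)}g_0 e^{A^Tf(u)}$. First I would use the chain rule to write
\[ \dot g = f'(u)(Ag + gA^T), \qquad \ddot g = f''(u)(Ag + gA^T) + (f'(u))^2(A^2 g + 2AgA^T + gA^{2T}), \]
and then form $g^{-1}\dot g = f'(u)(g^{-1}Ag + A^T)$ and $g^{-1}\ddot g = f''(u)(g^{-1}Ag + A^T) + (f'(u))^2(g^{-1}A^2g + 2g^{-1}AgA^T + A^{2T})$.

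The key observation is that the $f''(u)$ contribution is trace-free. Indeed, $\operatorname{tr}(g^{-1}Ag) = \operatorname{tr}(A) = 0$ and $\operatorname{tr}(A^T) = \operatorname{tr}(A) = 0$ because $A \in \mathfrak{sl}(n,\mathbb{R})$. Since $(g^{-1}\dot g)^2$ is quadratic in $f'$ and the trace of $g^{-1}\ddot g$ receives no contribution from the $f''$ piece, the whole calculation of $\operatorname{tr}\bigl((g^{-1}\dot g)^2 - 2g^{-1}\ddot g\bigr)$ reduces to the constant-velocity computation carried out in the preceding theorem, multiplied by the overall factor $(f'(u))^2$. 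I would therefore copy the trace identities used there, namely $\operatorname{tr}(g^{-1}A^2g) = \operatorname{tr}(A^2)$, $\operatorname{tr}(A^{2T}) = \operatorname{tr}(A^2)$, and $\operatorname{tr}(g^{-1}AgA^T) = \operatorname{tr}(Ag_0 A^T g_0^{-1})$, to conclude
\[ \operatorname{tr}\bigl((g^{-1}\dot g)^2 - 2g^{-1}\ddot g\bigr) = -(f'(u))^2\operatorname{tr}\bigl[(A+g_0A^Tg_0^{-1})^2\bigr], \]
and dividing by $4$ yields the corollary.

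The only substantive step is the vanishing of the $f''$ term under the trace; all other pieces are routine and mirror the preceding theorem. Since the hypothesis $A\in\mathfrak{sl}(n,\mathbb{R})$ is exactly what kills the $f''$ correction, I expect no real obstacle, and the argument makes clear that trace-freeness of $A$ is precisely the condition under which a nonaffine time-reparameterization of a one-parameter orbit in $X$ only rescales the Ricci curvature by $(f')^2$.
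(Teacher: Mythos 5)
Your proposal is correct and follows essentially the same route as the paper: both compute $\ddot g$ by the chain rule and observe that the $f''(u)$ term drops out of the trace precisely because $A\in\mathfrak{sl}(n,\mathbb R)$, after which the computation reduces to that of the preceding theorem scaled by $(f'(u))^2$. Your write-up just spells out the trace identities a bit more explicitly than the paper does.
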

\begin{proof}
  In the proof of the theorem, note
  \begin{align*}
    \ddot g &= (A^2g + gA^{2T} + 2AgA^T)|f'(u)|^2 + (Ag + gA^T)f''(u)
  \end{align*}
  Since $A$ is trace-free, the term with $f''(u)$ does not contribute to the Ricci curvature.
\end{proof}

\subsection{The $n=2$ case}
Let $X$ be the space of $3\times 3$ symmetric positive-definite matrices.  On $X$, put the quadratic form $g\mapsto \det g$.  The surface $\det g=1$, equipped with the pullback metric of the quadratic form,  is invariant under the action of $SL(2,\mathbb R)$, and is a model of the hyperbolic plane $\mathbb H$.    
%  Let $L(u) = e^{Au/2}$.  Then $L(u)$ defines a basis of Jacobi fields, and
%  $$L''(u) - 4^{-1}A^2L(u)=0$$
%  so the tidal curvature endomorphism is the symmetric matrix $-8^{-1}A^2$.  The tidal Ricci is the trace of this endomorphism, namely
%  $$A^2 = \begin{bmatrix}\end{bmatrix} $$

% M = Manifold(4, 'M', r'M' ) ; print(M) # default start index is 0 which is what you want
% U_sph  = M.open_subset('U_sph'  ) ; print(U_sph)
% spher.<u,v,x,y>   = U_sph.chart(r'u:(-oo,oo) v:(-oo,+oo) x:(-oo,oo) y:(-oo,oo)')
% f=function('f')
% a,b,c,d=var('a b c d')
% G0 = matrix([[1,0],[0,1]])
% A = matrix([[a,b],[c,-a]])
% G = exp(f(u)*A/2)*G0*exp(f(u)*A.transpose()/2)
% g = M.lorentzian_metric('g')
% g[1,0] =  1
% g[2,2] = -G[0,0]
% g[2,3] = -G[0,1]
% g[3,3] = -G[1,1]
% Rc=g.ricci()
% print(Rc.display())

In the special case of a geodesic in $\mathbb H$, the intersection of a plane with spacelike normal vector through the origin with the hyperboloid, the one-parameter group is conjugate to $\begin{bmatrix}e^u&0\\0&e^{-u}\end{bmatrix}$.  Similarly, in the special case of a horocycle, the curve is unipotent orbit in $\mathbb H$ (the intersection of a null plane, not through the origin).  Let $g_{ij}(u)$ be a geodesic or horocycle in $\mathbb H$ parameterized by arc length.  
\begin{corollary}
  In the case of a geodesic or horocycle, Ricci curvature of $G$ is
  $$\operatorname{Ric}(G(g)) = -\frac12 du\otimes du.$$
\end{corollary}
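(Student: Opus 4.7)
The plan is to invoke the preceding corollary with $f(u) = u$ and then reduce the two cases to canonical one-parameter subgroups of $SL(2,\mathbb{R})$ based at $g_0 = I$. Since the curve is parameterized by arc length, $|f'(u)|^2 = 1$, and the corollary reduces the claim to showing
$$\operatorname{tr}\bigl[(A + g_0 A^T g_0^{-1})^2\bigr] = 2$$
for the canonical choices of $A \in \mathfrak{sl}(2,\mathbb{R})$ in each case.

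Next I would invoke the following invariance: for $B \in SL(2,\mathbb{R})$, the joint substitution $A \mapsto B A B^{-1}$, $g_0 \mapsto B g_0 B^T$ sends $A + g_0 A^T g_0^{-1}$ to $B(A + g_0 A^T g_0^{-1}) B^{-1}$, preserving its squared trace. It also sends the orbit $g(u)$ to $B g(u) B^T$, which is an isometry of $\mathbb{H}$ and therefore preserves arc length. Using this I take $g_0 = I$ in both cases. For a geodesic the generator is diagonalizable over $\mathbb{R}$ and conjugate to a symmetric matrix, so I choose $A = \operatorname{diag}(\lambda, -\lambda)$; then $g_0 A^T g_0^{-1} = A$, so $A + g_0 A^T g_0^{-1} = 2A$ and $\operatorname{tr}[(2A)^2] = 8 \lambda^2$. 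For a horocycle the generator is nilpotent, so I choose $A = \alpha\bigl[\begin{smallmatrix} 0 & 1 \\ 0 & 0 \end{smallmatrix}\bigr]$; then $(A + A^T)^2 = \alpha^2 I$, of trace $2 \alpha^2$.

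The final step is to pin down $\lambda$ and $\alpha$ from the arc-length hypothesis. The hyperbolic metric on the sheet $\{\det g = 1,\ g > 0\}$ is minus the restriction of the polarization of $\det$, which on symmetric matrices $h_i = \bigl[\begin{smallmatrix} a_i & b_i \\ b_i & c_i \end{smallmatrix}\bigr]$ equals $\langle h_1, h_2\rangle = \tfrac12(a_1 c_2 + a_2 c_1) - b_1 b_2$. At $g_0 = I$ the orbit has tangent vector $\dot g(0) = A + A^T$, whose ambient square-norm is $-4\lambda^2$ in the geodesic case and $-\alpha^2$ in the horocycle case. The unit-speed conditions $4\lambda^2 = 1$ and $\alpha^2 = 1$ then force $8 \lambda^2 = 2$ and $2 \alpha^2 = 2$ respectively, giving trace $2$ in both cases. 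Substituting into the preceding corollary yields $\operatorname{Ric}(G(g)) = -\tfrac12\, du \otimes du$.

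The main obstacle is the normalization step: identifying the metric induced on $\{\det g = 1\}$ from the ambient determinant form with the standard hyperbolic metric (with the correct overall sign), and thereby matching the parameter $u$ in the corollary's formula with the geometric arc length on $\mathbb{H}$. Once this is in place, the matrix computations on either side are short, and the coincidence that both cases yield the same trace value is what produces a single uniform formula.
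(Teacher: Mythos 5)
Your proposal is correct and follows essentially the same route as the paper: reduce to the basic geodesic $\operatorname{diag}(e^u,e^{-u})$ and the basic unipotent horocycle orbit via the $SL(2,\mathbb R)$ action, then apply the trace formula $-\tfrac14\operatorname{tr}[(A+g_0A^Tg_0^{-1})^2]$ from the preceding results. The only cosmetic difference is that you justify the reduction by the conjugation equivariance of $A+g_0A^Tg_0^{-1}$ and carry out the unit-speed normalization ($\lambda=1/2$, $\alpha=1$) explicitly, whereas the paper invokes invariance of the Ricci curvature under the $GL(2,\mathbb R)$ change of frame and takes the normalized representatives as given.
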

\begin{proof}
Let $g(u) = \begin{bmatrix}a(u)&b(u)\\b(u)&d(u)\end{bmatrix}$.  Any unit speed geodesic is conjugate by $SL(2,\mathbb R)$ to the basic geodesic $a(u) = e^u,d(u)=e^{-u},b(u)=0$ which is an orbit of the one-parameter group with infinitesimal generator $\begin{bmatrix}1/2&0\\0&-1/2\end{bmatrix}$, for which we have $-4^{-1}\operatorname{tr}( (A+A^T)^2 ) = -1/2$.  On the other hand, the Ricci curvature is invariant under $GL(2,\mathbb R)$, and we are done with this case.  The horocycle case is similar, as any such horocycle is conjugate to the basic one $a(u)=1,b(u)=u,c(u)=1+u^2$, whose plane wave also has Ricci curvature given by the Corollary.
\end{proof}

(In the case of a circle or hypercycle, the result of the previous section implies that the Ricci curvature is a negative constant, but the constant will depend on the particular circle.)

\subsection{Hyperbolic drawings}\label{HyperbolicDrawings}
\begin{definition}
A {\em hyperbolic drawing} shall mean a continuous function $f:\mathbb R\to\mathbb H$ from the reals to the hyperbolic plane, having the property that $f|_{[n,n+1]}$ is a segment of a geodesic or horocycle parameterized by hyperbolic arc length, for all $n\in\mathbb Z$.
\end{definition}
For any hyperbolic drawing, we show how to construct a conformal vacuum plane wave.  The conformal structure of the plane wave that we construct shall have the property that there is a smooth vacuum metric in the conformal class on any given real interval bounded from below.

Actually, our construction is more general than this.  Suppose that $\hat g_{ij}(u)$ be a piecewise smooth curve made of segments that are orbits of one-parameter subgroups of $SL(2,\mathbb R)$ (e.g., geodesic and horocycle segments), each segment of which is parameterized by an affine parameter on $[n,n+1]$, $n\in\mathbb Z$.  On each smooth segment, the associated plane wave $G(\hat g)$ has Ricci curvature $-r_n^2/2\,du\otimes du$ where $r_n$ is a positive constant.  We now reparameterize $\hat g^s_{ij}(u)=\hat g_{ij}(S(u))$ in a canonical way so that the function $u\mapsto g_{ij}(u)$ is $C^\infty$.

Define the bump function
$$\phi(u) = c\begin{cases}\exp(-u^{-2}-(1-u)^{-2})& 0<u<1\\ 0 & \text{otherwise}\end{cases} $$
where the constant $c$ is chosen so that $\int_{-\infty}^\infty \phi(u)\,du=1$.  Let
$$s(u) = \sum_{n=0}^\infty \phi(u+n),\qquad S(u) = \int_0^u s(t)\,dt.$$
Let
$$\hat g^s_{ij}(u) = \hat g_{ij}(S(u)).$$

\begin{corollary}
  $u\mapsto \hat g^s_{ij}(u)$ is smooth, and the plane wave $G(\hat g^s)$ has Ricci curvature
  $$\operatorname{Ric}(G(\hat g^s)) = -\frac{r_n^2}{2}|s(u)|^2\,du\otimes du.$$
\end{corollary}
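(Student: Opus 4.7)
I would organize the proof around two independent tasks: (i) showing $\hat g^s$ is $C^\infty$ (including across every integer, where $\hat g$ itself is only continuous), and (ii) computing the Ricci curvature on each open smoothness interval and then matching across the breakpoints.

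For (i), the bump function $\phi$ is supported in $(0,1)$ with all derivatives vanishing at $0$ and $1$, so every summand $\phi(u+n)$ of $s$ vanishes together with all its derivatives at every integer. Hence $s\in C^\infty(\mathbb R)$ and $s^{(k)}(n)=0$ for every integer $n$ and every $k\ge 0$. Consequently $S=\int_0^u s$ satisfies $S^{(k)}(n)=0$ for all $k\ge 1$ at every integer. Inside each open interval $(n,n+1)$, $\hat g$ is smooth (a one-parameter orbit), so $\hat g^s=\hat g\circ S$ is smooth there. At an integer $n$, Fa\`a di Bruno's formula expresses the $k$-th derivative of $\hat g^s$ at $n$ as a polynomial in the quantities $S^{(j)}(n)$, $1\le j\le k$, with coefficients that are one-sided derivatives of $\hat g$ at $S(n)$; since every $S^{(j)}(n)=0$, these derivatives all vanish, and the left and right limits agree. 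Thus $\hat g^s$ is $C^\infty$ across every integer (with every derivative equal to zero there).

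For (ii), fix $u\in(n,n+1)$ and let $m=\lfloor S(u)\rfloor$; on the corresponding smooth segment $\hat g$ is an orbit
\[\hat g(t)=\exp\bigl(A_m(t-t_m)\bigr)\,g_0^{(m)}\,\exp\bigl(A_m^T(t-t_m)\bigr)\]
for some $A_m\in\mathfrak{sl}(2,\mathbb R)$ and base point $g_0^{(m)}$; by hypothesis the plane wave $G(\hat g)$ has Ricci curvature $-\tfrac12 r_m^2\,dt\otimes dt$, which by the preceding theorem means $\tfrac14\operatorname{tr}\bigl[(A_m+g_0^{(m)}A_m^T(g_0^{(m)})^{-1})^2\bigr]=\tfrac12 r_m^2$. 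Now
\[\hat g^s(u)=\exp\bigl(A_m(S(u)-t_m)\bigr)\,g_0^{(m)}\,\exp\bigl(A_m^T(S(u)-t_m)\bigr),\]
which is exactly the ansatz of the previous corollary with $f(u)=S(u)-t_m$. Applying that corollary gives
\[\operatorname{Ric}\bigl(G(\hat g^s)\bigr)=-\tfrac14\operatorname{tr}\bigl[(A_m+g_0^{(m)}A_m^T(g_0^{(m)})^{-1})^2\bigr]\,|f'(u)|^2\,du\otimes du=-\tfrac12 r_m^2\,|s(u)|^2\,du\otimes du\]
on this interval. (The constraint $A_m\in\mathfrak{sl}(2,\mathbb R)$ is exactly what allowed the $f''$ term to drop out in the previous corollary.)

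Finally, the formula extends continuously (in fact smoothly) across integer points: we have already shown $\hat g^s\in C^\infty$, so $\operatorname{Ric}(G(\hat g^s))$ is a smooth symmetric tensor on all of $\mathbb R\times\mathbb X$; at an integer $n$, $s(n)=0$, so the right-hand side vanishes from both sides, forcing the Ricci to vanish at $n$ in agreement with the stated formula (understood with $r_n$ being the label of the adjacent smooth segment, both giving zero at the breakpoint). The only nontrivial step is the smoothness across integers in (i); once the flat-all-derivatives property of $S$ at integers is in hand, the Ricci formula is a direct appeal to the previous corollary, segment by segment.
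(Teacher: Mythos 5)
Your proposal is correct and takes essentially the same approach as the paper: the paper's (very terse) proof declares the smoothness of $\hat g^s$ obvious and reduces the curvature statement to the preceding corollary (checked segment by segment, after conjugating to the basic case), which is exactly what you do, with the only added content being that you spell out the flatness of $S$ at the integers via the Fa\`a di Bruno/one-sided-limit argument that the paper leaves implicit. No gaps.
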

\begin{proof}
  The first assertion is obvious.  For the second, again it is sufficient to check it for the basic geodesic.

  %% Horocycle case
% M = Manifold(4, 'M', r'M' ) ; print(M) # default start index is 0 which is what you want
% U_sph  = M.open_subset('U_sph'  ) ; print(U_sph)
% spher.<u,v,x,y>   = U_sph.chart(r'u:(-oo,oo) v:(-oo,+oo) x:(-oo,oo) y:(-oo,oo)')
% f=function('f')
% g = M.lorentzian_metric('g')
% g[1,0] =  1
% g[2,2] = -1
% g[2,3] = -f(u)
% g[3,3] = -1-f(u)^2
% Rc=g.ricci()
% print(Rc.display())

  %% Geodesic case
% M = Manifold(4, 'M', r'M' ) ; print(M) # default start index is 0 which is what you want
% U_sph  = M.open_subset('U_sph'  ) ; print(U_sph)
% spher.<u,v,x,y>   = U_sph.chart(r'u:(-oo,oo) v:(-oo,+oo) x:(-oo,oo) y:(-oo,oo)')
% f=function('f')
% a,b,c,d=var('a b c d')
% G0 = matrix([[exp(f(u)),0],[0,exp(-f(u))]])
% A = matrix([[a,b],[c,d]])
% G = A*G0*A.transpose()
% g = M.lorentzian_metric('g')
% g[1,0] =  1
% g[2,2] = -G[0,0]
% g[2,3] = -G[0,1]
% g[3,3] = -G[1,1]
% Rc=g.ricci()
% print(Rc.display())

\end{proof}

\begin{lemma}
  Let $G(g)$ be a plane-wave of dimension $n+2$.  Then $m^{-2}(u)G(g)$ is vacuum for any smooth positive function $m$ such that
  $$nm''(u) + p(u)m(u) = 0$$
  where $p(u)$ is the tidal Ricci curvature.    
\end{lemma}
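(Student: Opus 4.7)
The plan is to place the plane wave in Rosen form $G = 2\,du\,dv - dx^Tg(u)dx$ and apply the standard formula for the conformal transformation of the Ricci tensor to $\widehat G = e^{2\varphi}G$ with $\varphi = -\log m$. As computed just before the lemma, the Ricci tensor of $G(g)$ has only one non-trivial component, living entirely in the $du\otimes du$ direction:
$$\operatorname{Ric}(G) = p(u)\,du\otimes du, \qquad p(u) = \operatorname{tr}\bigl(4^{-1}g^{-1}\dot g\,g^{-1}\dot g - 2^{-1}g^{-1}\ddot g\bigr).$$
The goal is to show that the conformal correction is also a multiple of $du\otimes du$, of the form $-n(\varphi'' - (\varphi')^2)\,du\otimes du$, so that the Ricci-flatness condition reduces to a scalar ODE in $u$ whose standard rewriting is $nm'' + pm = 0$.

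The key observations that make the conformal calculation essentially free are two identities for the Rosen metric:
(i) since $G$ has no $du\otimes du$ component, the inverse metric satisfies $G^{uu}=0$; and
(ii) the Christoffel symbols $\Gamma^u_{ab}$ vanish identically, because the only terms in the symbol formula that can contribute are $G^{uv}$ times derivatives of $G_{bv}=\delta^u_b$ or $\partial_v G_{ab}$, both of which are zero.
Since $\varphi$ depends only on $u$, (i) gives $|\nabla \varphi|^2 = G^{uu}(\varphi')^2 = 0$ and $\Delta\varphi = G^{uu}\varphi'' = 0$, while (ii) reduces the covariant Hessian of $\varphi$ to its coordinate Hessian:
$$\nabla_a\nabla_b\varphi = \varphi''(u)\,(du)_a(du)_b, \qquad (\nabla_a\varphi)(\nabla_b\varphi) = (\varphi'(u))^2\,(du)_a(du)_b.$$

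Substituting into the standard identity
$$\widehat R_{ab} = R_{ab} - (N-2)\bigl(\nabla_a\nabla_b\varphi - \nabla_a\varphi\,\nabla_b\varphi\bigr) - \bigl(\Delta\varphi + (N-2)|\nabla\varphi|^2\bigr)G_{ab}$$
with $N = n+2$, every term proportional to $G_{ab}$ drops out, leaving
$$\widehat R_{ab} = \bigl(p(u) - n(\varphi'' - (\varphi')^2)\bigr)(du)_a(du)_b.$$
With $\varphi = -\log m$ one computes $\varphi' = -m'/m$ and $\varphi'' - (\varphi')^2 = -m''/m$, so vanishing of $\widehat R_{ab}$ is equivalent to $p + nm''/m = 0$, i.e., to the stated ODE $nm'' + p\,m = 0$.

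The only real obstacle is conceptual rather than computational: one must verify the two simplifying identities $G^{uu}=0$ and $\Gamma^u_{ab}=0$ that are peculiar to the Rosen form and keep careful track of the sign conventions (in particular the choice $e^{2\varphi}=m^{-2}$, so $\varphi=-\log m$). Once these are established the conformal Ricci formula collapses to a single scalar equation and the lemma follows by direct substitution.
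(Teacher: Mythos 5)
Your proof is correct, but it follows a different route from the paper's. The paper stays entirely inside its own Rosen formalism: it observes that the rescaled metric is again a Rosen metric, $m^{-2}G(g) = 2\,dU\,dv - \widetilde g\,(dx,dx)$ with $dU = du/m^{2}$ and $\widetilde g = m^{-2}g$, then recomputes $\frac{d\widetilde g}{dU}\widetilde g^{-1}\frac{d\widetilde g}{dU} - 2\frac{d^{2}\widetilde g}{dU^{2}}$ by the chain rule and takes the trace using the Ricci formula $\operatorname{Ric} = \operatorname{tr}\bigl(4^{-1}\widetilde g^{-1}\widetilde g'\widetilde g^{-1}\widetilde g' - 2^{-1}\widetilde g^{-1}\widetilde g''\bigr)\,dU\otimes dU$ derived just before the lemma; the surviving term is $4m\ddot m\,g$ added to $m^{2}(\dot g g^{-1}\dot g - 2\ddot g)$, whose trace vanishes exactly when $n m'' + p m = 0$. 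You instead invoke the general conformal transformation law for the Ricci tensor with $\varphi = -\log m$ and use the special null structure of Rosen coordinates ($G^{uu}=0$, $\Gamma^{u}_{ab}=0$, $\varphi$ a function of $u$ only) to show the correction collapses to $-n(\varphi''-(\varphi')^{2})\,du\otimes du$ while all terms proportional to $G_{ab}$ drop out; your identities $G^{uu}=0$ and $\Gamma^{u}_{ab}=0$ are both verified easily, and your sign bookkeeping ($\varphi''-(\varphi')^{2} = -m''/m$, $N-2=n$) is right, as is your identification of $\operatorname{Ric}(G)$ with $p\,du\otimes du$ in the paper's convention. What each approach buys: yours is the more standard and transparent argument, making clear why only the $du\otimes du$ component is touched and requiring no change of the $u$-parameter; the paper's avoids quoting the conformal Ricci identity altogether, reusing its explicit Rosen Ricci formula and the fact that rescaling by a function of $u$ preserves the Rosen class, which keeps the computation self-contained and in the same form used throughout the extended example.
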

\begin{proof}
  Let $\widetilde g = m^{-2}g$, $dU = du/m^2$.  Then
  $$\frac{d\widetilde g}{dU}\widetilde g^{-1}\frac{d\widetilde g}{dU} = 4\dot m^2 + m^2 \dot g g^{-1}\dot g - 4m\dot m\dot g $$
  $$\frac{d^2\widetilde g}{dU^2} = 2\dot m^2 g - 2\ddot m mg - 2\dot mm\dot g + m^2\ddot g.$$
  Hence
  $$\frac{d\widetilde g}{dU}\widetilde g^{-1}\frac{d\widetilde g}{dU}  - 2 \frac{d^2\widetilde g}{dU^2} = (m^2\dot gg^{-1}\dot g - 2m^2\ddot g) + 4\ddot mmg.$$
  Taking the trace gives $m(4pm + 4n\ddot m)$, which vanishes if and only if the metric
  $$m^{-2}G(g) = 2\,dU\,dv - m(u)^{-2}g(u)(dx,dx) $$
  is vacuum.
\end{proof}

\begin{theorem}\label{HyperbolicDrawingTheorem}
  Let $g(u)$ be a hyperbolic drawing for $u\in\mathbb U$ any interval bounded below.  Then there exists a reparameterization $g\circ s$ of $g$ such that $g\circ s$ is smooth, and a smooth positive function $m$ such that the plane wave
  $$G=2\,dU\,dv - m(u)^{-2}g_{ij}(s(u))\,dx^i\,dx^j$$
  is vacuum, where $dU=du/m(u)^2$.
\end{theorem}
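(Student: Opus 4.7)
The plan is to combine the smoothing reparameterization from the preceding corollary with the conformal rescaling lemma, reducing the whole theorem to solving a single second-order linear ODE with a favourable sign of the potential.

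First, I would invoke the reparameterization: applying $S(u) = \int_0^u s(t)\,dt$ to the piecewise curve $g$ yields $\hat{g}^s(u) = g(S(u))$, whose plane wave $G(\hat{g}^s)$ has Ricci curvature $p(u)\,du\otimes du$, where on the $n$-th smooth segment $p(u) = -\tfrac12 r_n^2 |s(u)|^2$. The key observation is twofold: (i) $p(u) \le 0$ pointwise, and (ii) $p$ is globally smooth on $\mathbb{U}$. For (ii), even though $r_n$ may jump between adjacent segments, the function $s$ and all of its derivatives vanish at every integer (the transition points) because $s$ is built from bump functions supported in the open unit interval, so the product $r_n^2|s(u)|^2$ extends smoothly across each joining point. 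Combined with smoothness of $\hat{g}^s$ on each open segment, this gives global smoothness of $p$.

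Next, by the conformal vacuum lemma above, the theorem reduces to producing a smooth positive function $m:\mathbb{U}\to\mathbb{R}_{>0}$ solving
\[
n\,m''(u) + p(u)\,m(u) = 0.
\]
Since $p$ is smooth and the equation is linear of second order, a unique smooth solution exists on all of $\mathbb{U}$ for any prescribed initial data. Choose any basepoint $u_0\in\mathbb{U}$ and impose $m(u_0)=1$, $m'(u_0)=0$.

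The main step, and the only real obstacle, is establishing that this solution is strictly positive on the entire interval. Let $I\subseteq\mathbb{U}$ be the maximal connected open subinterval containing $u_0$ on which $m>0$. On $I$, the ODE rearranges to $m''(u) = -p(u)m(u)/n \ge 0$ because $p\le 0$ and $m>0$, so $m$ is convex on $I$. Convexity together with $m(u_0)=1$ and $m'(u_0)=0$ forces $m(u)\ge 1$ throughout $I$. Consequently $m$ cannot approach zero at any boundary point of $I$ interior to $\mathbb{U}$, which forces $I=\mathbb{U}$. Hence $m\ge 1$ on $\mathbb{U}$, and in particular $m$ is smooth and positive.

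Finally, substituting this $m$ into the conformal rescaling lemma with $dU = du/m(u)^2$ delivers the vacuum metric
\[
G = 2\,dU\,dv - m(u)^{-2} g_{ij}(s(u))\,dx^i\,dx^j,
\]
as required. The decisive ingredient is that geodesics and horocycles in $\mathbb{H}$ both produce non-positive tidal Ricci curvature, which is precisely the sign condition that turns the conformal-factor ODE into a Sturm equation with no conjugate points, allowing a globally positive solution.
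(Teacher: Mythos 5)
Your proof is correct, and it reaches the conclusion by the same overall reduction as the paper (smooth reparameterization via the bump-function corollary, then the conformal lemma turning the vacuum condition into the linear ODE $n m''+pm=0$ with $n=2$), but the decisive step — global positivity of the conformal factor — is handled by a genuinely different and more elementary argument. The paper substitutes $m=e^{y}$, passes to the Riccati equation $z'+z^2=\tfrac{r_n^2}{4}\phi(u+n)^2$ with $z(0)=0$, and rules out $z$ ever becoming negative by a contradiction localized at the junction integers, using a Picard iteration together with the explicit estimate $\tfrac{r_0}{2}\phi(u)\le 2u^{-3}-2(1-u)^{-3}$ near the edge of the bump; the delicate point there is precisely that the right-hand side vanishes to infinite order at the junctions, so the naive sign argument for $z$ fails there. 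You instead observe that $p\le 0$ (since $\operatorname{Ric}=-\tfrac{r_n^2}{2}|s(u)|^2\,du\otimes du$) and that $p$ is globally smooth (infinite-order vanishing of $s$ at the integers, or simply because it is the Ricci curvature of the smooth metric $G(\hat g^s)$), so the solution of the linear ODE with data $m(u_0)=1$, $m'(u_0)=0$ satisfies $m''=(-p/n)m\ge 0$ wherever $m>0$; convexity plus the critical point at $u_0$ gives $m\ge 1$ on the maximal positivity interval, which therefore cannot terminate inside $\mathbb U$. This Sturm-type argument buys several things: it avoids the bump-function estimate entirely, it works for any non-positive smooth $p$ (hence for any dimension and any curve made of one-parameter-subgroup orbits, not just the specific reparameterization chosen), it yields the quantitative bound $m\ge 1$, and it is two-sided, so the hypothesis that $\mathbb U$ be bounded below is not even needed for the positivity step. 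What the paper's route buys in exchange is explicit control of $z=(\log m)'$ by $\tfrac{r_0}{2}\phi$ near the junctions, i.e., more detailed asymptotic information about the conformal factor there, which your argument does not provide but which the theorem does not require.
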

\begin{proof}

% M = Manifold(4, 'M', r'M' ) ; print(M) # default start index is 0 which is what you want
% U_sph  = M.open_subset('U_sph'  ) ; print(U_sph)
% spher.<u,v,x,y>   = U_sph.chart(r'u:(-oo,oo) v:(-oo,+oo) x:(-oo,oo) y:(-oo,oo)')
% f=function('f')
% a,b,c=function('a b c')
% m=function('m')
% G = matrix([[a(u),b(u)],[b(u),c(u)]])
% g = M.lorentzian_metric('g')
% g[1,0] =  1/m(u)^2
% g[2,2] = -G[0,0]/m(u)^2
% g[2,3] = -G[0,1]/m(u)^2
% g[3,3] = -G[1,1]/m(u)^2
% Rc=g.ricci()
% print(Rc.display())

  We are in $n+2=4$ dimensions, so with the $n=2$ case of the lemma, the local conformal factor giving a vacuum metric satisfies
  $$2m''(u) + p(u)m(u) = 0$$
  where $p(u)$ is the tidal Ricci curvature.  On a segment $[n,n+1]$, this is
  $$m''(u) = \frac{r_n^2}{4}\phi(u+n)^2m(u).$$
  We want to show that this differential equation admits a global solution.  Let $m(u) = e^{y(u)}$.  Then this is
  $$y''(u) + y'(u)^2 = \frac{r_n^2}{4}\phi(u+n)^2.$$
  With $z=y'$, this is
  $$z'(u) + z(u)^2 = \frac{r_n^2}{4}\phi(u+n)^2.$$
  We shall take the initial condition $z(0)=0$.  Because the right-hand side is bounded on compact subsets, it is sufficient to show that the solution $z(u)$ can never become negative.  Let $u_0=\inf\{u | z(u) < 0\}$.  If $u_0$ is finite, then clearly $u_0\in\mathbb N$, because if $z(u_0)=0$ and $u_0$ is not an integer, then $z'(u_0)>0$ and the solution is positive to the right of $u_0$ (contrary to the definition of $u_0$).  We shall assume that $u_0$ is finite and an integer (for contradiction) and can assume without loss of generality that $u_0=0$.

  So we have reduced the problem to the analysis of the differential equation
  $$z'(u) + z(u)^2 = \frac{r_0^2}{4}\phi(u)^2,\quad z(0) = 0 $$
  in an immediate neighborhood to the right of $u=0$.

  The unique solution in a neighborhood of $u=0$ satisfies the integral equation
  $$z(u) = \int_0^u \left(\frac{r_0^2}{4}\phi(t)^2 - z(t)^2\right)\,dt,$$
  and can be obtained by the Picard iteration $z_0=0$,
  $$z_{n+1}(u) = \int_0^u \left(\frac{r_0^2}{4}\phi(t)^2 - z_n(t)^2\right)\,dt.$$
  We claim that there exists an $\epsilon>0$ such that for all $u\in (0,\epsilon)$:
  $$0 \le z_n(u) \le \frac{r_0}{2}\phi(u).$$

  Moreover, it is clearly sufficient to establish just the upper bound.  By bounding the right-hand side of the Picard iteration, it is sufficient to show that
  \begin{equation}\label{phiestimate}
    \int_0^u \frac{r_0^2}{4}\phi(t)^2\,dt \le \frac{r_0}{2}\phi(u)
  \end{equation}
  for $u\in (0,\epsilon)$ (for some $\epsilon$).  Differentiating with respect to $u$ and using $\phi(0)=0$, this is equivalent to the estimate
  $$\frac{r_0^2}{4}\phi(u)^2 \le \frac{r_0}{2}\phi'(u).$$
  We have
  $$\phi'(u) = \phi(u)\left(2u^{-3} - 2(1-u)^{-3}\right).$$
  So the estimate we need becomes
  $$\frac{r_0}{2}\phi(u)\le 2u^{-3}-2(1-u)^{-3}.$$
  Since the right-hand side tends to $+\infty$ as $u\to 0^+$, while the left-hand side tends to zero, there exists an $\epsilon$ such that \eqref{phiestimate} holds for $u\in (0,\epsilon)$.
  $$0 \le z_n(u) \le \frac{r_0}{2}\phi(u).$$

  So, summarizing, we have shown that $z_{n+1}(u)\ge 0, u\in(0,\epsilon)$, so $u_0\ge\epsilon>0$, which is the desired contradiction and we are done.
\end{proof}

  \section*{Acknowledgements}
  We thank Redstart Roasters and Clemmie Murdock III for their hospitality during the preparation of this manuscript, Dingfang Yang for many questions during seminars, and Aristotelis Panagiotopoulos for useful discussions about plane waves.  We also thank Maciej Dunajski for pointing out a mistake in the first draft of this article.  The second author thanks Jaros\l aw Kopinski, Pawe\l\  Nurowski, and Katja Sagerschnig for the invitation to talk at Grieg Meets Chopin in Warsaw in 2023, and Maciej Dunajski for organizing the recent program on conformal geodesics in Banff where some of these ideas were presented.

\bibliographystyle{plain} % We choose the "plain" reference style
\bibliography{planewaves} % Entries are in the refs.bib file

\end{document}